\documentclass{article}
\usepackage[utf8]{inputenc}

    \usepackage[dvipsnames]{xcolor}
    \usepackage[T1]{fontenc}
	\usepackage[utf8]{inputenc}
	\usepackage[english]{babel}
	\usepackage{amsfonts}
	\usepackage{amssymb}
	    \usepackage{amsmath}
	
	\usepackage{enumerate}
	\usepackage{bbm}
	\usepackage{mathtools}
	\usepackage{subfig}
	\usepackage{booktabs}	
	\usepackage{setspace}
	\usepackage{graphicx}
	\usepackage{fullpage}
	
	\usepackage{soul}
	\usepackage{xspace}
	\usepackage{csquotes}
	\usepackage[margin = 1in]{geometry}
	\usepackage{placeins}
	\usepackage[footfont=normalsize,font=normalsize]{floatrow}
	\usepackage{lscape}
	\usepackage{longtable}
	\usepackage{mathabx}
    \usepackage{accents}
    \usepackage{float}

    \usepackage{tikz}
    \usetikzlibrary{calc,intersections}

	\usepackage[longnamesfirst]{natbib}
	\usepackage{bibunits}
	\defaultbibliographystyle{aer}

	\usepackage{hyperref}
	\hypersetup{
    colorlinks=true,
    linkcolor=Red,
    filecolor=magenta,      
    urlcolor=blue,
    citecolor = Blue
}
	\usepackage{cleveref}

    \usepackage{rotating}
	\usepackage[framemethod=tikz]{mdframed}
	\usepackage{todonotes}
	\presetkeys{todonotes}{color=black!5,inline}{} %

    \usepackage{tcolorbox}
    \newtcbox{\feedback}{nobeforeafter,colframe=black,colback=white,boxrule=0.5pt,arc=2pt,
      boxsep=0pt,left=2pt,right=2pt,top=2pt,bottom=2pt,tcbox raise base}
      
    \usepackage{amsthm}

    \newtheorem{asm}{Assumption}
    
    \newtheorem{prop}{Proposition}[section]

    \newtheorem{lem}{Lemma}[section]
    \newtheorem{cor}{Corollary}[section]
    \theoremstyle{definition}
    \newtheorem{rem}{Remark}
    
    \newtheorem{example}{Example}

    \crefname{prop}{Proposition}{Propositions}
\crefname{assumption}{assumption}{assumptions}
\crefname{asm}{assumption}{assumptions}
\crefname{prop}{proposition}{propositions}
\crefname{cor}{Corollary}{Corollaries}
\crefname{lem}{Lemma}{Lemmas}
\crefname{rem}{Remark}{Remarks}

\usepackage{array}
\newcolumntype{L}[1]{>{\raggedright\let\newline\\\arraybackslash}m{#1}}
\newcolumntype{C}[1]{>{\centering\let\newline\\\arraybackslash\hspace{0pt}}m{#1}}
\newcolumntype{R}[1]{>{\raggedleft\let\newline\\\arraybackslash\hspace{0pt}}m{#1}}

\usepackage{ragged2e}
\newlength\ubwidth

\newcommand\numberthis{\addtocounter{equation}{1}\tag{\theequation}}

\let\emptyset\varnothing

\newcommand{\indep}{\perp \!\!\! \perp}

	\renewcommand{\P}{\mathop{}\!\textnormal{P}}
	\newcommand{\E}{\operatorname{E}}
	
	\newcommand{\Cov}{\operatorname{Cov}}
	
	\newcommand{\Var}{\operatorname{Var}}

	\newcommand{\parens}[1]{\left(#1\right)}

	\DeclareMathOperator{\tr}{tr}
	\newcommand{\reals}{\mathbb{R}}

    \newcommand{\VhatM}{\widehat{V}_M}
    \newcommand{\VhatQ}{\widehat{V}_Q}
    \newcommand{\VhatCC}{\widehat{V}_{\mathrm{CC}}}
    \newcommand{\VhatKwithin}{\widehat{V}^{\tilde{\mathcal{K}}}_{\text{within}}}
    \newcommand{\VhatCCbold}{\widehat{\mathbf{V}}_{\mathrm{CC}}}

    \newcommand{\VhatAS}{\widehat{V}_{\mathrm{AS}}}
    \newcommand{\VhatAScollapsed}{\widehat{V}_{\mathrm{AS}}^{\mathrm{collapsed}}}
    \newcommand{\VhatNeyman}{\widehat{V}_{\mathrm{Neyman}}}
    \newcommand{\VhatNeymanCl}{\widehat{V}_{\mathrm{Neyman,cl}}}
    \newcommand{\VhatWithin}{\widehat{V}_{\mathrm{within}}}
    \newcommand{\DeltaCC}{\Delta_{\mathrm{CC}}}
    \newcommand{\DeltaAS}{\Delta_{\mathrm{AS}}}

\newcommand{\bernoulli}[1]{\mathrm{Bernoulli}\parens{#1}}

\newcommand{\twovec}[2]{\parens{\begin{array}{c} #1 \\ #2 \end{array}}}

\newcommand{\twobytwomat}[4]{\parens{\begin{array}{cc} #1 & #2 \\ #3 & #4 \end{array}}}

\newcommand{\VhatHAC}{\widehat{V}_{\mathrm{HAC}}}
\newcommand{\KHAC}[1][]{K_{\mathrm{HAC}#1}}
\newcommand{\AHAC}{A_{\mathrm{HAC}}}

\newcommand{\VhatHACPD}{\widehat{V}_{\mathrm{HACPD}}}
\newcommand{\KHACPD}[1][]{K_{\mathrm{HACPD}#1}}

\title{Simple Standard Errors for \\ Complicated (Quasi-)Experimental Designs\thanks{Klosin: UC Davis (\href{mailto:sylviaklosin@gmail.com}{sklosin@ucdavis.edu}). Roth: Brown
		University (\href{mailto:jonathanroth@brown.edu}{jonathanroth@brown.edu}). We are grateful to Haoge Chang, Kevin Chen, Chris Harshaw, Peter Hull,  Lihua Lei, Michael Pollmann, Stefan Wager, and seminar participants at BI Norwegian Business School, Columbia, and NHH for helpful conversations and suggestions. Roth gratefully acknowledges funding from the NIH under grant NIGMS 1R35GM155224 and from the Alfred P. Sloan Foundation. Various versions of Codex provided superb research assistance with coding and proofs. All remaining errors are our own.}}

\author{Sylvia Klosin \and Jonathan Roth}
\begin{document}
	\maketitle 
	
	\begin{abstract}
		We develop standard errors for estimates of causal effects in settings where the distribution of treatment exposures is known but potentially exhibits complicated dependence patterns across units. Such dependence arises, for example, in settings with spillovers/interference across units or when the treatment assignment is spatially correlated. Our variance estimator generalizes several properties of the Neyman variance estimator for completely randomized experiments to general designs: it is unbiased for an upper bound on the variance from the design-based perspective; the conservativeness of this upper bound depends on the heterogeneity of treatment effects; and it is also valid under certain forms of super-population sampling of the potential outcomes. Our estimator reduces to standard approaches in simple designs, such as the Neyman variance estimator in completely randomized experiments and the cluster-robust variance estimator in cluster-randomized experiments.  We show that the proposed estimator is admissible among the class of quadratic-form variance estimators, and derive asymptotic conditions under which it is consistent for an upper bound on the asymptotic variance of the treatment effect estimator. In simulations and applications based on two empirical studies, one with peer effects and another with a spatially correlated treatment, we find that the proposed variance estimator performs well, and in some cases yields standard errors that are 2-3 times shorter than those of existing alternatives.

	\end{abstract}
	
	\section{Introduction}
	Social scientists often study settings where the distribution of treatment assignments is assumed to be known, either because it was explicitly randomized or because it was determined by some quasi-random process resembling randomization. In such cases it is natural to leverage knowledge of the assignment scheme to obtain valid inference on causal effects without imposing strong assumptions about how the potential outcomes were generated. This motivates so-called \emph{design-based} inference, which conditions on the potential outcomes and leverages only the randomness of treatment assignment, and therefore is valid regardless of how the potential outcomes were generated. One approach to design-based inference is to conduct Fisher randomization tests (a.k.a. permutation tests), which are valid under arbitrary treatment assignment schemes for testing sharp null hypotheses (e.g. no effects of treatment for any unit). Yet, researchers often wish to obtain standard errors that allow them to test hypotheses and form confidence intervals for (possibly-weighted) average treatment effects. A challenge in the design-based setting is that the variance of the treatment effect estimator typically depends on products of potential outcomes that are not simultaneously observable, and therefore it is not possible to unbiasedly (or consistently) estimate its variance. It is often possible, however, to obtain valid standard errors that estimate an upper bound on the variance of the estimator. 
	
	For simple treatment assignment schemes, there are well-accepted approaches for obtaining valid design-based standard errors. In a completely randomized experiment, for example, the most common approach to obtain standard errors for the difference-in-means estimator is to use the \citet{neyman_application_1923} variance estimator, $\frac{1}{N_1} \hat{s}_1^2 + \frac{1}{N_0} \hat{s}_0^2$, where $N_d$ denotes the number of observations in arm $d$ and $\hat{s}_d^2$ is the sample variance of the outcome in arm $d$. This is equivalent to the heteroskedasticity-robust (HC2) variance estimator applied to a regression of the outcome on a treatment dummy. The Neyman variance estimator has three well-known properties (see, e.g., \citet{imbens_causal_2015}, for a textbook treatment). First, from the design-based perspective, it is unbiased for an upper bound on the estimator’s variance. Second, the conservativeness of the upper bound is proportional to the variance of treatment effects, and hence the Neyman variance is exact under treatment effect homogeneity. And third, the Neyman variance corresponds to the standard variance estimator if, in contrast to the design-based paradigm, one imagines the units as having been sampled iid from an infinite-superpopulation. The Neyman variance is thus valid from both the design-based and super-population sampling perspectives. This approach to inference extends naturally to a cluster-randomized experiment, where it is common to use the Neyman variance applied to cluster-level sums (or equivalently, cluster-robust standard errors). The resulting variance estimator is valid from the design-based perspective, with its conservativeness depending on the heterogeneity of treatment effects across clusters; it is also valid from the sampling perspective if we imagine the clusters as drawn independently from a super-population of clusters. 
	
	However, in many social-science applications, treatment assignments are correlated across units in more complex ways than in these simple designs. One example where treatment assignment exhibits complex dependence patterns is when treatments occur at a point in space. For example, the researcher may be interested in the effect of having a nearby grocery store on foot traffic to restaurants \citep{pollmann_causal_2026}, or the impact of having a nearby factory on economic outcomes \citep{dell_development_2020}. In such settings, nearby locations will tend to have correlated treatment assignments. A second example is settings in which there are spillovers across units. For example, assigning one student to an anti-bullying intervention as in \citet{paluck_changing_2016} may have impacts on that student's friends. In such settings, it is common to specify an exposure mapping that formalizes how one unit's outcome depends on other units' treatments \citep{aronow_estimating_2017}; e.g. we may specify that $i$'s outcome depends both on whether $i$ is treated and on whether any of $i$'s friends are treated. Even if the original treatment is assigned independently across units, the distribution of treatment \emph{exposures} will then exhibit dependence among units who are connected in the network. Correlated treatment assignments arise in a variety of other settings, such as when the treatment is determined by a centralized assignment mechanism (e.g. assignment of students to schools), when experiments are conducted in two-sided markets, or when an experiment uses a complex assignment mechanism involving re-randomization, stratification, or sequential treatment assignment. More generally, researchers frequently conduct permutation tests that simulate counterfactual treatment assignments by permuting observed treatments along a particular dimension, inducing a known but dependent assignment scheme. 
    
    This raises the question: can we construct a variance estimator for settings with complicated assignment schemes that has similar properties to the Neyman variance estimator in completely randomized experiments? In this paper, we propose a generalization of the Neyman variance estimator for completely randomized experiments that is applicable under complex assignment schemes. The estimator reduces to the Neyman variance estimator when applied to a completely randomized experiment and to a cluster-robust variance estimator when applied to a cluster-randomized experiment. Moreover, the estimator possesses similar properties to the Neyman variance estimator under more complex designs: it is valid from the design-based perspective, its conservativeness depends on treatment-effect heterogeneity, and it is also valid under certain superpopulation sampling schemes. 
    
	We now describe the setting and proposed variance estimator in more detail. At a high level, we suppose that there is a finite population of $K$ units each with an attribute $Y_k$, and we are interested in estimating the average $\tau = \frac{1}{K} \sum_k Y_k$ from a sample wherein only the attributes for individuals with $W_k = 1$ are observed. As described in \Cref{sec:setup}, this deceptively simple setup nests a large number of causal settings by setting $k$ to index the relevant potential outcomes, setting $Y_k$ to be an appropriate transformation of the $k$th potential outcome, and setting $W_k$ to be an indicator for whether the relevant potential outcome is observed. For example, if the treatment exposure $D_i$ can take values $\{0,\ldots, Q\}$, then we can let $k=(i,d)$ index the $K=N(Q+1)$ combinations of $(i,d)$. Setting $Y_k = \omega_{i,d} Y_i(d)$ for some chosen weights $\omega_{i,d}$ then yields the causal contrast $\tau = \frac{1}{K} \sum_i \sum_d \omega_{i,d} Y_i(d)$. We then observe $Y_k$ if and only if $W_k = 1\{D_i =d\}$ is equal to 1. The framework thus nests estimation of average causal contrasts in settings where the treatment exposure takes on a finite number of values. Beyond causal settings, this set-up also nests the problem of learning about the mean of an attribute in a finite population under a complicated survey sampling scheme. 
	
	Within this framework, we focus on the problem of estimating the variance of the Horvitz-Thompson (inverse-probability weighting) estimator, $\hat\tau = \frac{1}{K} \sum_k \tfrac{W_k}{\pi_k} Y_k$, where $\pi_k = P(W_k =1)$. (As we describe below, a similar approach extends to other estimators that have an asymptotic linear representation, such as those obtained by linear regressions.) We adopt a design-based approach wherein the $Y_k$ are viewed as fixed and the variation in the data comes from the stochastic nature of the assignment $W_k$. We refer to the distribution of $W_k$ as the \emph{assignment} or \emph{design} distribution, and we assume that it is known, so that $\pi_k$ and $\pi_{kk'} = P(W_k=1, W_{k'}=1)$ are both known. As first shown in \citet{horvitz_generalization_1952}, the variance of $\hat\tau$ over the design distribution is given by $\Var(\hat\tau) = K^{-2} \sum_{k,k'} v_{kk'} Y_kY_{k'}$, where $v_{kk'} = \Cov(\tfrac{W_k}{\pi_k}, \tfrac{W_{k'}}{\pi_{k'}} )$ is a function of the distribution of $W_k$ and hence is known. If $Y_k$ and $Y_{k'}$ are both observed with positive probability, so that $\pi_{kk'}= P(W_k=1,W_{k'}=1)>0$, then it is straightforward to unbiasedly estimate $Y_kY_{k'}$ by inverse probability weighting, $\frac{W_k W_{k'}}{\pi_{kk'}} Y_k Y_{k'}$. 	The challenge for variance estimation is that certain pairs of potential outcomes are not jointly observable, so that $\pi_{kk'}=0$ for some pairs $k,k'$. For example, if $k$ and $k'$ correspond to different potential outcomes for the same unit, say $Y_i(0)$ and $Y_i(1)$, then it is impossible to observe both $W_k=1$ and $W_{k'}=1$ simultaneously. Under complex assignment schemes, other pairs of potential outcomes may also be jointly unobservable. For example, if the treatment has spillovers and units $i$ and $j$ are neighbors, then we cannot observe that unit $i$ is treated while unit $j$ has no treated neighbors. To derive valid standard errors, we therefore need to obtain some upper bound on the terms involving products $Y_k Y_{k'}$ for pairs of potential outcomes that are jointly unobservable.
    
    Our approach to constructing an upper bound on the variance is based on what we call the \emph{joint unobservability graph}. This is defined to be the graph among the nodes $k=1,\ldots,K$, where $k$ and $k'$ are adjacent in the graph if $Y_{k}$ and $Y_{k'}$ cannot be jointly observed, i.e. $\pi_{kk'}=0$. We can then partition the joint unobservability graph into its connected components $\mathcal{K}_1,\ldots,\mathcal{K}_C$. This construction implies that if $\pi_{kk'}=0$, then $k,k'$ are in the same $\mathcal{K}_c$ for some $c$. For our main approach, we assume that there are at least two connected components, $C \geq 2$, although we consider an extension of the main idea to settings with $C=1$ in \Cref{sec:arbitrary-clusters}. In a completely randomized experiment, each connected component contains the two potential outcomes for a given unit $i$. Likewise, in a cluster-randomized experiment, each connected component collects all the potential outcomes for units in a particular cluster. We thus see that under these two simple assignment schemes, the standard approaches to variance estimation have conservativeness that is proportional to the variance of treatment effect sums across the connected components.	
	
	Our proposed estimator generalizes this property to more general designs: its conservativeness is proportional to the variance of treatment effect sums across connected components. Specifically, we consider the variance estimator 
	$$\VhatCC =  \frac{1}{K^2} \sum_{ \substack{ k,k' \\ \pi_{kk'} >0 }} (v_{kk'} + s_{kk'}) \frac{W_k W_{k'}}{ \pi_{kk'} } Y_k Y_{k'}, \text{ where } s_{kk'} = \begin{cases}1 & k,k' \in \mathcal{K}_c \text{ for some } c \\ 
		-1/(C-1) & \text{o.w.}
	\end{cases} $$
	which we refer to as the connected components (CC) estimator. The estimator $\VhatCC$ does inverse-probability weighting only over pairs $k,k'$ that are jointly observable. However, rather than using the weights $v_{kk'}$ found in the expression for $\Var(\hat\tau)$, it uses $v_{kk'}+s_{kk'}$, where $s_{kk'}$ is 1 if individuals are in the same connected component and $-1/(C-1)$ otherwise. The CC estimator thus compensates for the jointly unobservable pairs by adding the slack term $s_{kk'}$, which depends on whether $k$ and $k'$ are in the same component, to the observable pairs. We show that $\VhatCC$ matches the Neyman variance in a completely randomized experiment, and a cluster-robust variance estimator in a cluster-randomized completely randomized experiment. More generally, our first main result (\Cref{prop:cc-valid}) shows that $\VhatCC$ is weakly upwards-biased for $\Var(\hat\tau)$, with conservativeness proportional to the variance of treatment effects across clusters, $\Var_c(\tau_c)$, where $\tau_c = \sum_{k \in \mathcal{K}_c} Y_k$. It follows that $\VhatCC$ is unbiased if treatment effect sums are homogeneous across connected components (which necessarily occurs under the sharp null when all effects are zero). In addition to its design-based interpretation, we further show that $\VhatCC$ can be viewed as an unbiased estimator of $\Var(\hat\tau)$ under a sampling scheme in which the component sums $\tau_c$ are sampled iid across components. The CC estimator thus possesses the three main properties of the Neyman variance estimator for completely randomized experiments in more general designs.

	We further establish that $\VhatCC$ is admissible in the class of quadratic-form variance estimators (defined formally below), i.e. there is no quadratic-form variance estimator that is upwards-biased for $\Var(\hat\tau)$ and that dominates $\VhatCC$ in the sense that it is always weakly less conservative and strictly so for some choice of potential outcomes. Our proof of admissibility builds on the results of \citet{harshaw_optimized_2026}, who show that any admissible quadratic-form estimator corresponds to the solution of a class of optimization problems with an objective over matrices that is increasing in the Loewner order. We show that $\VhatCC$ is a solution to this optimization for a particular choice of objective, and hence is admissible. The objective that $\VhatCC$ optimizes also has a Bayesian interpretation. Specifically, consider a Bayesian who has a prior over the potential outcomes and wishes to minimize the expected conservativeness of the variance estimator, subject to the constraint that it is weakly upwards-biased for all possible potential outcomes. We show that $\VhatCC$ corresponds to the optimal solution (among quadratic-form estimators) for a Bayesian whose prior suggests that cross-component heterogeneity in treatment effects is small and ``equally spread out'' across components in a sense that we make precise below. Since $\VhatCC$ reduces to the Neyman variance estimator in a completely randomized experiment and the cluster-robust variance estimator in cluster-randomized experiments, this result also gives a novel Bayesian justification for these commonly-used variance estimators.
	
	In \Cref{sec:comparison-in-experiments}, we compare the CC estimator to other design-based estimators that have been proposed for general assignment schemes, focusing primarily on the estimators proposed by \citet{aronow_estimating_2017} and \citet{harshaw_optimized_2026}. When specialized to a completely randomized experiment, the CC estimator reduces to the Neyman variance, with conservativeness proportional to the variance of treatment effects, $ \frac{1}{N}\Var_i(\tau_i)$. By contrast, we show that the \citet{aronow_estimating_2017} estimator has conservativeness proportional to the uncentered second moment of treatment effects, $\frac{1}{N} E_i(\tau_i^2)$. Hence, we expect the CC estimator to be much less conservative when treatment effects are large but relatively homogeneous, so that $\Var_i(\tau_i) \ll E_i(\tau_i^2)$.\footnote{We note that $\Var_i(\tau_i)$ is defined to be the finite-population variance, with denominator $N-1$ rather than $N$, so it is not the case that $\Var_i(\tau_i) \leq E_i(\tau_i^2)$ for all choices of potential outcomes. Hence, the conservativeness of the \citet{aronow_estimating_2017} estimator can be slightly smaller than that of the CC estimator, but only by the factor $(N-1)/N$.} \citet{harshaw_optimized_2026} propose a class of estimators based on minimizing different measures of the slackness of the upper bound, and in their simulations consider implementations using simple norms such as the trace and Frobenius norms. In a completely randomized experiment with equal-sized treatment arms, we show that the \citet{harshaw_optimized_2026} estimator using the Frobenius norm objective corresponds exactly to the AS estimator, and hence the same comparison applies to this estimator. Turning to more general designs, any variance estimator corresponding to a solution to the class of optimizations considered in \citet{harshaw_optimized_2026} is admissible in the class of quadratic-form variance estimators, as is the CC estimator, and thus neither dominates the other for all choices of potential outcomes. The CC estimator is exact when treatment effects are homogeneous across components, while this is not guaranteed for the \citet{harshaw_optimized_2026} estimator based on the trace or Frobenius norms, and thus we expect the CC estimator to be less conservative when effects are relatively homogeneous across components, while admissibility implies that the trace and Frobenius-norm based estimators must be less conservative under at least some forms of treatment effect heterogeneity. We further show in \Cref{subsec:complex-designs} that all valid quadratic-form variance estimators that are exact under the sharp null of no treatment effects have conservativeness that depends on the heterogeneity of treatment effects across connected components. Specifically, the conservativeness of any such estimator is proportional to a quadratic form involving the component-level sums of treatment effects and a correlation matrix. The CC estimator corresponds to a particular choice of the correlation matrix, which we show has the minimum operator norm among correlation matrices corresponding to variance estimators that are exact under component homogeneity.\footnote{Estimators based on other correlation matrices, which optimize different objectives, are also possible; see \Cref{rem:alt-estimators}.}

	In \Cref{sec:consistency}, we consider finite-population asymptotics and provide conditions under which the CC estimator is consistent for an upper bound on the asymptotic variance of $\hat\tau$, employing similar arguments for the consistency of Horvitz-Thompson estimators as in \citet{delevoye_consistency_2020}.  In \Cref{sec:non-ht-estimators}, we show that our approach to variance estimation for Horvitz-Thompson estimators is also asymptotically valid for other estimators that have an asymptotic linear representation. We provide high-level conditions for consistency for generic asymptotically linear estimators, and lower-level conditions for estimators based on linear regressions. 
	
	In \Cref{sec:arbitrary-clusters}, we consider the setting where the joint unobservability graph is connected ($C=1$), in which case the CC estimator is not well-defined. However, we show that if the units in the population are partitioned into clusters, then using similar ideas it is possible to construct a variance estimator whose conservativeness intuitively depends on (a) the heterogeneity of treatment effects across the clusters, and (b) the number of cross-cluster pairs $k,k'$ that have $\pi_{kk'}=0$. The resulting estimator will thus not be too conservative if the clusters are relatively homogeneous and have relatively few joint unobservability links between them. Owing to term (b), the resulting estimator will not be exact even under the sharp null of no treatment effects. We show, however, that some conservativeness under the sharp null appears to be necessary in the challenging case when $C=1$: the only quadratic-form variance estimator that is exact under the sharp null is the trivial estimator $\hat\tau^2$, which always produces a $t$-statistic with absolute value of 1. 
	
	We find that the CC estimator performs well in calibrated simulations and applications to two applied settings, as described in \Cref{sec:sims-and-apps}. We first revisit and extend \citet{harshaw_optimized_2026}'s simulation study calibrated to \citet{paluck_changing_2016}, who studied an anti-bullying intervention in schools which is expected to have social spillovers. We find that confidence intervals based on the CC estimator have good coverage and are shorter on average than the \citet{aronow_estimating_2017} estimator and two implementations of the \citet{harshaw_optimized_2026} approach based on simple norms in all of the simulation specifications taken from \citet{harshaw_optimized_2026}. In line with the intuition from a completely randomized experiment, the differences tend to be larger when the treatment effect is large but relatively homogeneous. Applied to the actual \citet{paluck_changing_2016} data, the CC estimator yields the shortest standard error in 7 of 8 specifications; in the most extreme case, the CC SE is less than half that of the other estimators. The CC estimator can also be computed in seconds in all specifications, compared with over 30 hours in some cases for the \citet{harshaw_optimized_2026} approach, which relies on nonlinear optimization. Our second simulation design is based on the spatial treatment setting studied in \citet{pollmann_causal_2026}, who investigated the impact of having a nearby grocery store on foot traffic to restaurants. We calibrate our simulations based on a public-use synthetic dataset provided by \citet{pollmann_causal_2026}, which was simulated to match features of the proprietary data used in his actual application. We find in our simulations that the CC variance estimator yields confidence intervals with close to nominal coverage, and that are approximately three times shorter on average than those obtained using the variance estimator proposed by \citet{pollmann_causal_2026}. We likewise obtain standard errors that are approximately three times smaller when applied to Pollmann's dataset. We also find in our simulations that the CC estimator is competitive with spatial HAC \citep{conley_gmm_1999} using a design-motivated choice of kernel and bandwidth \citep{wang_design-based_2025} under homogeneous effects and simple forms of treatment effect heterogeneity, while remaining valid under non-homophilic forms of heterogeneity that cause HAC to be downward-biased. We emphasize that all these comparisons across estimators may be specific to the simulation designs considered, as theory generally suggests that there is no uniform dominance relationship across the estimators considered.

\paragraph{Related Literature.} The most closely related work is \citet{aronow_estimating_2017} and \citet{harshaw_optimized_2026}, discussed above, who also propose variance estimation methods for causal settings with known but potentially arbitrary assignment probabilities and a finite number of known treatment exposure levels. \citet{chang_design-based_2025} also studies complex experimental designs; his focus is mainly on estimation, although he also extends the \citet{aronow_estimating_2017} approach to variance estimation to a broader class of estimators. Our contribution relative to this work is to propose a variance estimator that generalizes the well-known properties of the Neyman variance estimator to more complicated designs. Another closely related paper is \citet{pollmann_causal_2026}, who considers a design-based setting with spatial treatments that, as we describe in \Cref{sec:sims-and-apps} below, can be cast as a special case of the setting we consider. 

Our work relates to the broader literatures on inference in settings with spatial treatments or network spillovers. One common approach is to use HAC-style variance estimators, such as \citet{conley_gmm_1999}. HAC-style estimators were originally developed in a model-based framework in which one restricts the dependence of errors over space; a challenge in implementing these estimators in practice is choosing the distance metric, bandwidth, and kernel, which together govern the allowed spatial correlation. \citet{xu_design-based_2026} study HAC-style estimators from the design-based perspective and provide general mixing conditions under which spatial HAC is asymptotically valid, but choosing a bandwidth and kernel for good finite-sample performance remains a practical challenge. Other recent work has provided a design-based justification for HAC-style variance estimators with particular choices of distance metric, bandwidth, and kernel in specific network \citep{leung_causal_2022} and spatial-treatment \citep{wang_design-based_2025} settings.\footnote{The appendices to \citet{harshaw_optimized_2026} and \citet{pollmann_causal_2026} respectively discuss how the frameworks of \citet{leung_causal_2022} and \citet{wang_design-based_2025} relate to our design-based framework. \citet{harshaw_optimized_2026}, for example, find that the \citet{leung_causal_2022} variance estimator can be anti-conservative in their simulation design, which we adopt in \Cref{sec:sims-and-apps} below. They also discuss how the \citet{leung_causal_2022} estimator may be preferable in other designs where the exposure mapping is mis-specified.} One advantage of these approaches relative to ours is that they typically can allow for some mis-specification of the exposure mapping (e.g. approximate network interference), whereas our framework relies on a correctly-specified exposure mapping; see \citet{savje_average_2021} for discussion of how inference can be distorted under mis-specified exposure mappings. On the other hand, these results typically rely on specific treatment assignment mechanisms (e.g. Bernoulli assignment of nodes or candidate sites) and require some restrictions on treatment effects, either in terms of how they decay with distance in the network, or that treatment effects satisfy a spatial homophily assumption. The CC estimator, by contrast, can accommodate complex treatment assignment mechanisms and does not require restrictions on treatment effect heterogeneity or the choice of a distance metric, bandwidth or kernel, although it does require that $C>1$ and may be conservative when treatment effects are very heterogeneous. We thus view the two approaches as complementary in spatial settings since they have non-nested use-cases; see \Cref{subsec:complex-designs} for further discussion. An alternative design-based approach in settings with spillovers is to use randomization tests, as in \citet{athey_exact_2018}, \citet{basse_randomization_2019}, \citet{zhong_unconditional_2024}, which yield finite-sample valid inference for (partially) sharp nulls in settings with interference; by contrast, we focus on developing standard errors for weak nulls on average effects. See \citet{bhadra_causal_2025} for a review of methods for inference in the presence of spillovers, including model-based approaches that make assumptions about the formation of the network and potential outcomes.

\section{Set-up\label{sec:setup}}
	
	For our formal results, we will consider the problem of estimating the mean of an attribute in a finite population under a general sampling scheme; as we will show shortly, this framework nests a variety of causal settings if each item in the finite population corresponds to a potential outcome of interest. Formally, we suppose there exists a finite population of $K$ items, where the $k$th item has attribute $Y_k$. These attributes are viewed as non-stochastic (or conditioned on). We are interested in the population average $\tau := \tfrac{1}{K} \sum_k Y_k$. We denote by $W_k$ the random variable indicating whether the $k$th item is observed. We denote the first-order and second-order sampling probabilities by $\pi_k := P(W_k =1)$ and $\pi_{kk'} := P(W_k = 1, W_{k'}=1)$, respectively, and assume that these probabilities are known (or simulatable to arbitrary accuracy). We assume that $\pi_k>0$ for all $k$, so that all attributes are sampled with positive probability.\footnote{When $k$ indexes potential outcomes, the requirement that $\pi_k>0$ for all $k$ imposes the overlap restriction that each unit receives each treatment level with positive probability. Since the design is assumed to be known, the researcher can restrict attention to the sub-population for which this overlap condition holds. For ease of exposition, we assume that the researcher has already restricted attention to the population satisfying overlap.} We consider the Horvitz-Thompson estimator $\hat\tau = \frac{1}{K} \sum_k \frac{W_k}{\pi_k} Y_k$; see \Cref{sec:non-ht-estimators} below for extensions to other estimators. 
	
	We can nest a variety of causal settings in this framework by having $k$ index the relevant potential outcomes and setting $Y_k$ to be an appropriate transformation of the potential outcomes. Consider a setting where unit $i$ receives treatment exposure level $D_i \in \{0,\ldots, Q\}$, with realized outcome $Y_i = Y_i(D_i)$, where $Y_i(\cdot)$ is $i$'s potential outcome, which following the design-based paradigm we view as fixed (or conditioned on). If we let $k=(i,d)$ index the $K=N (Q+1)$ combinations of $i$ and $d$, and set $Y_k = \omega_{i,d}Y_i(d)$ for some weights $\omega_{i,d}$, then $\tau$ corresponds to the causal contrast $\frac{1}{K} \sum_{i} \sum_d \omega_{i,d} Y_i(d)$, and $\hat{\tau}$ corresponds to the Horvitz-Thompson estimator $\frac{1}{K} \sum_i \sum_d \frac{1\{ D_i =d \} }{  P(D_i =d)} \omega_{i,d} Y_i$. To be more concrete, we now outline several examples fitting this framework. 
	
	\begin{example}[Completely randomized experiment] \label{example:cre}
	Consider a completely randomized experiment (CRE) in which $N_1$ out of $N$ units are randomly selected to receive a binary treatment ($D_i \in \{0,1\}$) with equal probability. (We will assume throughout that $2 \leq N_1 \leq N-2$, so there are at least two units in each treatment arm.) Let $k$ index the possible values of $(i,d)$ for $i=1,\ldots,N$ and $d = 0,1$, so that $K= 2N$. For $k=(i,d)$, define $Y_k = 2Y_i(1)$ if $d=1$ and $-2 Y_i(0)$ if $d=0$, and let $W_k = 1\{D_i =d \}$. Then $$\tau = \frac{1}{K} \sum_k Y_k = \frac{1}{N} \sum_i Y_i(1) - Y_i(0), $$
	the finite-population average treatment effect. For $k=(i,d)$ and $k'=(i',d')$, we have $\pi_k = P(D_i =d)$ and $\pi_{kk'} = P(D_i=d, D_{i'} = d')$. For example, if $d=d' = 1$ and $i \neq i'$, then $\pi_k  = P(D_i=1) = \frac{N_1}{N}$ and $\pi_{kk'} = P(D_i = 1, D_{i'}= 1) = \frac{N_1}{N} \frac{N_1 - 1}{N-1}$. %
	\end{example}

	\begin{example}[Cluster randomized experiment] \label{example:cluster-experiment}
	As in the previous example, suppose there is a binary treatment. However, we now suppose that the $N$ units are divided into $G$ different clusters (or groups), and denote by $g(i) \in \{1,\ldots,G\}$ the cluster of unit $i$. For simplicity, we assume each cluster $g$ has exactly $m=N/G$ units. We suppose that we have a cluster randomized experiment in which $G_1$ out of the $G$ clusters are selected for treatment, and unit $i$ is treated if and only if $D_{g(i)}=1$, where $D_g$ is an indicator for whether cluster $g$ is selected for treatment. (We will assume throughout that $2 \leq G_1 \leq G-2$, so there are at least 2 clusters in each treatment arm.) We can define $k$, $Y_k$, and $W_k$ as in the previous example, so that the target parameter is still the ATE. The only change is that the $\pi_{k}, \pi_{kk'}$ now correspond to the treatment probabilities under \emph{clustered} assignment. For example, if $k = (i,1)$ and $k' = (i',1)$ for $i$ and $i'$ in the same cluster ($g(i)= g(i')$), then $\pi_{kk'}= P(D_{g(i)} =1, D_{g(i')}=1) = \frac{G_1}{G}$. Likewise, if $k=(i,1)$ and $k'=(i',0)$ for $i$ and $i'$ again in the same cluster, then $\pi_{kk'}=0$, since units in the same cluster cannot have different treatment statuses.
	
	\end{example} 	

	\begin{example}[Spatial treatments] \label{example:spatial}
	Similar to \citet{pollmann_causal_2026} and \citet{wang_design-based_2025}, suppose we have locations $i=1,\ldots,N$ that are located at various points on a map. We are interested in the effect of having a local amenity (or disamenity) nearby; for example, we may wish to know how the presence of a grocery store affects foot traffic, or how a polluting plant affects property values. We suppose that the sites of the amenities are randomly selected from one of $S$ possible sites, and let $T_s \in \{0,1\}$ be an indicator for whether site $s$ is selected. We may then say that location $i$ is treated if there is a closeby selected site, so that $D_i = 1\{ \exists s \text{ s.t. } T_s =1, d(s,i) \leq d_0 \}$, where $d(s,i)$ is the distance between $s$ and $i$ and $d_0$ is some threshold. (We could likewise define treatment to depend on the \emph{number} of closeby sites, or even the set of sites within distance $d_0$ of $i$.) We can again define $k$ and $Y_k$ as in the previous examples so that the target parameter is the ATE. The key change is that the $\pi_{k}$ and $\pi_{kk'}$ are now determined by the spatial nature of the treatment. For $k=(i,1)$ and $k'=(i',1)$, for example, $\pi_{kk'}$ corresponds to the probability that there is both a site close to $i$ that is treated and a site close to $i'$ that is treated. 
	
	\end{example}

	\begin{example}[Interference] \label{example:interference}
		In many settings, we may expect the treatment assignment of one unit to affect the outcome of another unit. For example, if $T_i$ denotes whether individual $i$ is randomized to receive a vaccine, and unit $j$ is a family member of unit $i$, then we may expect $T_i$ to affect the outcome for unit $j$. A popular framework for studying spillovers is the exposure mapping framework of \citet{aronow_estimating_2017}, in which we specify that unit $i$'s treatment depends on some function $h_i(T_i,T_{-i})$. In the vaccine context, we expect that unit $i$'s outcome depends on both whether they were vaccinated and whether any of their neighbors were vaccinated, in which case $h$ takes on four possible values (for each combination of own-treatment and neighbors' treatment). In this case, we can let $k$ index the relevant values for $(i,h)$ and let $Y_k = \omega_{i,h} Y_i(h)$, in which case $\tau$ estimates the contrast $\frac{1}{K} \sum_i \sum_h \omega_{i,h} Y_i(h)$. For example, if we're interested in the $(1,1)$ vs. $(0,0)$ contrast, then setting $\omega_{i,h}=2$ if $h=(1,1)$ and $-2$ if $h=(0,0)$ yields the average total effect, $\tau = \frac{1}{N} \sum_i Y_i(1,1) - Y_i(0,0)$. In this case, the $\pi_k$ and $\pi_{kk'}$ are determined both by the treatment assignment $T_i$ and by the network. 
	\end{example} 

    \paragraph{Other examples.} Although we focus on the four examples given above, we briefly sketch several other classes of applications in which the assignment mechanism is known but may exhibit dependence across units. First, a variety of quasi-random processes will induce treatments with spatial correlation, beyond the example with site-level treatments in \Cref{example:spatial}. For example, previous papers have argued that the timing of hurricanes is quasi-random \citep[e.g.][]{Deryugina2017}; the treatment assignments induced by randomly assigning dates to hurricane paths will exhibit complicated dependence owing to multiple places being hit by the same hurricane. Second, consider the literature that has exploited randomness in centralized assignment mechanisms that, say, assign students to schools or doctors to residencies \citep{abdulkadiroglu_research_2017,chen_nonparametric_2026}. It is straightforward to simulate from the assignment mechanism many times, and therefore to learn assignment probabilities to arbitrary accuracy. However, owing to capacity constraints, assignments are likely to exhibit complicated dependence across individuals. Third, consider experiments in two-sided markets---e.g. a randomized intervention to drivers on Uber or sellers on eBay \citep{johari_experimental_2021}. If we are interested in the impact of this intervention on the other side of the market---e.g. the impact on Uber riders or eBay buyers---then the exposures to treatment likely exhibit complicated dependence patterns since, for example, two riders who had the same driver will have correlated exposures. Fourth, in a variety of settings, researchers employ potentially complex treatment assignment protocols involving re-randomization, stratification, or sequential treatment assignment rules \citep[e.g.][]{li_asymptotic_2018,bai_optimality_2022}. Finally, we note that in a variety of settings, researchers run permutation tests that simulate counterfactual treatment exposures by permuting the observed treatment exposures along various dimensions: the assignment of treatments over this distribution is known, but may again exhibit complicated dependence across units. 
    
	With these examples in mind, we now describe the challenge of estimating the variance of $\hat\tau$. Since the $Y_k$ are fixed, as shown in \citet{horvitz_generalization_1952}, 
	\begin{equation} \Var(\hat\tau) = \frac{1}{K^2} \sum_{k,k'} v_{kk'} Y_k Y_{k'}, \qquad \text{ where } \qquad v_{kk'} := \underbracket{\frac{\pi_{kk'} - \pi_k \pi_{k'}}{\pi_k \pi_{k'}}}_{= \Cov\big(\frac{W_k}{\pi_k}, \frac{W_{k'}}{\pi_{k'}} \big)}. \label{eqn:var-tauhat}
	\end{equation}
	
	We can decompose this expression into sums involving pairs $(k,k')$ with $\pi_{kk'}>0$ and those with $\pi_{kk'} = 0$,
	
	$$ \Var(\hat\tau) = \frac{1}{K^2} \sum_{ \substack{ k,k' \\ \pi_{kk'} >0 }} v_{kk'} Y_k Y_{k'}  + \frac{1}{K^2} \sum_{ \substack{ k,k' \\ \pi_{kk'} =0 } } -Y_k Y_{k'}.$$
	
	The first sum is straightforward to estimate since it is a weighted sum of terms of the form $Y_k Y_{k'}$, where $Y_k$ and $Y_{k'}$ are simultaneously observable with positive probability ($\pi_{kk'}>0$). These terms can be unbiasedly estimated by Horvitz-Thompson weighting, $\E[ v_{kk'} \frac{W_k W_{k'} }{  \pi_{kk'} } Y_k Y_{k'} ]= v_{kk'} Y_k Y_{k'}$. The challenge arises from the second term, which involves products $-Y_k Y_{k'}$ which are never directly observed since $\pi_{kk'}=0$. The key challenge is to find an estimable upper bound on the sum of these unobservable terms.

	\section{Proposed estimator and its properties\label{sec:estimator}}

    To derive an upper bound on the jointly unobservable product terms that enter $\Var(\hat\tau)$, we first formalize the structure of which potential outcomes cannot be observed together using what we call the \emph{joint unobservability graph}. We first define this graph, then show how it can be used to obtain a variance estimator. 
	
	\paragraph{Joint Unobservability Graph.} We say that $k$ and $k'$ are jointly unobservable if $\pi_{kk'}=0$. We then consider the \emph{joint unobservability graph} in which the nodes $k$ and $k'$ are connected if and only if $k,k'$ are jointly unobservable. This graph has adjacency matrix $A$ such that $A_{kk'} = 1$ if and only if $\pi_{kk'} =0$. We can then partition the indices $\{1,...,K\}$ into the connected components of the graph: i.e. the finest partition $\mathcal{K}_1,...,\mathcal{K}_C$ of $\{1,...,K\}$ such that if $\pi_{kk'} =0$, then $k,k' \in \mathcal{K}_c$ for some $c$. We will assume for now that $C \geq 2$, so the joint unobservability graph is not connected. See \Cref{sec:arbitrary-clusters} below for an extension to the setting where the joint unobservability graph may be connected but has ``weakly connected'' clusters. 
	
	\paragraph{Examples (revisited).} Revisiting our examples from earlier, we note that in a completely randomized experiment, $k$ and $k'$ are jointly unobservable if and only if they correspond to the potential outcomes for the same unit ($Y_i(0),Y_i(1)$). There are thus $N$ connected components of the graph, where each connected component contains the two potential outcomes corresponding to a single unit. In a cluster-randomized experiment, $k$ and $k'$ are jointly unobservable if either (i) they correspond to the two potential outcomes for a single unit $i$, or (ii) they correspond to potential outcomes $Y_i(1)$ and $Y_j(0)$ for $i$ and $j$ in the same cluster. It follows that the number of connected components equals the number of clusters ($G$), with each component collecting all the potential outcomes associated with a particular cluster $g$. Next consider our spatial treatment example (\Cref{example:spatial}). As before, the two potential outcomes for the same unit are jointly unobservable. Moreover, if every possible site $s$ that is within distance $d_0$ of unit $i$ is also within distance $d_0$ of unit $j$, then it is impossible to jointly observe $Y_i(1)$ and $Y_j(0)$. The joint unobservability graph thus contains edges generated from these two types of pairs of potential outcomes. The number of connected components then depends on the spatial orientation of the sites and units. Finally, consider the setting with interference (\Cref{example:interference}). Suppose the exposure mapping $h$ captures own-treatment and whether any neighbors were treated, every unit has at least one neighbor, and we are interested in the $(1,1)$ vs. $(0,0)$ contrast. Suppose the individual-level treatments $T_i$ are assigned by Bernoulli randomization. In this case, it turns out that each connected component of the joint unobservability graph collects all the potential outcomes for units in a given connected component of the original network. To see why, observe that as in our other examples, the two potential outcomes for the same unit $i$ are jointly unobservable. Moreover, $Y_i(1,1)$ and $Y_j(0,0)$ are not jointly observable if $i$ and $j$ are neighbors, since it is not possible for $i$ to be treated while $j$ has no treated neighbors. This implies that all the potential outcomes for units who are connected in the original network must be connected in the joint unobservability graph. On the other hand, it is straightforward to verify that all potential outcomes for units that are not connected in the original graph are jointly observable under Bernoulli assignment. Thus, the connected components of the joint unobservability graph are one-to-one with those of the original network. We note that for other contrasts, the connected components of the joint unobservability graph need not align with the connected components of the original graph; see our application to \citet{paluck_changing_2016} in \Cref{sec:sims-and-apps} for an example.

	\begin{rem}[Relationship to dependency graphs]
		We note that the joint unobservability graph is distinct from the dependency graph considered in e.g. \citet{baldi_asymptotic_1989} and \citet{leung_treatment_2020}. In a dependency graph, $k$ and $k'$ are adjacent if $W_k$ and $W_{k'}$ are dependent.\footnote{The definition of a dependency graph typically requires that for disjoint sets $S_1,S_2$ with no links between them, $(W_k)_{k \in S_1} \indep (W_k)_{k\in S_2}$, which implies that $k$ and $k'$ are adjacent if they are dependent but is actually a stronger condition.} By contrast, the joint unobservability graph defines $k$ and $k'$ to be adjacent only if $\pi_{kk'} = P(W_k=1,W_{k'}=1)=0$, which is a stricter criterion. To appreciate the difference, note that in a completely randomized experiment, the treatment assignment indicators for all units are correlated owing to the constraint that $\sum_i D_i = N_1$. It follows that the dependency graph is connected. By contrast, as described above, the joint unobservability graph has $N$ connected components, with each component containing only the two potential outcomes for a given unit. This distinction arises because while treatment indicators are correlated across units, any pair of treatment assignments for units $i \neq j$ occurs with positive probability (i.e. $P(D_i = d, D_j = d') > 0$ for any $i \neq j$ and $d,d'\in \{0,1\}$). Thus pairs of potential outcomes involving distinct units are adjacent in the dependency graph but not in the joint unobservability graph.
	\end{rem}	
	 \paragraph{The CC Estimator.}
	 
	 We now derive a variance estimator that is unbiased for an upper bound on $\Var(\hat\tau)$. The conservativeness of the variance estimator is proportional to the heterogeneity across the connected components of the joint unobservability graph.
	 
 	Let $s_{kk'} = 1$ if $k$ and $k'$ are in the same component ($k,k' \in \mathcal{K}_c$ for some $c$), and let $s_{kk'} = - 1/(C-1)$ otherwise. (We will assume $C \geq 2$ so that division by $C-1$ is well-defined.) Let $b_{kk'} = v_{kk'}+ s_{kk'}$. Define the feasible variance estimator
 	$$ \VhatCC = \frac{1}{K^2} \sum_{ \substack{ k,k' \\ \pi_{kk'} >0 }} b_{kk'} \frac{W_k W_{k'}}{ \pi_{kk'} } Y_k Y_{k'}, $$
 	which we refer to as the connected components (or CC) estimator.
	
	\begin{prop} \label{prop:cc-valid}
		Suppose $C > 1$. Then
		
	\begin{equation}
	\E[\VhatCC] = \Var(\hat\tau) + \frac{C}{K^2} \Var_c(\tau_c) \geq \Var(\hat\tau) \label{eqn:conservativeness-vhatcc}	
	\end{equation}
	where $$\Var_c(\tau_c) = \frac{1}{C-1} \sum_c (\tau_c - \bar{\tau})^2 $$
	is the finite-population variance of the component sums $\tau_c:=\sum_{k\in\mathcal{K}_c}Y_k$ across components (with $\bar{\tau}:= \frac{1}{C} \sum_c \tau_c$).  	
	\end{prop}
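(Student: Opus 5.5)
Compute $\E[\VhatCC]$ directly by Horvitz--Thompson unbiasedness for each jointly observable pair, then reorganize the missing terms using the component structure. For any pair with $\pi_{kk'}>0$ we have $\E[W_kW_{k'}/\pi_{kk'}]=1$, so
$$\E[\VhatCC]=\frac{1}{K^2}\sum_{\substack{k,k'\\ \pi_{kk'}>0}}(v_{kk'}+s_{kk'})Y_kY_{k'}.$$
Subtracting the decomposition of $\Var(\hat\tau)$ given just before this section, where pairs with $\pi_{kk'}=0$ have $v_{kk'}=-1$, gives
$$\E[\VhatCC]-\Var(\hat\tau)=\frac{1}{K^2}\Big[\sum_{\substack{k,k'\\ \pi_{kk'}>0}}s_{kk'}Y_kY_{k'}+\sum_{\substack{k,k'\\ \pi_{kk'}=0}}Y_kY_{k'}\Big].$$

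The key observation is that every pair with $\pi_{kk'}=0$ lies in a common component, by the definition of the $\mathcal{K}_c$. For such pairs $s_{kk'}=1$, so the second sum equals $\sum_{\pi_{kk'}=0}s_{kk'}Y_kY_{k'}$. The two sums therefore combine into the unrestricted sum $\sum_{k,k'}s_{kk'}Y_kY_{k'}$.

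Next, evaluate this sum using the block structure of $s$. Split it into same-component pairs and cross-component pairs:
$$\sum_{k,k'}s_{kk'}Y_kY_{k'}=\sum_c\tau_c^2-\frac{1}{C-1}\sum_{c\neq c'}\tau_c\tau_{c'}.$$
Writing $\sum_{c\neq c'}\tau_c\tau_{c'}=(\sum_c\tau_c)^2-\sum_c\tau_c^2=C^2\bar\tau^2-\sum_c\tau_c^2$, this becomes
$$\frac{C}{C-1}\sum_c\tau_c^2-\frac{C^2}{C-1}\bar\tau^2=\frac{C}{C-1}\sum_c(\tau_c-\bar\tau)^2=C\,\Var_c(\tau_c).$$
Dividing by $K^2$ gives \eqref{eqn:conservativeness-vhatcc}. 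The inequality follows because $\Var_c(\tau_c)\ge 0$.

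No step is a real obstacle. The only point needing care is the diagonal terms $k=k'$. These always have $\pi_{kk}=\pi_k>0$ and lie in the same component, so they take weight $1+v_{kk}$ and are handled consistently in both sums. Apart from that, the argument rests on noting that $\pi_{kk'}=0$ forces $s_{kk'}=1$, which is exactly what the slack was designed to ensure.
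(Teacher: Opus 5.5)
Your proof is correct and follows essentially the same route as the paper's. Both use Horvitz--Thompson unbiasedness over observable pairs and then observe that $\pi_{kk'}=0$ forces $v_{kk'}=-1$ and $s_{kk'}=1$; the paper phrases this as $b_{kk'}=0$, which lets the sum run over all pairs before splitting off $\Var(\hat\tau)$. The same algebra then reduces $\sum_{k,k'}s_{kk'}Y_kY_{k'}$ to $C\,\Var_c(\tau_c)$.
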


As we now show, $\VhatCC$ recovers the standard variance estimators for completely randomized and cluster-randomized experiments, where the conservativeness of the usual estimator is proportional to the treatment effect heterogeneity across either individuals or clusters.

\paragraph{\Cref{example:cre} (revisited).} In a completely randomized experiment, it is straightforward to verify that $\VhatCC$ corresponds to the usual Neyman variance estimator, i.e. $\VhatCC = \tfrac{1}{N_1} \hat{s}_1^2 	+ \tfrac{1}{N_0} \hat{s}_0^2$ where $\hat{s}_d^2 = \tfrac{1}{N_d - 1} \sum_{i:D_i=d} (Y_i - \bar{Y}_d )^2$ is the sample variance for treatment arm $d$. In this case, the slackness term $\frac{C}{K^2} \Var_c(\tau_c)$ is equal to $\frac{1}{N} \Var_i(\tau_i)$, the variance of treatment effects $\tau_i = Y_i(1)- Y_i(0)$.\footnote{For any index $\ell=1,\ldots,L$ and fixed attributes $X_\ell$, we use $\Var_{\ell}(X_\ell)$ to denote the finite-population variance, $\frac{1}{L-1} \sum_\ell (X_\ell - \E_\ell[X_\ell])^2$, where $\E_\ell[X_\ell]:=\frac{1}{L}\sum_\ell X_\ell$ is the average of the $X_\ell$.} To see why this is the case, recall that the connected components $c$ correspond to the set of potential outcomes for a particular unit $i$. Thus, $\tau_c = \sum_{k \in \mathcal{K}_c} Y_k = 2 \tau_i$, so the variance of $\tau_c$ across components is proportional to the variance of $\tau_i$ across units. Equation \eqref{eqn:conservativeness-vhatcc} thus reduces to the classic result that the Neyman variance estimator's conservativeness is proportional to the variance of treatment effects. 

\paragraph{\Cref{example:cluster-experiment} (revisited).} In a cluster-randomized experiment with equal cluster sizes, $\VhatCC$ has two familiar representations. First, it is equal to the Neyman variance estimator applied to the cluster-averages $\bar{Y}_g = \frac{1}{m} \sum_{i: g(i) =g} Y_i$. Second, it is equal to the CR2 cluster-robust variance estimator applied to the representation of $\hat{\tau}$ as the OLS coefficient from the regression $Y_{i} = \alpha + \tau D_{g(i)} + \epsilon_i$.\footnote{\citet{samii_equivalencies_2012} show the equivalence between the Neyman variance estimator and the HC2 variance estimator in a completely randomized experiment. It is straightforward to verify a similar equivalence for CR2 and the Neyman variance estimator applied to cluster averages.} The conservativeness is proportional to the variance of cluster-averages of treatment effects, $\Var_g( \bar{\tau}_g )$. (Equivalently, it is proportional to the variance of cluster-level sums of treatment effects, $\Var_c(\tau_c) = 4 m^2 \Var_g(\bar\tau_g)$, since clusters are the same size $m$.) This is because each connected component $c$ corresponds to the set of potential outcomes for a particular cluster $g$, and thus $\tau_c$ is proportional to $\bar{\tau}_g$.

\paragraph{Conservativeness in general causal settings.} More generally, \Cref{prop:cc-valid} shows that the conservativeness of $\VhatCC$ will generically be proportional to the heterogeneity of causal effects across components in causal settings. Recall that in causal settings, all the potential outcomes associated with a particular unit $i$ must be in the same connected component, since any two potential outcomes for the same unit are jointly unobservable. If the $Y_k$ are equal to $ \omega_{i,d} Y_i(d)$ such that $\tau_i = \sum_d \omega_{i,d} Y_i(d)$ is a causal contrast, then $\tau_c = \sum_{i \in \mathcal{I}_c} \tau_i$ is the sum of the individual contrasts over the set of individuals $i$ whose potential outcomes are in component $c$ (denoted $\mathcal{I}_c$). Thus, the conservativeness of $\VhatCC$ in causal settings will generically be proportional to the heterogeneity of treatment effect sums across connected components.\footnote{Note that if the connected components are the same size, then the variance of treatment effect sums is proportional to the variance of treatment effect averages. This is why in \Cref{example:cluster-experiment}, which assumes equal cluster sizes for simplicity, the conservativeness is proportional to $\Var_g(\bar\tau_g)$ as described above.}  

\subsection{Validity under sampling perspective}

Although we primarily focus on the design-based setting where the potential outcomes are viewed as fixed, we now show that the CC estimator, like the Neyman variance in a completely randomized experiment, is also valid under certain data-generating processes in which the potential outcomes are viewed as stochastic. In this subsection only, we consider the setting in which potential outcomes $Y=(Y_1,\ldots,Y_K)'$ and treatments $W=(W_1,\ldots,W_K)'$ are both viewed as stochastic. We assume that the treatment assignment mechanism does not depend on the sampled values of the potential outcomes $Y$, so that $Y \indep W$. We will write $\E_{Y,W}[\cdot]$ and $\Var_{Y,W}(\cdot)$ to denote the expectation and variance over the joint distribution of $(Y,W)$. We note that the notation in previous sections implicitly conditions on $Y$ (since it is viewed as fixed), so expressions like $\Var(\hat\tau)$ in earlier sections can be viewed as $\Var_W(\hat\tau \mid Y)$ from the perspective in this subsection. 

We now describe when the CC estimator is valid from this alternative sampling perspective. By the law of total variance, we have that
\begin{align}
	\Var_{Y,W}(\hat\tau) &= \E_{Y}[ \overbracket{\Var_W(\hat\tau \mid Y)}^{\text{Design-based variance}} ] + \Var_{Y}\big( \E_W[\hat\tau \mid Y] \big) . \label{eqn:var-tauhat-sampling}
\end{align}
On the other hand, using \Cref{prop:cc-valid}, we have that
\begin{align}
	\E_{Y,W}[\VhatCC] = \E_{Y}[\Var_W(\hat\tau \mid Y)] + \frac{C}{K^2} \E_{Y}\big( \Var_c(\tau_c) \big) \label{eqn:e-vcc-sampling}.
\end{align}
It is immediate that the first terms on the right-hand side of \eqref{eqn:var-tauhat-sampling} and \eqref{eqn:e-vcc-sampling} are the same, and so $\VhatCC$ is unbiased for $\Var_{Y,W}(\hat\tau)$ if the second terms are the same as well. Note, however, that since $\hat\tau$ is unbiased for $\tau$, we have that 
\begin{equation} 
	\Var_{Y}\big( \E_W[\hat\tau \mid Y] \big) = \Var_{Y}(\tau) = \frac{C^2}{K^2} \Var_{Y}\bigg( \frac{1}{C} \sum_c \tau_c \bigg) ,
	\label{eqn:e-hat-tau-given-y}
\end{equation}
where the second equality uses the fact that $\tau = \frac{1}{K} \sum_k Y_k = \frac{C}{K} \frac{1}{C} \sum_c \tau_c$. We thus see that $\VhatCC$ is unbiased if 
\[ \E_{Y}\big( \Var_c(\tau_c) / C \big)  = \Var_{Y}\bigg(\frac{1}{C} \sum_c \tau_c \bigg) .\]
The term on the right-hand side is the variance of the sample mean of the $\tau_c$, while the term on the left-hand side is the sample variance of the $\tau_c$ (with denominator $C-1$) divided by $C$, i.e. the usual squared standard-error for the sample mean of $\tau_c$ under iid sampling. It follows that if the distribution of $Y$ is such that the component-level sums $\tau_c$ are iid, then $\VhatCC$ is unbiased for $\Var_{Y,W}(\hat\tau)$.

The following proposition gives a somewhat more general statement: if the sampling scheme for $Y$ is such that $\tau_c$ and $\tau_{c'}$ are uncorrelated for $c \neq c'$, then $\VhatCC$ estimates an upper bound on $\Var_{Y,W}(\hat\tau)$, with the conservativeness of the upper bound corresponding to the heterogeneity in $\mu_c = \E_{Y}[\tau_c]$. Hence, the CC estimator is valid from the sampling perspective if the potential outcome generation process is uncorrelated across connected components, and unbiased if there is no heterogeneity in the $\mu_{c}$. We note that if the potential outcomes are generated independently at the individual level, then this necessarily implies that they are uncorrelated across connected components (since all potential outcomes for the same unit are in the same connected component).

\begin{prop} \label{prop:super-pop-interp}
Suppose that the distribution of $Y$ is such that $\Cov_Y(\tau_c,\tau_{c'}) = 0$ for $c \neq c'$. Assume that $C>1$. Let $\mu_c = \E_{Y}[\tau_c]$. Then 
\[	\E_{Y,W}[\VhatCC] = 	\Var_{Y,W}(\hat\tau) + \frac{C}{K^2} \Var_c(\mu_c) \geq 	\Var_{Y,W}(\hat\tau). \]
\end{prop}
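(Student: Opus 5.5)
We will combine the law of total variance in \eqref{eqn:var-tauhat-sampling} with \Cref{prop:cc-valid} applied conditionally on $Y$. Because $Y \indep W$, the design quantities $\pi_k,\pi_{kk'}$, the joint unobservability graph, its components $\mathcal{K}_1,\ldots,\mathcal{K}_C$, and the weights $b_{kk'}$ do not depend on $Y$. Hence \Cref{prop:cc-valid} holds pointwise in $Y$:
\[
\E_W[\VhatCC\mid Y] = \Var_W(\hat\tau\mid Y) + \frac{C}{K^2}\Var_c(\tau_c).
\]
Taking $\E_Y$ and using the tower property gives \eqref{eqn:e-vcc-sampling}. Subtracting \eqref{eqn:var-tauhat-sampling} and invoking \eqref{eqn:e-hat-tau-given-y}, it remains to show
\[
\frac{C}{K^2}\,\E_Y[\Var_c(\tau_c)] - \frac{C^2}{K^2}\Var_Y\Big(\tfrac{1}{C}\textstyle\sum_c \tau_c\Big) = \frac{C}{K^2}\Var_c(\mu_c).
\]
Equivalently, we need $\E_Y[\Var_c(\tau_c)] - \Var_Y(\bar\tau)\cdot C = \Var_c(\mu_c)$.

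The key step is a standard computation for the sample variance of uncorrelated but non-identically distributed variables. Write $\sigma_c^2 = \Var_Y(\tau_c)$. First expand
\[
(C-1)\Var_c(\tau_c) = \sum_c \tau_c^2 - C\bar\tau^2 .
\]
Then take expectations using $\E_Y[\tau_c^2] = \sigma_c^2 + \mu_c^2$ and $\E_Y[\bar\tau^2] = \Var_Y(\bar\tau) + \bar\mu^2$, where $\bar\mu = \frac1C\sum_c\mu_c$. Because the $\tau_c$ are pairwise uncorrelated, $\Var_Y(\bar\tau) = C^{-2}\sum_c\sigma_c^2$. This gives
\[
(C-1)\E_Y[\Var_c(\tau_c)] = \Big(1-\tfrac1C\Big)\sum_c\sigma_c^2 + (C-1)\Var_c(\mu_c).
\]
Dividing by $C-1$ yields $\E_Y[\Var_c(\tau_c)] = \frac1C\sum_c\sigma_c^2 + \Var_c(\mu_c) = C\Var_Y(\bar\tau) + \Var_c(\mu_c)$, which is exactly the required identity. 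The inequality then follows because $\Var_c(\mu_c)\ge 0$.

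The only real obstacle is bookkeeping. First, the zero-covariance assumption must be used exactly once, to collapse $\Var_Y(\bar\tau)$ to $C^{-2}\sum_c\sigma_c^2$; no identical-distribution assumption is needed. Second, the $C-1$ denominator in $\Var_c$ must cancel correctly, which is where $C>1$ enters. Implicitly we also need second moments of $Y$ to be finite so that all expectations exist; we will state this as understood.
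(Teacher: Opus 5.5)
Your proposal is correct and follows essentially the same route as the paper: apply \Cref{prop:cc-valid} conditionally on $Y$, combine with the law-of-total-variance decomposition, and then show $\E_Y[\Var_c(\tau_c)] = C\Var_Y(\bar\tau) + \Var_c(\mu_c)$ by expanding $(C-1)\Var_c(\tau_c)=\sum_c\tau_c^2-C\bar\tau^2$ and using the zero cross-covariances once to get $\Var_Y(\bar\tau)=C^{-2}\sum_c\Var_Y(\tau_c)$. Your remarks that $Y\indep W$ keeps $\pi_k,\pi_{kk'}$, the components, and $b_{kk'}$ free of $Y$, and that finite second moments are needed, are points the paper leaves implicit.
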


\subsection{Admissibility and Bayesian Interpretation}
\paragraph{Admissibility.} We now show that the proposed estimator $\VhatCC$ is admissible among the class of quadratic-form estimators by the definition proposed in \citet{harshaw_optimized_2026}. As in \citet{harshaw_optimized_2026}, we will consider the class of variance estimators that use Horvitz-Thompson weighting over observable pairs to estimate a quadratic-form in the potential outcomes: specifically, letting $\mathcal{M}$ denote the set of symmetric $K \times K$ matrices with elements $m_{kk'}=0$ for all $k,k'$ with $\pi_{kk'}=0$, we will consider the class of estimators for  $\Var(\hat\tau)$ of the form
\begin{equation}
\VhatM = \frac{1}{K^2} \sum_{ \substack{ k,k' \\ \pi_{kk'} >0 }} m_{kk'} \frac{W_k W_{k'} }{ \pi_{kk'} } Y_k Y_{k'} , \qquad M=(m_{kk'})_{k,k'} \in \mathcal{M}, \label{eqn:hat-v-m}
\end{equation} 
which we call the class of quadratic-form estimators for short. It is clear that $\VhatCC$ above is a special case of this class in which $M=B$, for $B$ the matrix with elements $b_{kk'}= v_{kk'} + s_{kk'}$. Note that 
$$\E[\VhatM] = \frac{1}{K^2} \sum_{ \substack{ k,k' \\ \pi_{kk'} >0 }} m_{kk'} Y_k Y_{k'} = \frac{1}{K^2} Y' M Y, $$
where we use the fact that $m_{kk'}=0$ if $\pi_{kk'}=0$. Recall that
$$\Var(\hat\tau) = \frac{1}{K^2} \sum_{k,k'} v_{kk'} Y_k Y_{k'} = \frac{1}{K^2} Y' V Y,$$
where $V$ is the matrix with $kk'$th element $v_{kk'}$. It follows that  $\E[\VhatM] \geq \Var(\hat\tau)$ for all possible values of $Y$ if and only if $M - V$ is positive semi-definite. (Henceforth, we will write that $M-V$ is PSD or $M-V \succeq 0$ to denote that the matrix $M-V$ is positive semi-definite.) \citet{harshaw_optimized_2026} define an estimator $\VhatM$ to be inadmissible if there exists another upwards-biased estimator $\VhatQ$ of the form \eqref{eqn:hat-v-m} such that $\E[\VhatQ] \leq \E[\VhatM]$ for all values of $Y$ with strict inequality for some value of $Y$. More precisely, $\VhatM$ is admissible (among the class of estimators of the form \eqref{eqn:hat-v-m}) if $M-V$ is PSD and there exists no matrix $Q \neq M$ such that, $Q-V$ is PSD, $Q_{kk'}=0$ if $\pi_{kk'}=0$, and $M - Q$ is PSD. 

\citet{harshaw_optimized_2026} shows that $\VhatM$ is admissible if and only if it is the solution to an optimization of the form 
\begin{equation}
\min_{M} g(M-V) \text{ s.t. } M_{kk'} = 0 \text{ if } \pi_{kk'}=0 \text{ and } M - V \text{ is PSD} , \label{eqn:g-optimization}
\end{equation}
for some function $g$ that is strictly monotone in the Loewner order in the sense that $g(M_1) \geq g(M_2)$ if $M_1 - M_2$ is PSD, with strict inequality if $M_1 \neq M_2$. \citet{harshaw_optimized_2026} propose to find admissible estimators by solving the optimization \eqref{eqn:g-optimization} for particular choices of $g$, such as the trace or Frobenius norm. This leads to a convex optimization over PSD matrices, that \citet{harshaw_optimized_2026} implement using a splitting conic solver, which may be computationally demanding. 

We now show that the estimator $\VhatCC$ corresponds to a closed-form solution to the optimization \eqref{eqn:g-optimization} for a particular choice of $g$, although not one of the objectives that \citet{harshaw_optimized_2026} consider in their applications. It follows immediately that $\VhatCC$ is admissible.

\begin{prop} \label{prop:vhatcc-admissible}
	Let $\langle M_1,M_2 \rangle = \tr(M_1 M_2)$ be the trace inner product. Suppose $C>1$ and each connected component $\mathcal{K}_c$ contains at least two items.\footnote{This is trivially satisfied in causal settings since all potential outcomes for the same unit must be in the same connected component.}  Then there exists a positive definite matrix $\Sigma$ such that $B$ is the solution to \eqref{eqn:g-optimization} for $g(M-V) := \langle \Sigma, M-V \rangle$, which is strictly monotone in the Loewner order. Hence, $\VhatCC$ is admissible (in the class of quadratic-form estimators).	 
\end{prop}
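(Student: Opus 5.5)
The plan is to use an explicit dual certificate for the semidefinite program \eqref{eqn:g-optimization}. It helps to change variables to $P = M - V$. The constraint $M_{kk'}=0$ whenever $\pi_{kk'}=0$ becomes $P_{kk'} = -v_{kk'} = 1$ on those pairs, because $v_{kk'} = -1$ when $\pi_{kk'}=0$. All other entries of $P$ are free, and we require $P \succeq 0$.

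Let $S=(s_{kk'})$, so that $B - V = S$. The first step is to check that $B$ is feasible. Let $Z$ be the $K\times C$ matrix of component indicators $z_c$. Then $S = Z D Z'$ with $D = \frac{C}{C-1}(I_C - \frac{1}{C}\mathbf{1}\mathbf{1}')$, which is PSD, so $S \succeq 0$. Moreover, any pair with $\pi_{kk'}=0$ lies within a single component, so $s_{kk'}=1$ there and $b_{kk'} = -1+1 = 0$. Hence $B \in \mathcal{M}$ and $B - V \succeq 0$.

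Next I construct $\Sigma$. Let $A$ be the adjacency matrix of the joint unobservability graph and $D_A = \mathrm{diag}(\deg_A(k))$ its degree matrix. I take $\Sigma = D_A$ and write $\Sigma = L + A$, where $L = D_A - A$ is the graph Laplacian. Two facts about this choice:
\begin{itemize}
\item $\Sigma$ is positive definite because every item has degree at least one. Each component has at least two items and is connected, and diagonal pairs are never edges since $\pi_{kk} = \pi_k > 0$. This is exactly where the assumption on component sizes enters.
\item $L$ is PSD and satisfies $L z_c = 0$ for every $c$, because $A$ is block diagonal across components. Hence $LZ = 0$ and $\langle L, S\rangle = \tr(L Z D Z') = 0$.
\end{itemize}

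The optimality argument is then a weak-duality inequality. Take any feasible $P = M - V$. Then $\langle L, P\rangle \ge 0$, since both matrices are PSD. Also $\langle A, P\rangle = \sum_{k,k':\pi_{kk'}=0} P_{kk'}$ equals the number of ordered jointly unobservable pairs, which is the same for every feasible $P$, including $S$. Therefore
$$\langle \Sigma, P\rangle = \langle L,P\rangle + \langle A,P\rangle \ge \langle A, S\rangle = \langle \Sigma, S\rangle - \langle L, S\rangle = \langle \Sigma, S\rangle,$$
so $B$ attains the minimum of $g(M-V)=\langle \Sigma, M - V\rangle$.

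Strict monotonicity of $g$ follows from $\Sigma \succ 0$. If $M_1 - M_2 \succeq 0$ is nonzero, then $\langle \Sigma, M_1 - M_2\rangle > 0$.

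Admissibility then follows directly; the general result of \citet{harshaw_optimized_2026} could be cited, but it is not needed. Suppose some $Q \ne B$ had $Q - V \succeq 0$, $Q \in \mathcal{M}$ and $B - Q \succeq 0$. Then $\langle \Sigma, Q - V\rangle < \langle \Sigma, B - V\rangle$, which contradicts the optimality of $B$.

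The main obstacle I anticipated was finding a positive definite $\Sigma$ at all. Complementary slackness with a PD weight and a nonzero PSD slack $S$ looks impossible at first, since $\langle \Sigma, S\rangle > 0$. The resolution is that the certificate only needs a PSD part $L$ annihilating $S$, while the remainder is supported on the fixed entries. The Laplacian split above does exactly this, and connectivity of each component makes $\ker L$ exactly the span of the $z_c$, which is what the argument requires.
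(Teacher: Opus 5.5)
Your proof is correct. It has the same skeleton as the paper's proof: write $\Sigma$ as a PSD matrix that annihilates $S$, plus a multiple of $A$, whose inner product with any feasible slack is fixed by the constraints. The difference is the certificate.

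The paper uses $\Sigma_\alpha = P_{S^\perp} + \alpha A$, where $P_{S^\perp}$ is the projection onto $\mathrm{null}(S)$. Because $P_{S^\perp}$ is singular, proving positive definiteness needs two extra steps: showing that $A$ is positive definite on $\mathrm{range}(S)$, and a Schur-complement bound for small $\alpha$.

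Your $\Sigma = D_A = (D_A - A) + A$ is diagonal with entries $\deg(k)\ge 1$, so positive definiteness is immediate. The objective is then just the degree-weighted trace $\sum_k \deg(k)\,T_{kk}$. You could even read off optimality edge by edge from \Cref{lem:trace-principal-submatrix}, which extends the trace argument of \Cref{prop:trace-optimal-cluster-exp}. One small point: you only need $LZ=0$, which holds because $A$ has no cross-component edges. The size assumption enters only through $\deg(k)\ge 1$, and the equality $\ker L=\mathrm{span}(z_c)$ is not needed.

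What the paper's heavier choice buys is selectivity.
\begin{itemize}
\item Under $\Sigma_\alpha$, a feasible slack $T$ is optimal if and only if $\langle P_{S^\perp},T\rangle=0$, that is, $\mathrm{range}(T)\subseteq\mathrm{range}(S)$. Equivalently, $T=H\Lambda H'$ with $\Lambda$ a correlation matrix and $\Lambda\mathbf{1}_C=0$, which is the component-homogeneity-exact class of part 2 of \Cref{prop:sharp-null-exact}. This selectivity is what supports the Bayesian reading given after the proposition.
\item Under your $D_A$, every $H\Lambda H'$ with $\Lambda$ any correlation matrix is optimal. In causal settings this is the sharp-null-exact class of part 1. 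So, for example, the collapsed-AS-type slack $HH'$ ties with $S$.
\end{itemize}
Your certificate therefore proves admissibility as stated, but it does not single out CC within the sharp-null-exact class.
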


\paragraph{Bayesian interpretation.} Building on the discussion in \citet{harshaw_optimized_2026}, we can attribute a Bayesian interpretation to the optimization that yields $\VhatCC$ as a solution. Consider a Bayesian who has a prior over $Y$ and wishes to choose a quadratic-form estimator $\VhatM$ that minimizes the expected conservativeness of $\VhatM$ over the prior, subject to the constraint that $\E[\VhatM] \geq \Var(\hat\tau)$ for all values of $Y$. Note that for any fixed value of $Y$, the conservativeness of $\VhatM$ is $$\E[\VhatM] - \Var(\hat\tau) = \frac{1}{K^2} Y'(M-V)Y = \frac{1}{K^2} \tr(YY' (M-V)) = \frac{1}{K^2} \langle YY', M-V  \rangle.$$
It follows that given a prior $Y \sim \Pi$, the expected conservativeness of $\VhatM$ over $\Pi$ is $K^{-2} \langle  \E_{\Pi}[YY'], M-V \rangle$. Hence, the Bayesian wishes to solve the problem \eqref{eqn:g-optimization} with $g(M-V)= \langle  \E_{\Pi}[YY'], M-V \rangle$. \Cref{prop:vhatcc-admissible} shows that $\VhatCC$ optimizes \eqref{eqn:g-optimization} for the objective $g(M-V) = \langle \Sigma, M-V \rangle$ for a particular positive-definite choice of $\Sigma$. Hence, $\VhatCC$ minimizes expected conservativeness for a Bayesian with a prior $\Pi$ such that $\E_\Pi[Y]=0$ and $\Var_{\Pi}(Y) = \Sigma$. It is therefore instructive to consider the moments of this prior. The proof to \Cref{prop:vhatcc-admissible} considers the class of variance matrices of the form 
$$\Sigma_{\alpha} = P_{S^\perp} + \alpha A, $$ 
where $\alpha$ is a positive constant and $P_{S^\perp}$ is the orthogonal projection into the null space of $S$ (for $S$ the matrix with elements $s_{kk'}$). The proof to \Cref{prop:vhatcc-admissible} then shows that $\VhatCC$ corresponds to the solution to \eqref{eqn:g-optimization} with $g(M-V) =  \langle \Sigma_{\alpha}, M-V \rangle$ for $\alpha$ sufficiently small. Intuitively, this corresponds to a class of priors that expect the heterogeneity in component sums $\tau_c$ to be relatively small and to be ``equally distributed'' across components in a particular sense. Specifically, for distinct components $c,c'$, if $\Var_\Pi(Y) = \Sigma_\alpha$, then 
$$\Var_\Pi(\tau_c - \tau_{c'}) = \alpha (m_c + m_{c'}),$$
where $m_c = \sum_{k,k' \in \mathcal{K}_c} A_{kk'}$ is the number of pairs in component $c$ with $\pi_{kk'}=0$. Note that for $\alpha =0$, this imposes that $\Var_\Pi(\tau_c - \tau_{c'}) = 0$, in which case there is no cross-component heterogeneity in the component-sums of treatment effects $\tau_c$.  Moreover, if $\alpha >0$ and the $m_c$ are constant across components (as occurs in a cluster-randomized experiment with equal cluster sizes, for example), then $\Pi$ expects the heterogeneity across components to be equally-distributed across pairs in the sense that $\Var_\Pi(\tau_c - \tau_{c'})$ is the same for all pairs. In sum, we see that $\VhatCC$ corresponds to the optimal choice for a class of Bayesians who expect heterogeneity across clusters to be relatively small, and who distribute that heterogeneity equally across pairs of clusters that have the same number of zero-probability links. 

\begin{rem}[Bayesian justification for the Neyman variance in experiments]
Recall that in a completely randomized experiment, the estimator $\VhatCC$ corresponds to the usual Neyman variance estimator. In this setting, all the connected components are the same size, consisting of the two potential outcomes for each unit, and $\tau_c$ is proportional to $\tau_i$, the treatment effect for unit $i$. The discussion above thus implies that the Neyman variance is optimal (among the class of quadratic-form estimators) for a Bayesian who expects treatment effect heterogeneity to be sufficiently small, and whose prior is symmetric in the sense that $\Var_\Pi(\tau_i - \tau_j)$ is constant across distinct pairs $i \neq j$.
\end{rem}

\section{Comparison to other estimators\label{sec:comparison-in-experiments}}

We now provide a comparison of the CC estimator to other variance estimators that are applicable for complex designs. We begin by comparing the estimators in individual- and cluster-randomized experiments (\Cref{example:cre,example:cluster-experiment} above), which are simple enough that the estimators can be derived in closed form. We then provide a more general characterization of variance estimators that are exact under the sharp null of no treatment effects, and show how the CC estimator compares to other estimators in this class. 

We focus our comparisons in simple experimental settings on three variance estimators: two implementations of the \citet{harshaw_optimized_2026} approach described above, using the trace and Frobenius norms as the objective, and the \citet{aronow_estimating_2017} estimator. Before proceeding, we briefly review the construction of the \citet{aronow_estimating_2017} estimator. \citet[][AS]{aronow_estimating_2017} make the insightful observation that by Young's inequality, $-Y_k Y_{k'} \leq \frac{1}{2}(Y_k^2 +Y_{k'}^2)$. We can therefore obtain an upper bound on $\Var(\hat\tau)$ by replacing the unobservable terms $-Y_kY_{k'}$ in the expression for $\Var(\hat\tau)$ with the upper bound $\frac{1}{2}(Y_k^2 +Y_{k'}^2)$. This leads to an estimator of the form $\VhatM$ given in \eqref{eqn:hat-v-m}, where $M= V + D + A$, where $D$ is the degree matrix of the joint unobservability graph (i.e. the diagonal matrix with $D_{kk} = \sum_{k'} A_{kk'}$). We thus see that the slack matrix of the AS estimator, $D+A \succeq 0$, is precisely the signless Laplacian of the joint unobservability graph. We refer to the resulting estimator as $\VhatAS$.

We now turn to a comparison of the CC estimator and the AS and \citet{harshaw_optimized_2026} estimators in individual- and cluster-randomized experiments. We provide a summary of the comparisons of the estimators, with detailed derivations in \Cref{sec:cre-derivations}. 

\subsection{Completely-randomized experiment.}
First, consider a completely randomized experiment, with $N_0,N_1 \geq 2$. The estimator $\hat\tau$ is the usual difference-in-means estimator.

\paragraph{The CC estimator.} As noted above, the CC estimator reduces to the usual Neyman variance estimator, $\frac{1}{N_1} \hat{s}_1^2 + \frac{1}{N_0} \hat{s}_0^2$. Its conservativeness is $\DeltaCC = \frac{1}{N} \Var_i(\tau_i)$. 

\paragraph{The AS estimator.} The AS estimator can be written as $\VhatAS = \frac{N-1}{N} \VhatNeyman  + \frac{1}{N} \hat\tau^2$. Its conservativeness is $\DeltaAS = \frac{1}{N} \E_i[\tau_i^2]$. We thus see that while the CC estimator's conservativeness is proportional to the finite-population variance of treatment effects, the AS estimator's conservativeness is proportional to the second moment of treatment effects, $\E_i[\tau_i^2]$. We thus expect the AS estimator to be substantially more conservative than the CC estimator when the magnitude of the average treatment effect ($\E_i[\tau_i]^2$) is large relative to the heterogeneity $\Var_i(\tau_i)$.  We note that because $\Var_i(\tau_i)$ is defined as the finite-population variance with denominator $(N-1)$, it is not the case that the conservativeness of the CC estimator, $ \frac{1}{N} \Var_i(\tau_i) = \frac{1}{N} \frac{N}{N-1} (\E_i[\tau_i^2] - \E_i[\tau_i]^2) $ is necessarily weakly smaller than that of the AS estimator, $\frac{1}{N} \E_i[\tau_i^2]$. However, the conservativeness of the AS estimator can never be much smaller than that of the CC estimator: $\DeltaAS \geq (1-\tfrac{1}{N}) \DeltaCC$, so the proportional reduction in conservativeness from using AS over CC is never more than $1/N$. By contrast, if $\tau = \E_i[\tau_i]$ is non-zero but treatment effects are homogeneous, then the CC estimator is unbiased for the variance of $\hat\tau$ while the AS estimator is conservative by $\frac{1}{N} \tau^2$, which can be large if the treatment effect is large in magnitude. Thus, it is possible to have $\DeltaAS \gg \DeltaCC$ if treatment effects can be large but relatively homogeneous. 

\paragraph{Frobenius Norm Minimization.} Consider the \citet{harshaw_optimized_2026} variance estimator obtained by minimizing the Frobenius norm of $M$.\footnote{As in \citet{harshaw_optimized_2026}, we consider minimizing the Frobenius norm of $||M||_F$, rather than of the slack $||M-V||_F$. This corresponds to setting $g(S)= ||S+V||_F$ in \eqref{eqn:g-optimization}.} \Cref{prop:collapsed-as-frobenius-optimal} in the appendix shows that the AS estimator is the unique solution to the Frobenius norm optimization when the experiment is balanced $(N_1=N_0 \geq 2)$. Thus, the comparison between CC and AS also applies to a comparison of CC and Frobenius-norm minimization in a balanced experiment.

\paragraph{Trace Norm Minimization.} Next consider the \citet{harshaw_optimized_2026} estimator using the trace as the objective. Applied to the CRE, it turns out that the trace optimization does not have a unique solution. Both the CC estimator and AS estimator can be shown to be trace-optimal. If one numerically solves the optimization \eqref{eqn:g-optimization} with the trace as the objective, which estimator is obtained will depend on how the solver chooses between multiple optima.

\subsection{Cluster-randomized experiment}

Next consider a cluster-randomized experiment with \(G\) clusters each of size \(m = \tfrac{N}{G}>1\), of which \(G_1\) are assigned to treatment and \(G_0\) are assigned to control.\footnote{The derivations in \Cref{sec:cre-derivations} allow for $m=1$ to nest the completely randomized experiment as a special case. Since we have already discussed the CRE, we focus on $m>1$ in this subsection for ease of exposition.} Let
\[
\bar Y_g(d) := \frac{1}{m} \sum_{i:g(i)=g}Y_i(d), \qquad \bar Y_g := \bar Y_g(D_g), \qquad \bar\tau_g = \frac{1}{m} \sum_{i:g(i)=g} (Y_i(1)-Y_i(0)) =  \bar Y_g(1)- \bar Y_g(0).
\]
The estimator \(\hat\tau\) is the difference in means across treated and control units, $\hat\tau = \frac{1}{N_1} \sum_{i : D_{g(i)} =1} Y_i - \frac{1}{N_0} \sum_{i : D_{g(i)} =0} Y_i$, where $N_1=mG_1,N_0=mG_0$ denote the number of treated and untreated units. Equivalently, it is the difference-in-means across clusters of cluster-level averages, $\hat\tau =\frac{1}{G_1} \sum_{g: D_g =1} \bar Y_g - \frac{1}{G_0} \sum_{g: D_g =0} \bar Y_g$.

\paragraph{The CC estimator.} The CC estimator reduces to the usual Neyman variance estimator applied to cluster-level averages,
\[
\VhatNeymanCl
=
\frac{1}{G_1}\hat s_{1,\mathrm{cl}}^2
+
\frac{1}{G_0}\hat s_{0,\mathrm{cl}}^2,
\]
where \(\hat s_{d,\mathrm{cl}}^2\) is the sample variance of the observed cluster level averages \(\bar{Y}_g\) among clusters assigned to treatment arm \(d\). This is equivalent to applying the CR2 cluster-robust standard errors to the regression of $Y_i$ on $D_{g(i)}$. The conservativeness of the CC estimator is proportional to the variance of treatment effects across clusters, $\DeltaCC=\frac{1}{G}\Var_g(\bar\tau_g)$.

\paragraph{The AS estimator.} There are two possible implementations of the AS estimator in the cluster-randomized setting. First, one can directly apply the AS estimator to the individual-level outcomes, in which case a pair $Y_i(1),Y_j(0)$ is jointly unobservable if and only if $i$ and $j$ are in the same cluster $g$. Alternatively, one can view the cluster-randomized trial as a completely-randomized experiment, viewing the cluster-level averages as the potential outcomes. Then the only jointly unobservable pairs are the two potential outcomes for the same cluster, $\bar{Y}_g(1),\bar{Y}_g(0)$. Denoting the direct and collapsed version of AS as $\VhatAS$ and $\VhatAScollapsed$, we have that
\[ \VhatAScollapsed = \frac{G-1}{G} \VhatNeymanCl + \frac{1}{G} \hat\tau^2, \qquad   \VhatAS = \VhatAScollapsed + \VhatWithin,  \]
\noindent where
\[
\VhatWithin
=
\frac{m-1}{mG^2}
\left[
\frac{G}{G_1}\sum_{g:D_g=1}s_g^2
+
\frac{G}{G_0}\sum_{g:D_g=0}s_g^2
\right],
\qquad
s_g^2
=
\frac{1}{m-1}
\sum_{i:g(i)=g}
\left(Y_i-\bar Y_g\right)^2.
\]
The collapsed AS estimator is analogous to the AS estimator in a randomized experiment, replacing individual outcomes with cluster-level averages. The direct AS estimator contains the additional term, $\VhatWithin$, which is a weighted sum of the sample variances of the outcome \emph{within} clusters, $s_g^2$. Since $\VhatWithin \geq 0$, with strict inequality if there is within-cluster heterogeneity, $\VhatAScollapsed \leq \VhatAS$. Thus, the direct AS estimator is dominated by the collapsed one. 

The comparison between the collapsed version of AS and the CC estimator is analogous to that in the individual-level CRE. The CC estimator's conservativeness is proportional to the cross-cluster variance of treatment effects, $\frac{1}{G} \Var_g(\bar\tau_g)$, while the collapsed AS estimator has conservativeness proportional to the uncentered second moment of cluster-level treatment effect averages, $\frac{1}{G} \E_g[\bar\tau_g^2]$. The conservativeness of the collapsed AS estimator can be smaller than that of the CC estimator only by the finite-sample factor \((G-1)/G\), attained when \(\tau=0\). By contrast, if cluster-level average treatment effects are homogeneous and nonzero, then the CC estimator is unbiased for the variance while collapsed AS is conservative by \(\tau^2/G\). The direct AS estimator is more conservative than the collapsed AS estimator, adding the within-cluster heterogeneity term \(\E[\VhatWithin]\), which is strictly positive if there is within-cluster heterogeneity in potential outcomes. We thus see directly that the direct AS estimator is inadmissible, since it is dominated by the collapsed AS estimator. 

\paragraph{Frobenius norm minimization.} Similar to the individual-level CRE, for balanced cluster assignment ($G_0=G_1$), the Frobenius norm optimization reduces to the collapsed AS estimator. The comparison of the CC estimator and collapsed AS estimator above thus also applies to a comparison of the CC estimator and Frobenius-norm optimizing estimator. 

\paragraph{Trace norm minimization.} As in the individual-level CRE, trace minimization is not unique. Both the CC estimator and the collapsed AS estimator are among the optimal solutions. The non-collapsed AS estimator is not trace-optimal when \(m>1\), as it is dominated by the collapsed AS estimator. Thus, if one solves the trace optimization numerically, the selected estimator depends on how the solver chooses among multiple solutions.

\subsection{Comparisons in more complex designs\label{subsec:complex-designs}}
We next consider a comparison of the CC estimator versus other estimators in more complex designs. 

We first note that, by definition, two admissible estimators cannot uniformly dominate each other for all sets of potential outcomes. As discussed above, the CC estimator is admissible (among quadratic-form estimators), as is any estimator that solves \eqref{eqn:g-optimization} for an objective $g$ that is strictly monotone in the Loewner order. Hence, neither the CC estimator nor the \citet{harshaw_optimized_2026} estimators based on the trace or Frobenius norms can uniformly dominate each other for all choices of potential outcomes. As shown above, the CC estimator is exact when the component-level treatment effect sums $\tau_c$ are homogeneous, and thus must perform at least as well as other quadratic-form estimators under homogeneity. Admissibility implies, however, that there must be orientations of the potential outcomes with sufficient heterogeneity in the $\tau_c$ such that the trace or Frobenius optimizers outperform the CC estimator.\footnote{This assumes that the Frobenius- and trace-optimal estimators are not equal to the CC estimator. We showed above that CC is one of the trace-optimal solutions in a CRE, for example.} 

By contrast, the AS estimator need not be admissible. \citet{harshaw_optimized_2026} show, for example, that the AS estimator is inadmissible in an individual-level CRE with $N_1=N_0=1$. The following result provides a complete characterization of when the AS estimator is admissible: AS is admissible if and only if all connected components $\mathcal{K}_c$ have at most two elements. 

\begin{prop} \label{prop:as-admissibility}
	The Aronow-Samii estimator $\VhatAS$ is admissible if and only if every connected component of the joint unobservability graph has 2 or fewer elements, $|\mathcal{K}_c| \leq 2$ for all $c$. 
\end{prop}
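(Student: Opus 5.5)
Admissibility of $\VhatAS$ means there is no $Q \neq V+D+A$ with $Q_{kk'}=0$ whenever $\pi_{kk'}=0$, $Q-V\succeq 0$, and $V+D+A-Q\succeq 0$. Write $L := D+A$ for the AS slack (the signless Laplacian) and $R := Q-V$ for a competing slack. Two facts about $R$ are immediate:
\begin{itemize}
\item Since $Q\in\mathcal{M}$ and $v_{kk'}=-1$ whenever $\pi_{kk'}=0$, we have $R_{kk'}=1$ exactly on the edges of the joint unobservability graph.
\item Admissibility therefore amounts to: there is no PSD $R\neq L$ with $R_{kk'}=1$ on edges and $L-R\succeq 0$.
\end{itemize}
Everything decomposes by connected component, because $L$ is block diagonal and the edge constraints live inside components. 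I will treat the two directions separately.

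\emph{Sufficiency ($|\mathcal{K}_c|\le 2$ for all $c$).} Set $E := L-R\succeq 0$. Off the blocks, $L$ is zero, so $E_{kk'}=-R_{kk'}$ for $k,k'$ in different components. On a two-element component $\{k,k'\}$ with an edge, $L$ restricted to it is $\twobytwomat{1}{1}{1}{1}$ and $R_{kk'}=1$, so $E_{kk'}=0$. The plan is to show $E=0$. Each vector $u_c = e_k - e_{k'}$ lies in the kernel of $L$. Since $L-E=R\succeq 0$, we get $u_c'Eu_c \le u_c'Lu_c=0$, and $E\succeq 0$ then forces $Eu_c=0$. This gives $E_{kk}=E_{kk'}=E_{k'k'}$ and forces the rows of $k$ and $k'$ to be equal.

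Next use that $E$ is PSD with zero off-diagonal in each block, so $E_{kk}=E_{kk'}=0$. A zero diagonal entry of a PSD matrix kills its whole row, so every row in a two-element component vanishes. Singleton components have no edges, so $L_{kk}=0$; then $E_{kk}\le 0$ from $L-E\succeq 0$ diagonal-wise is not quite the right statement, but $0\le E_{kk}=L_{kk}-R_{kk}=-R_{kk}\le 0$ because both $E$ and $R$ are PSD. Hence $E_{kk}=0$ and that row vanishes too. Components with no edge are handled the same way, since on them $L$ has a zero diagonal. We conclude $E=0$, which is admissibility.

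\emph{Necessity (some $|\mathcal{K}_c|\ge 3$).} This is the main obstacle, because it needs an explicit improvement. A connected component with at least three nodes contains a path $a-b-c$, so node $b$ has degree $d_b\ge 2$. I will look for a PSD $R=L-E$ with $E\succeq 0$ nonzero, supported on the diagonal of that component, and keeping the edge entries at $1$. The idea is to reduce degrees: take $R = \sum_{\text{edges}}(\text{PSD rank-one pieces})$ more efficiently than AS does. For a star around $b$ with leaves $a_1,\dots,a_r$ ($r\ge 2$), the AS slack is $\sum_j (e_b+e_{a_j})(e_b+e_{a_j})'$, which puts $r$ on the $b$ diagonal. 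Replacing this by
$$\sum_j \Bigl(t_j e_b + t_j^{-1} e_{a_j}\Bigr)\Bigl(t_j e_b + t_j^{-1} e_{a_j}\Bigr)'$$
keeps the off-diagonal entries at $1$ and changes the diagonal to $t_j^2$ on $b$ and $t_j^{-2}$ on the leaves. That alone does not give domination, since leaf entries increase.

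So the better approach is to exploit cross terms that are free, namely pairs in the same component with $\pi>0$. For the path $a-b-c$ with $a,c$ non-adjacent, I would replace the two AS terms with the single rank-one slack $(e_a+e_b+e_c)(e_a+e_b+e_c)'$ adjusted so that $R_{ac}$ is free. Then I need $L-R\succeq 0$ restricted to $\{a,b,c\}$, where $L$ restricted there is $\begin{pmatrix}1&1&0\\1&2&1\\0&1&1\end{pmatrix}$. Choose $R$ restricted to $\{a,b,c\}$ as $\begin{pmatrix}1&1&1\\1&1&1\\1&1&1\end{pmatrix}$ if $a\not\sim c$, giving $E$ equal to $\begin{pmatrix}0&0&-1\\0&1&0\\-1&0&0\end{pmatrix}$, which is not PSD. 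I therefore expect to need a careful choice, for example $R=ww'$ with $w=(1,1,1)$ scaled differently. I would first try a general $2\times2$-parametrized ansatz and solve for $E\succeq0$, $E\ne0$. If the path fails, I would fall back on the triangle case $a\sim c$ and search for $R$ minimizing the $b$-diagonal. Identifying this explicit dominating $R$ in both the path and triangle cases, and then embedding it in a general component while leaving the other blocks at $L$, is the crux of the proof.
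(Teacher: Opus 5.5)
Your sufficiency argument is correct, and it takes a different route from the paper's. The paper shows that $D+A$ is trace-optimal: by \Cref{lem:trace-principal-submatrix}, every feasible slack $T$ has $\tr(T)\ge\tr(D)$. It then invokes strict Loewner monotonicity of the trace. You instead show directly that any dominating difference $E=L-R\succeq 0$ must vanish, using $e_k-e_{k'}\in\mathrm{null}(L)$ on two-element components and the fact that a zero diagonal entry of a PSD matrix kills its row. That direction needs no repair.

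The necessity direction, however, is not proved. You never exhibit a feasible $R\neq L$ with $L-R\succeq 0$, and the plan you sketch has a second problem. Checking $L_{SS}-E_{SS}\succeq 0$ on the principal submatrix for $S=\{a,b,c\}$ does not imply $L-E\succeq 0$ once the component is larger than $S$. For example, $L=\twobytwomat{1}{1}{1}{1}$ and $E=\tfrac12 e_1e_1'$ pass the $1\times 1$ check, yet $\det(L-E)<0$. So ``embedding'' a local improvement while leaving the rest of $L$ unchanged needs a global argument, not a $3\times 3$ computation.

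The missing idea is a description of $\mathrm{null}(L)$. Since $x'Lx=\frac12\sum_{k,k':\pi_{kk'}=0}(x_k+x_{k'})^2$, a null vector must satisfy $x_k=-x_{k'}$ across every edge. On a connected component this forces $x=0$ if the component is non-bipartite, and $x\propto\mathbf{1}_{U_1}-\mathbf{1}_{U_2}$ if it is bipartite with sides $U_1,U_2$. With this fact, your path $a-b-c$ finishes the proof in two cases.

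\emph{Case $a\not\sim c$.} Both $a$ and $c$ are adjacent to $b$, so they lie on the same side, and $u=e_a-e_c\perp\mathrm{null}(L)$, i.e.\ $u\in\mathrm{range}(L)$. Then \Cref{lem:psd-transformation-in-range} gives $L-\epsilon uu'\succeq 0$ for small $\epsilon>0$. Since $uu'$ touches only diagonal entries and the non-edge $(a,c)$, the edge constraints $R_{kk'}=1$ are preserved, and $\epsilon uu'\neq 0$ gives strict domination. When the component is exactly the path, $\epsilon=\tfrac12$ works and yields $R=\tfrac12 ww'$ with $w=(1,2,1)'$; this is the rescaled rank-one choice you were looking for.

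\emph{Case $a\sim c$.} The component contains a triangle, so it is non-bipartite and its block of $L$ is positive definite. You can therefore subtract $\epsilon I$ on that block; the diagonal entries are unconstrained because $\pi_{kk}=\pi_k>0$.

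This is essentially the paper's proof. It splits on bipartite versus non-bipartite components and, in the bipartite case, uses two same-side vertices $k_1,k_2$ with $u=e_{k_1}-e_{k_2}$.
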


Note that in causal settings, all of the potential outcomes for the same unit are jointly unobservable and hence in the same connected component. Thus, in a causal setting where we are interested in a contrast involving multiple potential outcomes for each unit, AS is admissible if and only if (a) the contrast involves two potential outcomes per unit, and (b) any pair of potential outcomes involving distinct units is jointly observable. This implies that AS is admissible in a completely randomized experiment with at least two units per treatment arm.\footnote{\citet{harshaw_optimized_2026} show that AS is inadmissible in a CRE involving only one unit per treatment arm. In this case, $Y_1(1),Y_2(1)$ are jointly unobservable, since units $i=1,2$ cannot simultaneously be treated.} It follows that the AS estimator will be inadmissible in complex experimental designs where there are jointly unobservable pairs involving multiple units, or the contrast involves more than two potential outcomes per unit, and hence there must be a quadratic-form estimator that dominates AS, although it need not necessarily be the CC estimator, or the Frobenius or trace-optimal solutions. 

\paragraph{Comparison to other estimators that are exact under homogeneity.} A notable feature of the CC estimator is that it is exact under treatment effect homogeneity across connected components, which nests the special case of the sharp null of no treatment effects, since then $\tau_c =0$ for all $c$, and so $\Var_c(\tau_c)=0$. Our next result characterizes the class of quadratic-form variance estimators that are exact under the sharp null, as well as the subclass that are also exact under treatment effect homogeneity across components. This characterization implies a sense in which the CC estimator is optimal in the subclass of estimators that are exact under component homogeneity. For this comparison, we focus on the setting where the $Y_k$ correspond to a weighted average treatment effect. To be precise, we will say that the $Y_k$ correspond to a weighted average treatment effect if there exist potential outcomes $Y_i(q)$ for $i=1,\ldots,N$, $q=0,\ldots,Q$ such that $Y_k = \omega_{i,q} Y_i(q)$, where the weights $\omega_{i,q}$ are non-zero and sum to 0 within unit, $\sum_q \omega_{i,q} = 0$, and such that $W_k = 1\{D_i = q\}$.\footnote{One could allow for the number of potential outcomes to depend on the unit, so that unit $i$ has $Q_i$ potential outcomes, at the cost of additional notation. This would accommodate network settings where it may not be possible for certain units to realize certain exposure mappings.} We will then say that a variance estimator $\VhatM$ of the form in \eqref{eqn:hat-v-m} is \emph{sharp-null-exact} if for any $Y$ corresponding to a weighted average treatment effect for potential outcomes such that $Y_i(q)=Y_{i}(q')$ for all $i$ and all $q,q'$, it is the case that $\frac{1}{K^2}Y' M Y= \frac{1}{K^2}Y' V Y = \Var(\hat\tau)$. We will further say that $\VhatM$ is exact under component homogeneity if $\frac{1}{K^2}Y' M Y= \frac{1}{K^2}Y' V Y = \Var(\hat\tau)$ whenever $\tau_c = \sum_{k:k \in \mathcal{K}_c} Y_k$ is constant in $c$.  We then have the following result, which characterizes the set of valid estimators $\VhatM$ that are sharp null exact and exact under component homogeneity. 

\begin{prop} \label{prop:sharp-null-exact}
Suppose that the $Y_k$ correspond to a weighted average treatment effect, and consider an estimator $\VhatM$ that takes the form in \eqref{eqn:hat-v-m}.

\begin{enumerate}
	\item \emph{(Characterization of sharp-null exact $\VhatM$).} $\VhatM$ is weakly upwards-biased, $\E[\VhatM] \geq \Var(\hat\tau)$, and sharp-null-exact if and only if there exists a $C \times C$ correlation matrix $\Lambda$ such that
\begin{equation} T(M) := M-V = H \Lambda H', \quad \Lambda \succeq 0, \quad \Lambda_{cc} =1 \text{ for all } c, \label{eqn:h-lambda-h} 
\end{equation}
where $H$ is the $K \times C$ matrix indicating component-membership, i.e. such that $H_{kc} = 1\{k \in \mathcal{K}_c\}$. The conservativeness of any such $\VhatM$ depends only on the component-level sums, $\tau_c = \sum_{k: k \in \mathcal{K}_c} Y_k$. In particular,
\[ \E[\VhatM] - \Var(\hat\tau) = \frac{1}{K^2}\boldsymbol{\tau}' \Lambda \boldsymbol{\tau}, \]
where $\boldsymbol{\tau}$ is the vector with elements $\tau_c$.  When $C>1$, the CC estimator corresponds to the choice $\Lambda_{CC}$ which has elements 1 on the diagonal and $-1/(C-1)$ on the off-diagonal.

\item
\emph{(CC estimator minimizes the operator norm among estimators exact under component homogeneity.)} Suppose that $\VhatM$ is weakly upwards-biased, $\E[\VhatM] \geq \Var(\hat\tau)$, and exact under component homogeneity. Then $T(M)$ takes the form in \eqref{eqn:h-lambda-h} for $\Lambda$ such that $\Lambda \mathbf{1}_C = 0$, for $\mathbf{1}_C$ the $C \times 1$ vector of 1s. Among such $\Lambda$, the choice $\Lambda_{CC}$ minimizes the operator norm $||\Lambda||_{op}$ (provided $C > 1$ so $\Lambda_{CC}$ is well-defined).
\end{enumerate}
\end{prop}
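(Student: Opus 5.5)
The plan is to work throughout with the slack matrix $T = T(M) = M - V$. Two facts about $T$ drive everything.

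First, since $m_{kk'}=0$ whenever $\pi_{kk'}=0$, and $v_{kk'} = -1$ for such pairs, every admissible $M$ has $T_{kk'} = 1$ whenever $\pi_{kk'}=0$. The remaining entries of $T$ (those with $\pi_{kk'}>0$) are free.

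Second, as noted before \eqref{eqn:g-optimization}, upward bias for all $Y$ is equivalent to $T \succeq 0$. For a PSD matrix, $Y'TY=0$ forces $TY=0$. So sharp-null exactness says precisely that $T$ annihilates the subspace $U$ spanned by the vectors $u_i$. Here $u_i$ has entries $\omega_{i,q}$ on unit $i$'s block and zeros elsewhere; these are exactly the $Y$ arising when $Y_i(q)\equiv y_i$ for all $q$.

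\emph{Part 1, necessity.} I first fix a unit $i$ and look at the diagonal block $T_{ii}$ indexed by its $Q+1$ potential outcomes. All off-diagonal pairs within a unit are jointly unobservable, so $T_{ii} = \mathbf{1}\mathbf{1}' + \mathrm{diag}(t_q - 1)$ for the unknown diagonal values $t_q$. The block of $Tu_i$ on unit $i$ is $T_{ii}\omega_i$. Since $\mathbf{1}'\omega_i=0$, this equals $\mathrm{diag}(t_q-1)\omega_i$, and it must vanish. Because every $\omega_{i,q}\neq 0$, this forces $t_q = 1$, so $T_{ii}=\mathbf{1}\mathbf{1}'$.

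Then $e_{(i,q)}-e_{(i,q')}$ satisfies $x'Tx = 0$, and PSD-ness puts it in $\ker T$. Hence all rows of $T$ belonging to the same unit coincide, and $T = H_u \Gamma H_u'$ for some unit-level PSD matrix $\Gamma$ with $\Gamma_{ii}=1$; here $H_u$ is the unit-membership matrix.

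Next I propagate across units, which I expect to be the main obstacle, mainly for bookkeeping. If some potential outcome of unit $i$ and some potential outcome of unit $j$ are jointly unobservable, then $\Gamma_{ij} = T_{kk'} = 1$. Together with $\Gamma_{ii}=\Gamma_{jj}=1$ and $\Gamma\succeq 0$, this gives $(e_i-e_j)'\Gamma(e_i-e_j)=0$, so $e_i - e_j \in \ker\Gamma$ and rows $i$ and $j$ of $\Gamma$ agree. Since every edge of the joint unobservability graph is either within a unit or between two units, connectivity of $\mathcal{K}_c$ lets me chain this along paths. Therefore $\Gamma$ is constant on component blocks, i.e.\ $T = H\Lambda H'$ with $\Lambda\succeq 0$ and $\Lambda_{cc}=1$.

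\emph{Part 1, sufficiency.} Conversely, if $T=H\Lambda H'$ with $\Lambda_{cc}=1$, then $T_{kk'}=1$ for every within-component pair, which includes all pairs with $\pi_{kk'}=0$. Hence $M=V+T\in\mathcal{M}$, and $T\succeq 0$ because $\Lambda \succeq 0$. Moreover $Y'TY = \boldsymbol\tau'\Lambda\boldsymbol\tau$, which gives the conservativeness formula. Under the sharp null each $\tau_c=\sum_{i\in\mathcal{I}_c} y_i\sum_q\omega_{i,q}=0$, so the estimator is exact. The CC case follows from $S = H\Lambda_{CC}H'$, by the definition of $s_{kk'}$.

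\emph{Part 2.} The sharp null is a special case of component homogeneity (all $\tau_c = 0$), so Part 1 applies and $T=H\Lambda H'$. I will exhibit a $Y$ with $\tau_c = 1$ for all $c$: put nonzero mass on a single potential outcome in each component, which is allowed since $Y$ ranges over arbitrary potential outcomes. Exactness then gives $\mathbf{1}_C'\Lambda\mathbf{1}_C = 0$, and PSD-ness yields $\Lambda\mathbf{1}_C=0$.

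For the norm bound, $\Lambda$ is PSD with trace $C$ and rank at most $C-1$. Its nonzero eigenvalues sum to $C$ over at most $C-1$ values, so $\|\Lambda\|_{op}\ge C/(C-1)$. Finally, $\Lambda_{CC} = \frac{C}{C-1}(I - \mathbf{1}\mathbf{1}'/C)$ attains exactly this value, which completes the proof.
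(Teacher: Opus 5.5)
Your proposal is correct and follows essentially the same route as the paper's proof. In both, you use $T u_i = 0$ and the nonzero weights to force the within-unit blocks of $T$ to be all ones, collapse to a unit-level matrix, and propagate row equality along joint-unobservability paths to get $T = H\Lambda H'$; Part 2 then combines $\Lambda\mathbf{1}_C = 0$ with the trace/rank eigenvalue bound. The differences are minor: you leave implicit why $\Gamma$ and $\Lambda$ are PSD (the paper uses full column rank of the membership matrices), and you omit uniqueness of $\Lambda_{CC}$, which the paper proves but the statement does not require.
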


The first part of the result shows that all quadratic form variance estimators that are weakly upwards-biased and sharp-null exact have conservativeness equal to a quadratic-form of the component-level sums $\tau_c$ and a correlation matrix $\Lambda$. The CC estimator corresponds to the maximal negative equicorrelation choice with $-1/(C-1)$ on the off-diagonal. The second part of \Cref{prop:sharp-null-exact} shows that of all estimators $\VhatM$ that are unbiased when treatment effects are homogeneous across components, the choice $\Lambda_{CC}$ minimizes the operator norm. 

\begin{rem}[Alternative variance estimators suggested by \Cref{prop:sharp-null-exact}] 
The formulation in \eqref{eqn:h-lambda-h} also makes it straightforward  to construct variance estimators that are exact under the sharp null and satisfy other criteria (e.g. minimize norms other than the operator norm). This formulation may help reduce the computational burden of implementing the numerical optimization suggested by \citet{harshaw_optimized_2026}; if we are content with a variance estimator that is sharp-null-exact, we can optimize only over the $C (C-1) / 2$ unique terms in $\Lambda$, rather than the $K(K+1)/2$ terms in $T$. \Cref{prop:sharp-null-exact} also suggests a way of constructing a variance estimator that is exact under \emph{individual}-level homogeneous effects rather than component-level homogeneous effects. Recall that the CC estimator is exact when the component-sums $\tau_c$ are homogenous. If components are the same size, then homogeneity of the $\tau_c$ is implied by homogeneous treatment effects at the individual level. However, if the components are different sizes, then homogeneity in treatment effects at the individual level will imply that $\tau_c \propto n_c$, for $n_c = |\mathcal{K}_c|$ the size of component $c$.\footnote{This assumes that $\omega_{i,d}\equiv \omega_d$, so all units get the same weight.} \Cref{prop:sharp-null-exact} implies that one could obtain a variance estimator that is exact under individual-level homogeneous effects by choosing a $\Lambda$ satisfying \eqref{eqn:h-lambda-h} and the constraint $\Lambda \mathbf{n} =0$, where $\mathbf{n}=(n_1,\ldots,n_C)'$, rather than the constraint $\Lambda \mathbf{1}_C = 0$ satisfied by the CC estimator.\footnote{It is possible to construct such a $\Lambda$ if and only if $\max_c n_c \leq \sum_c n_c /2$, so no component accounts for more than half of the total observations. This follows from the fact that $\Lambda$ is a correlation matrix and therefore $\Lambda_{cc'} \in [-1,1]$. The $c$th element of the constraint $\Lambda \mathbf{n}=0$ then yields $n_c = -\sum_{c': c' \neq c} \Lambda_{cc'} n_{c'}$, which together with the bound on $\Lambda$ implies that $n_c \leq \sum_{c': c' \neq c} n_{c'} = \sum_{c'} n_{c'} - n_c$, which yields the desired inequality. It is straightforward to verify that this condition is also sufficient by constructing the correlation matrix appropriately.} 
\label{rem:alt-estimators}
\end{rem}

\paragraph{Comparison to spatial HAC estimators.} In settings with spatial treatments, similar to \Cref{example:spatial}, a common approach is to use spatial-HAC variance estimators, such as \citet{conley_gmm_1999}. Spatial-HAC estimators were originally developed from the model-based perspective, but recent work has also considered design-based justifications for spatial-HAC in specific settings. \citet{wang_design-based_2025} consider a setting similar to \Cref{example:spatial} and show that spatial HAC with a particular choice of bandwidth is valid for the variance of a H\'ajek-style estimator when (a) the site-level assignments $T_s$ are assigned independently across sites, and (b) treatment effects satisfy a spatial ``homophily'' assumption. In \Cref{sec:hac-comparison-details}, we explore the properties of spatial HAC variance estimators and compare them to the CC estimator. We show that when treatment assignments satisfy a local dependence assumption (which is implied by the Bernoulli assignment scheme in \citet{wang_design-based_2025}), spatial HAC is guaranteed to be upwards-biased for $\Var(\hat\tau)$ without restrictions on the potential outcomes if and only if the HAC-kernel $\KHAC$ is PSD. When this is the case, we show that the conservativeness of the HAC estimator depends on the heterogeneity of treatment effects across a coarsening of the connected components used by the CC estimator, and thus the two estimators have similar properties. However, it need not necessarily be the case that $\KHAC$ is PSD, in which case the HAC variance estimator requires restrictions on the potential outcomes, such as the homophily assumption in \citet{wang_design-based_2025}, to ensure its validity. We also show that under local dependence, the positive semidefinite projection of HAC is upwards-biased from the design-based perspective without restrictions on the potential outcomes, but may be conservative even under the sharp null. By contrast, the CC estimator does not require a local dependence assumption, is valid without restrictions on the potential outcomes, and is unbiased under the sharp null. On the other hand, the CC estimator requires multiple connected components ($C>1$), which is not required by spatial-HAC estimators. The two approaches therefore appear to be complementary in spatial settings. 

\section{Consistency\label{sec:consistency}}

Our results so far give conditions under which $\VhatCC$ is an unbiased estimator of an upper bound on $\Var(\hat\tau)$. We now formalize conditions under which it is asymptotically consistent for an upper bound on the asymptotic variance of $\hat\tau$. Recall that $\hat\tau$ is a Horvitz-Thompson estimator over items indexed by $k$, while $\VhatCC$ is a Horvitz-Thompson estimator over items indexed by the pairs $(k,k')$ with $\pi_{kk'}>0$. Thus, general sufficient conditions for convergence of Horvitz-Thompson estimators, such as those in \citet{delevoye_consistency_2020}, apply immediately to both estimators. In what follows, we provide a simple set of sufficient conditions that yield consistency, rather than aiming for the most general set of conditions. We consider a triangular array of finite populations of growing size indexed by $K$, with outcomes $Y_{1,K},\ldots,Y_{K,K}$ and assignment probabilities $\pi_{kk',K}$ depending on the population size $K$. For ease of exposition, we leave the subscript $K$ implicit in the notation where not needed for clarity. 

We begin with simple conditions under which $\hat\tau$ is consistent.
\begin{asm}[Assumptions for consistency of $\hat\tau$]  \label{asm:first-order-ht-convergence}
	\leavevmode
	\begin{enumerate}[(i)]
		\item 
		(Bounded operator norm of $V$) $s_K := ||V||_{op}/K \to 0$.
		
		\item 
		(Bounded second moment of $Y$): $\frac{1}{K} \sum_k Y_k^2 = O(1)$.
		\end{enumerate}
\end{asm}

\begin{lem} \label{lem:tauhat-consistency}
	Under \Cref{asm:first-order-ht-convergence}, $\Var(\hat\tau) = O(s_K)$, and hence by Chebyshev's inequality, $\hat\tau - \tau = O_p(\sqrt{s_K})$.
\end{lem}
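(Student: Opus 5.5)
The plan is to bound the quadratic form in \eqref{eqn:var-tauhat} directly by the operator norm of $V$. Write $Y=(Y_1,\ldots,Y_K)'$. By \eqref{eqn:var-tauhat}, $\Var(\hat\tau) = K^{-2} Y'VY$. The matrix $V$ is symmetric, since $v_{kk'}=v_{k'k}$; it is in fact PSD, being the covariance matrix of the vector $(W_k/\pi_k)_k$. The Rayleigh-quotient bound therefore gives
\[
0 \leq Y'VY \leq \|V\|_{op}\, \|Y\|_2^2 = \|V\|_{op} \sum_k Y_k^2 .
\]

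Dividing by $K^2$ and regrouping,
\[
\Var(\hat\tau) \leq \frac{\|V\|_{op}}{K} \cdot \frac{1}{K}\sum_k Y_k^2 = s_K \cdot \frac{1}{K}\sum_k Y_k^2 .
\]
Part (ii) of \Cref{asm:first-order-ht-convergence} gives $\frac{1}{K}\sum_k Y_k^2 = O(1)$. Hence $\Var(\hat\tau) = O(s_K)$, which is the first claim.

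For the second claim, first note that $\hat\tau$ is unbiased: $\E[W_k]=\pi_k$, so $\E[\hat\tau]=\tau$. Chebyshev's inequality then gives, for any $M>0$,
\[
P\big(|\hat\tau-\tau| > M\sqrt{s_K}\big) \leq \frac{\Var(\hat\tau)}{M^2 s_K} \leq \frac{c}{M^2}
\]
for a constant $c$ and all $K$ large. Given $\varepsilon>0$, choosing $M$ large makes the right-hand side at most $\varepsilon$ uniformly in large $K$, so $\hat\tau-\tau = O_p(\sqrt{s_K})$. If $s_K=0$ for some $K$, then $\Var(\hat\tau)=0$ and $\hat\tau=\tau$ almost surely, so the statement holds trivially for that $K$.

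There is no real obstacle here. The only points to check are that $V$ is symmetric, so the operator-norm bound applies to the quadratic form, and that the normalizations by $K$ line up so that $\|V\|_{op}/K$ multiplies the average second moment. Part (i) of the assumption is used only to interpret the result, as a rate tending to zero; it is not needed for the bound itself.
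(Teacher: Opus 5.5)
Your proposal is correct and follows essentially the same argument as the paper: bound $K^{-2}Y'VY$ by $K^{-2}\|V\|_{op}\|Y\|^2 = s_K \cdot \frac{1}{K}\sum_k Y_k^2$, then apply Chebyshev using $\E[\hat\tau]=\tau$. The only differences are cosmetic. You use the Rayleigh-quotient bound for the symmetric (indeed PSD) matrix $V$ where the paper uses Cauchy--Schwarz with the operator norm, and you also handle the degenerate case $s_K=0$ explicitly.
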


\noindent We note that $||V||_{op} \leq ||V||_{F}$, and so for (i) to hold it suffices that 
$$\frac{1}{K} ||V||_F = \sqrt{\frac{1}{K^2} \sum_{k,k'} \Cov\bigg(\frac{W_k}{\pi_k}, \frac{W_{k'}}{\pi_{k'}} \bigg)^2 } \to 0 ,$$
so that the average squared covariance in normalized sampling indicators is small.

Having established conditions under which $\hat\tau$ is consistent, we now consider conditions under which the CC estimator is consistent for an upper bound on $\Var(\hat\tau)$ in the sense that $\VhatCC/\Var(\hat\tau) \xrightarrow{p} r \geq 1$.

\begin{asm} \label{asm:second-order-ht-convergence}
		\leavevmode
	\begin{enumerate}[(i)]
		\item (Limiting distribution of $\hat\tau$)
		Suppose that $s_K^{-1/2} (\hat\tau - \tau) \xrightarrow{d} L$ for a non-degenerate distribution $L$ with $\Var(L) = \lim_{K \to \infty} s_K^{-1} \Var(\hat\tau) =: V^* \in (0,\infty)$.
		
		\item (Limit of CC slack)
		$s_K^{-1} (\E[\VhatCC]- \Var(\hat\tau)) = s_K^{-1}  \frac{C}{K^2} \Var_c(\tau_c) \to \delta \in [0,\infty)$.
		
		\item (Bounded fourth moment of $Y$) $\frac{1}{K} \sum_k Y_k^4 = O(1)$.
		
		\item (Limited pair-dependence in treatment indicators) 
		\begin{equation} \bigg(\frac{1}{K^4} \sum_{ \substack{(k,k'), (\ell,\ell') \\ \pi_{kk'}>0, \pi_{\ell \ell'}>0}}   \left| \Cov\bigg(\frac{W_k W_{k'}}{\pi_{kk'}} b_{kk'}, \frac{W_{\ell}W_{\ell'} }{\pi_{\ell \ell'}} b_{\ell\ell'} \bigg)\right|^{4/3}\bigg)^{3/4} = o(s_{K}^2) .\label{eqn:pairwise-cov-bound}
	\end{equation}
	\end{enumerate}
\end{asm}

\begin{prop} \label{prop:vhat-consistency}
	Suppose $C>1$ and \Cref{asm:first-order-ht-convergence,asm:second-order-ht-convergence} hold. Then $\VhatCC = \E[\VhatCC] + o_p(s_K)$ and hence $\frac{\VhatCC}{\Var(\hat\tau)} \xrightarrow{p} 1+  \frac{\delta}{V^*} \geq 1$.
\end{prop}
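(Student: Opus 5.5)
The plan is to show that the variance of $\VhatCC$ is $o(s_K^2)$ and then apply Chebyshev's inequality. The ratio statement then follows from \Cref{prop:cc-valid} and \Cref{asm:second-order-ht-convergence}(i)--(ii).

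\textbf{Step 1: expand the variance.} Write
\[
\VhatCC = \frac{1}{K^2}\sum_{(k,k'):\pi_{kk'}>0} Z_{kk'}\, Y_kY_{k'}, \qquad Z_{kk'} := b_{kk'}\frac{W_kW_{k'}}{\pi_{kk'}} .
\]
Then
\[
\Var(\VhatCC) = \frac{1}{K^4}\sum_{(k,k'),(\ell,\ell')} \Cov(Z_{kk'},Z_{\ell\ell'})\, Y_kY_{k'}Y_\ell Y_{\ell'} .
\]

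\textbf{Step 2: apply H\"older.} Use H\"older's inequality on the double sum over pairs of pairs with exponents $4/3$ and $4$:
\[
|\Var(\VhatCC)| \le \Big(\frac{1}{K^4}\sum |\Cov(Z_{kk'},Z_{\ell\ell'})|^{4/3}\Big)^{3/4}\Big(\frac{1}{K^4}\sum_{k,k',\ell,\ell'} Y_k^4Y_{k'}^4Y_\ell^4Y_{\ell'}^4\Big)^{1/4}.
\]
Extending the second sum to all quadruples only enlarges it. That sum factorizes as $\big(\tfrac1K\sum_k Y_k^4\big)^4$, which is $O(1)$ by \Cref{asm:second-order-ht-convergence}(iii). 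The first factor is $o(s_K^2)$ by \eqref{eqn:pairwise-cov-bound}. Hence $\Var(\VhatCC)=o(s_K^2)$.

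\textbf{Step 3: Chebyshev.} Chebyshev's inequality now gives $\VhatCC-\E[\VhatCC]=o_p(s_K)$.

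\textbf{Step 4: the ratio limit.} By \Cref{prop:cc-valid}, $\E[\VhatCC]=\Var(\hat\tau)+\frac{C}{K^2}\Var_c(\tau_c)$. Therefore
\[
s_K^{-1}\VhatCC = s_K^{-1}\Var(\hat\tau) + s_K^{-1}\tfrac{C}{K^2}\Var_c(\tau_c) + o_p(1) \xrightarrow{p} V^*+\delta .
\]
Here the first limit uses the definition of $V^*$ in part (i) and the second uses part (ii). Since $s_K^{-1}\Var(\hat\tau)\to V^*\in(0,\infty)$, the continuous mapping theorem gives
\[
\VhatCC/\Var(\hat\tau)\xrightarrow{p}(V^*+\delta)/V^* = 1+\delta/V^*\ge 1,
\]
where the inequality holds because $\delta\ge0$.

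\textbf{Main obstacle.} The only real work is the bookkeeping in Step 2. One has to make sure the H\"older exponents line up exactly with the form of \eqref{eqn:pairwise-cov-bound}, including the $K^{-4}$ normalization being split as $K^{-3}\cdot K^{-1}$ across the two factors. One also has to check that passing to absolute values, and extending the $Y$-sum to all indices (not just those with $\pi_{kk'},\pi_{\ell\ell'}>0$), is legitimate; it is, since every term in the extended sum is nonnegative. The role of \Cref{asm:first-order-ht-convergence} is only to guarantee $s_K\to0$ so that the normalization is meaningful.
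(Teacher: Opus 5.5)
Your proposal is correct and follows essentially the same route as the paper. You bound the variance of $\VhatCC$ by H\"older's inequality with exponents $4/3$ and $4$, control the covariance factor with \eqref{eqn:pairwise-cov-bound} and the outcome factor with the fourth-moment condition (after extending to all quadruples), apply Chebyshev, and then combine with \Cref{prop:cc-valid} and parts (i)--(ii) of \Cref{asm:second-order-ht-convergence}. The only difference is that the paper proves a vector-valued version, bounding $\E\|\VhatCCbold-\E[\VhatCCbold]\|_F^2$ and using Cauchy--Schwarz on the Frobenius inner products; in the scalar case this reduces exactly to your argument.
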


Part (i) imposes regularity conditions on the asymptotic limit distribution of $\hat\tau$. Specifically, it imposes that after centering and normalizing $\hat\tau$ by the rate $s_K^{-1/2}$, we obtain a non-degenerate limit distribution $L$. If a central limit theorem were applicable to $s_K^{-1/2} (\hat\tau - \tau)$, then $L$ would correspond to a normal distribution with variance $V^*$. Establishing a central limit theorem typically requires stronger assumptions than establishing variance consistency, and so we simply require that $s_K^{-1/2} (\hat\tau - \tau)$ converges to \emph{some} limit distribution; we refer the reader to e.g. \citet{chandrasekhar_general_2024}, \citet{wang_design-based_2025}, \citet{xu_design-based_2026}, among others, for some sufficient conditions for central limit theorems under complicated assignment schemes.  Part (i) also imposes that the variance of the limiting distribution is equal to the limit of the rescaled variance, $s_K^{-1} \Var(\hat\tau)$, as occurs if the sequence $s_K^{-1}(\hat\tau-\tau)^2$ is uniformly integrable. Part (ii) requires that the conservativeness of the CC estimator, which is proportional to the cross-component heterogeneity, is $O(s_K)$ and hence is the same (or smaller) order as $\Var(\hat\tau)$.\footnote{If we instead assume that $s_K^{-1} (\E[\VhatCC]- \Var(\hat\tau)) \to \infty$, then under the remaining conditions of \Cref{prop:vhat-consistency}, $\VhatCC / \Var(\hat\tau) \xrightarrow{p} \infty$, so the CC estimator is asymptotically valid but infinitely conservative.} Part (iii) bounds the fourth moment of $Y_k$. 
	
Part (iv) is perhaps the most substantive assumption, which intuitively, imposes that knowing that one pair $(k,k')$ is observed cannot, on average, be very informative about whether another pair $(\ell,\ell')$ is observed. Recall that to obtain consistency of $\hat\tau$, which is a Horvitz-Thompson estimator over items indexed by $k$, we had to restrict the dependence between the events that $k$ is observed and $k'$ is observed. However, $\VhatCC$ is a Horvitz-Thompson estimator over items indexed by observable pairs $(k,k')$. Thus, to obtain consistency, we have to restrict the dependence between pairs of pairs. More precisely, \eqref{eqn:pairwise-cov-bound} restricts the dependence of pairwise indicators multiplied by the CC-variance weights $b_{kk'}$. Note that when $v_{kk'} \approx 0$, $b_{kk'} \approx -1/(C-1)$ if $k,k'$ are in distinct components, and thus weighting by $b_{kk'}$ reduces the weight placed on covariances involving cross-component pairs whose treatment indicators are weakly correlated.

\section{Extensions}

\subsection{Extension to other estimators\label{sec:non-ht-estimators}}

For simplicity, we have focused our attention on estimating the variance of a Horvitz-Thompson estimator, and provided a variance estimator that is upwards-biased in finite samples. We now show that our approach can be applied to obtain asymptotically valid variance estimators for estimators that have an influence function representation under an appropriate asymptotic sequence of finite populations. We first provide a general sketch for how the approach can be applied to any estimator with an influence function representation under high-level conditions, and then provide lower-level conditions for the case of ordinary least squares estimators. 

\paragraph{Application to estimators with influence function representation.} First, observe that a sum of the form $\frac{1}{K} \sum_{k} W_k Y_k$ can be written as a Horvitz-Thompson estimator $\frac{1}{K} \sum_k \frac{W_k}{\pi_k} \tilde{Y}_k$ for the transformed outcome $\tilde{Y}_k := \pi_k Y_k$. However, a variety of estimators can be well-approximated by a sum of influence functions,
$$ \hat\theta - \theta = \underbracket{ \frac{1}{K} \sum_k W_k \phi(Z_k; \eta)}_{=O_p(r_K)} + o_p(r_K)  $$ 
where $\hat\theta$ is the estimator and $\theta$ its estimand, $r_K$ is the rate of convergence of $\hat\theta$, and $\phi(Z_k;\eta)$ is the influence function, which takes as input observable features $Z_k$ and a nuisance parameter $\eta$, and is such that $\E[\sum_k W_k \phi(Z_k;\eta)] = \sum_k \pi_k \phi(Z_k;\eta) =0$. The leading term in this expansion can be written as
$$ \hat\theta^{Lin}_K = \frac{1}{K} \sum_k W_k \phi(Z_k; \eta) =  \frac{1}{K} \sum_k \frac{W_k}{\pi_k} \psi_k, \qquad \psi_k := \pi_k \phi(Z_k;\eta),$$
a Horvitz-Thompson estimator for the average of the transformed influence functions $\psi_k := \pi_k \phi(Z_k;\eta)$. Under some regularity conditions, it will typically be the case that the asymptotic variance of $r_K^{-1}(\hat\theta - \theta)$ is equal to the asymptotic variance of $r_K^{-1} \hat\theta^{Lin}_K$, so that $r_K^{-1} (\hat\theta - \theta) \xrightarrow{d} L$ where $\Var(L) = \lim_{K \to \infty} r_K^{-2} \Var(\hat\theta^{Lin}_K)$. If the $\psi_k$ were directly observed, then it would be straightforward to apply the CC estimator (or other estimators described above) to estimate $\Var(\hat\theta^{Lin}_K)$ by setting $Y_k= \psi_k$, i.e. using $\widehat{V}_{\psi} :=  \frac{1}{K^2} \sum_{k,k': \pi_{kk'}>0} b_{kk'} \frac{W_{k} W_{k'}}{ \pi_{kk'}} \psi_k \psi_{k'}$. This would yield a consistent estimator for an upper bound on the asymptotic variance if the design satisfies the conditions in \Cref{prop:vhat-consistency} above with $Y_k = \psi_k$. In practice $\psi_k$ is typically not observed since it depends on the nuisance parameter $\eta$, and so we replace $\psi$ with $\hat\psi_k := \pi_k \phi(Z_k;\hat\eta) = \psi_k + \epsilon_k$, for $\epsilon_k:= \hat\psi_k - \psi_k$ the estimation error. This leads to the feasible variance estimator 
$$ \widehat{V}_{\hat\psi} :=  \frac{1}{K^2} \sum_{k,k': \pi_{kk'}>0} b_{kk'} \frac{W_{k} W_{k'}}{ \pi_{kk'}} \hat\psi_k \hat\psi_{k'} = \widehat{V}_{\psi} +  \frac{1}{K^2} \sum_{k,k': \pi_{kk'}>0} b_{kk'} \frac{W_{k} W_{k'}}{ \pi_{kk'}} (2 \psi_k \epsilon_{k'} + \epsilon_k \epsilon_{k'}) . $$ 
Under regularity conditions that control the estimation error $\epsilon_k$, the equation in the previous display will imply $\widehat{V}_{\hat\psi} = \widehat{V}_{\psi} + o_p(r_K^2)$, so that the error from plugging-in $\hat\psi$ is asymptotically negligible and hence $\widehat{V}_{\hat\psi}$ is also consistent for an upper bound. For example, suppose that $\epsilon_k = -x_k (\hat\eta - \eta)$ and $\hat\eta - \eta = O_p(r_K)$ (e.g. because $\hat\eta$ is also a Horvitz-Thompson estimator). In that case, we have that
$$\widehat{V}_{\hat\psi} = \widehat{V}_{\psi} - \underbracket{(\hat\eta - \eta)}_{=O_p(r_K)} \cdot  \frac{1}{K^2} \sum_{k,k': \pi_{kk'}>0}  \frac{W_{k} W_{k'}}{ \pi_{kk'}} 2 b_{kk'} \psi_k x_{k'}  + \underbracket{(\hat\eta-\eta)^2}_{=O_p(r_K^2)}  \frac{1}{K^2} \sum_{k,k': \pi_{kk'}>0} \frac{W_{k} W_{k'}}{ \pi_{kk'}} b_{kk'} x_k x_{k'},$$
and so it suffices to establish that the pair-level Horvitz-Thompson estimators for $b_{kk'} \psi_k x_{k'} $ and $b_{kk'} x_k x_{k'}$ are $o_p(r_K)$ and $o_p(1)$, respectively. 

\paragraph{OLS Estimators.} We now focus on the case of ordinary least squares (OLS) estimators, and give explicit conditions that ensure consistency. Suppose each unit $k$ is associated with a scalar outcome $y_k$ and a vector of covariates $X_k$, and consider the OLS estimator $\hat\theta = \big( \frac{1}{K} \sum_k W_k X_k X_k' \big)^{-1} \frac{1}{K} \sum_k W_k X_k y_k$. For example, if $d$ is a binary treatment and $k=(i,d)$ indexes the potential outcomes $Y_i(d)$, then setting $y_k = Y_i(d)$ and $X_k = (1,d,x_i')'$, for unit-level covariates $x_i$, yields $\hat\theta$ equal to the OLS estimator from the regression of $Y_i= Y_i(D_i)$ on $(1,D_i,x_i)$ in the observed sample. Let $$\theta = \E\bigg[ \frac{1}{K} \sum_k W_k X_k X_k' \bigg]^{-1} \E \bigg[ \frac{1}{K} \sum_k W_k X_k y_k\bigg] = \underbracket{ \bigg( \frac{1}{K} \sum_k \pi_k X_k X_k' \bigg)^{-1} }_{=:\Gamma_K^{-1}} \frac{1}{K} \sum_k \pi_k X_k y_k$$
be the OLS estimand, and define the residual $\epsilon_k = y_k - X_k'\theta$, which from the least-squares first order conditions satisfies the orthogonality condition $\sum_k \pi_k X_k \epsilon_k = 0$. We can then write
\[\hat\theta - \theta =\underbracket{ \bigg( \frac{1}{K} \sum_k \frac{W_k}{\pi_k} \pi_k X_k X_{k}'\bigg)^{-1} }_{=: \hat\Gamma_K^{-1} } \underbracket{ \frac{1}{K }\sum_k \frac{W_k}{\pi_k} \pi_k X_k \epsilon_k}_{=:\hat{U}_K} . \]
Observe that $\hat\Gamma_K$ and $\hat{U}_K$ are Horvitz-Thompson estimators for averages of the outcomes $\pi_k X_k X_{k}'$ and $\pi_k X_k \epsilon_k$, and thus if the regularity conditions in  \Cref{lem:tauhat-consistency} apply to these outcomes, we will have that $\hat\Gamma_K = \Gamma_K + O_p(\sqrt{s_K})$ and $\hat{U}_K = O_p(\sqrt{s_K})$. Suppose for simplicity that we are interested in a scalar contrast of the form $a'\theta$. If $\Gamma_K$ converges to a positive-definite limit $\Gamma$, then by Slutsky's lemma we expect that $\Var(a'\hat\theta - a'\theta) \approx a'\hat\Gamma_K^{-1} \Var(\hat{U}_K) \hat\Gamma_K^{-1} a$. This suggests a ``sandwich form'' estimator
$$\widehat{V}_{\epsilon} := a' \hat\Gamma_{K}^{-1} \hat\Omega_{\epsilon} \hat\Gamma_{K}^{-1} a, \quad \text{ where } \quad \hat\Omega_{\epsilon} = \frac{1}{K^2} \sum_{k,k': \pi_{kk'}>0} b_{kk'} \frac{W_k W_{k'}}{\pi_{kk'}} \pi_k X_k \epsilon_k \pi_{k'} X_{k'}' \epsilon_{k'}.$$ 
$\hat\Omega_{\epsilon}$ is the CC estimator for the variance of $\hat{U}_K$, i.e. a Horvitz-Thompson estimator with outcome $\pi_k X_k \epsilon_k$. The estimator $\widehat{V}_{\epsilon}$ is not feasible since it requires knowledge of the true residuals $\epsilon_k$. We therefore will consider the plug-in replacement $\widehat{V}_{\hat\epsilon}$ that replaces $\epsilon_k = y_k - X_k'\theta$ with $\hat\epsilon_k = y_k - X_k'\hat\theta$. Since $\hat\epsilon_k - \epsilon_k = -X_k'(\hat\theta - \theta)$, which is pointwise $o_p(1)$, we expect the error from plugging in $\hat\epsilon$ to be small asymptotically under appropriate regularity conditions. The following lemma formalizes sufficient conditions under which the resulting estimator $\widehat{V}_{\hat\epsilon}$ is consistent for an upper bound on the asymptotic variance.

\begin{lem}[Consistency for OLS] \label{lem:ols-consistency}
Consider an asymptotic sequence where $\dim(X)$ is fixed for all $K$. Suppose that $C>1$ for all $K$ and that
\begin{enumerate}[(i)]
	\item
	$s_K \to 0$.
	
	\item 
	(Bounded moments for $\pi_k X_k X_k'$, $\pi_k X_k\epsilon_k$) $\frac{1}{K} \sum_k || \pi_k X_k X_k'||^4_F = O(1)$ and $\frac{1}{K} \sum_k || \pi_k X_k \epsilon_k||^4_F = O(1)$.
	
	\item
	(Limited pair dependence in treatment indicators)
	Assignment probabilities satisfy \eqref{eqn:pairwise-cov-bound}.
	
	\item
	(Limiting distribution for $\hat{U}_K$)
	$\Gamma_K \to \Gamma \succ 0$ and $s_K^{-1/2} \hat{U}_K \xrightarrow{d} L$ for some distribution $L$ with $\Var(L) = \lim_{K \to \infty} s_K^{-1} \Var(\hat{U}_K) =: \Omega$ for finite $\Omega \succ 0$.
	
	\item
   (Convergence of CC slack at rate $s_K$.) For  $u_k =  \pi_k X_k \epsilon_k$ and $z_k = \mathrm{vec}(\pi_k X_k X_k')$, define $\tau_c^{(u,z)} := \sum_{k \in \mathcal{K}_c} (u_k' , z_k')'$. We have that $s_K^{-1} \frac{C}{K^2} \Var_c( \tau_c^{(u,z)} ) \to S^*_{(u,z)}$ for a finite matrix $S^*_{(u,z)}$, so that the CC conservativeness for the HT estimator for $\frac{1}{K} \sum_k  (u_k',z_k')'$ is $O(s_K)$. 
	
\end{enumerate}
Then for any vector $a \neq 0$, we have that 
$$a' s_K^{-1/2} (\hat\theta - \theta) \xrightarrow{d} L^* := a' \Gamma^{-1} L, \text{ where } \Var(L^*) = a' \Gamma^{-1} \Omega \Gamma^{-1} a$$
and $ s_K^{-1} \widehat{V}_{\hat\epsilon}  \xrightarrow{p} V^{ub} := \Var(L^*) + a' \Gamma^{-1} S_u^* \Gamma^{-1} a \geq \Var(L^*)$, where $S^*_u$ is the upper left block of $S^*_{(u,z)}$. 
\end{lem}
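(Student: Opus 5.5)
The plan is to reduce everything to statements about Horvitz--Thompson estimators and reuse \Cref{lem:tauhat-consistency} and \Cref{prop:vhat-consistency} componentwise.

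\textbf{Step 1: limit distribution of $a'\hat\theta$.} Apply \Cref{lem:tauhat-consistency} entrywise to the HT estimator $\hat\Gamma_K$ with outcomes $z_k=\mathrm{vec}(\pi_k X_k X_k')$. Condition (ii) bounds the fourth, hence the second, moment, and (i) gives $s_K\to0$. Together these yield $\hat\Gamma_K=\Gamma_K+O_p(\sqrt{s_K})\to\Gamma\succ0$, so $\hat\Gamma_K^{-1}\xrightarrow{p}\Gamma^{-1}$. Since $s_K^{-1/2}(\hat\theta-\theta)=\hat\Gamma_K^{-1}s_K^{-1/2}\hat U_K$, (iv) and Slutsky give $a's_K^{-1/2}(\hat\theta-\theta)\xrightarrow{d}a'\Gamma^{-1}L$. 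The variance formula for $L^*$ is immediate.

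\textbf{Step 2: the infeasible meat.} Consider $\hat\Omega_\epsilon$, the matrix-valued CC estimator for the HT estimator $\hat U_K$ with outcomes $u_k$. Applying \Cref{prop:vhat-consistency} to each scalar projection $c'u_k$ (and then polarizing) gives $\hat\Omega_\epsilon=\E[\hat\Omega_\epsilon]+o_p(s_K)$. The proof of that proposition only used the fourth-moment bound, condition \eqref{eqn:pairwise-cov-bound}, and $s_K$; its remaining assumptions served only to identify the limit, so I will rederive that $o_p(s_K)$ step directly. By \Cref{prop:cc-valid}, $\E[\hat\Omega_\epsilon]=\Var(\hat U_K)+\frac{C}{K^2}\Var_c(\tau_c^{(u)})$, and by (iv)--(v) this is $s_K(\Omega+S_u^*)+o(s_K)$.

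\textbf{Step 3: plug-in error.} This is the main obstacle. Write $\hat u_k=u_k-\pi_kX_kX_k'(\hat\theta-\theta)=u_k-Z_k\Delta$, where $\Delta=\hat\theta-\theta=O_p(\sqrt{s_K})$. Expanding $\hat\Omega_{\hat\epsilon}-\hat\Omega_\epsilon$ gives a cross term that is linear in $\Delta$ times the pair-HT estimator $\frac{1}{K^2}\sum b_{kk'}\frac{W_kW_{k'}}{\pi_{kk'}}u_k z_{k'}'$ (suitably reshaped), and a quadratic term in $\Delta$ times the analogous estimator built from $z_kz_{k'}'$. Each factor of $\Delta$ is pulled outside the sum since it is common to all pairs. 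I then need the cross-term HT sum to be $O_p(\sqrt{s_K})$ and the $zz$ sum to be $O_p(1)$, so that the products are $o_p(s_K)$.

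To get these bounds, I apply the Step-2 argument to the stacked outcome $(u_k',z_k')'$. Its pair-HT estimator concentrates around its mean at rate $o_p(s_K)$ by (ii)--(iii). By \Cref{prop:cc-valid}, its mean equals the HT variance of $\frac1K\sum_k\frac{W_k}{\pi_k}(u_k',z_k')'$ plus the slack $\frac{C}{K^2}\Var_c(\tau_c^{(u,z)})$. The first part is $O(s_K)$ by \Cref{lem:tauhat-consistency}, and the slack is $O(s_K)$ by (v). Hence all off-diagonal blocks, including the $u$--$z$ and $z$--$z$ blocks, are $O_p(s_K)$, which is stronger than required. The cross term is then $O_p(\sqrt{s_K})O_p(s_K)=o_p(s_K)$, and the quadratic term is $O_p(s_K)O_p(s_K)=o_p(s_K)$.

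The delicate points are two. First, the bilinear form with cross outcomes $u_k z_{k'}$ must be handled via the block structure, that is, by polarization of quadratic forms in the stacked outcome. Second, the pair-covariance bound \eqref{eqn:pairwise-cov-bound} must be combined with Hölder and the fourth moments, exactly as in \Cref{prop:vhat-consistency}.

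\textbf{Step 4: conclusion.} We have $s_K^{-1}\hat\Omega_{\hat\epsilon}\xrightarrow{p}\Omega+S_u^*$ and $\hat\Gamma_K^{-1}\xrightarrow{p}\Gamma^{-1}$. By the continuous mapping theorem, $s_K^{-1}\widehat V_{\hat\epsilon}\xrightarrow{p}a'\Gamma^{-1}(\Omega+S_u^*)\Gamma^{-1}a$. The inequality $V^{ub}\ge\Var(L^*)$ holds because $S_u^*$ is a limit of the PSD matrices $s_K^{-1}\frac{C}{K^2}\Var_c(\tau_c^{(u)})$ and is therefore PSD.
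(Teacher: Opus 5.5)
Your proof is correct and follows essentially the same route as the paper's. The paper uses Slutsky for the first claim and the concentration result for the vector-valued CC estimator to handle $\hat\Omega_\epsilon$. It then bounds the plug-in error by showing that the CC estimator for the stacked outcome $(u_k',z_k')'$ is $O_p(s_K)$, packaging the expansion as $C_{(\hat\theta-\theta)}\widehat{V}_{(u,z)}C_{(\hat\theta-\theta)}'$. One small slip: in Step 3 the stated requirements should be $o_p(\sqrt{s_K})$ and $o_p(1)$ rather than $O_p(\sqrt{s_K})$ and $O_p(1)$, but the $O_p(s_K)$ bounds you actually establish make this moot.
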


\subsection{Using arbitrary clustering instead of connected components\label{sec:arbitrary-clusters}}
The construction of the CC estimator requires that the joint unobservability graph has at least two connected components. We now show how a similar idea can be extended to the case where the joint unobservability graph is connected, but one has a partition of the units into clusters $\tilde{\mathcal{K}}_1,\ldots, \tilde{\mathcal{K}}_{\tilde{C}}$ for some $\tilde{C}>1$. We then show how one can construct a variance estimator whose conservativeness depends on (a) the variance of sums $\tau_{\tilde{c}}$ across the $\tilde{\mathcal{K}}_{\tilde{c}}$, and (b) the conservativeness of a separate upper bound (e.g. Young's inequality) applied to pairs of $(k,k')$ that cross clusters and have $\pi_{kk'}=0$. The estimator will therefore tend to be close to exact if cross-cluster heterogeneity is small \emph{and} there are relatively few links across clusters.

To construct the estimator, observe that for any partition $\tilde{\mathcal{K}}_1,\ldots, \tilde{\mathcal{K}}_{\tilde{C}}$ with $\tilde{C}>1$, we can write the variance of $\hat\tau$ as
\begin{align*}
	\Var(\hat\tau) = \bigg(  \overbracket{\frac{1}{K^2} \sum_{\substack{ k,k' \\ \pi_{kk'} >0 }} v_{kk'} Y_k Y_{k'}}^{\text{Jointly observable pairs}} + \overbracket{ \frac{1}{K^2} \sum_{ \substack{ k,k' \\ \pi_{kk'} =0  \\ k,k' \in \tilde{\mathcal{K}}_{\tilde{c}}} } (-1) Y_k Y_{k'}  }^{\text{Same-cluster unobservable pairs}}\bigg) +  \overbracket{\frac{1}{K^2} \sum_{ \substack{ k,k' \\ \pi_{kk'} =0  \\ k \in \tilde{\mathcal{K}}_{\tilde{c}}, k' \in \tilde{\mathcal{K}}_{\tilde{c}'} \neq \tilde{\mathcal{K}}_{\tilde{c}} } } (-1) Y_k Y_{k'} }^{\text{Cross-cluster unobservable pairs}} . \label{eqn:vhat-within-across-cluster} 
\end{align*}
We will denote the term in parentheses by $V_{\text{within}}$ and the term outside the parentheses $V_{\text{across}}$. Observe that if the $\tilde{\mathcal{K}}_{\tilde{c}}$ are the connected components, then the $V_{\text{across}}$ term vanishes since there are no cross-component pairs with $\pi_{kk'}=0$. We now show that an analog to the CC estimator that uses the $\tilde{\mathcal{K}}_{\tilde{c}}$ instead of the $\mathcal{K}_{c}$ provides an upwards-biased estimator for $V_{\text{within}} - \frac{1}{\tilde{C}-1} V_{\text{across}}$ whose conservativeness depends on the heterogeneity across the $\tilde{\mathcal{K}}_{\tilde{c}}$. Specifically, define $s^{\tilde{\mathcal{K}}}_{kk'}$ equal to 1 if $k,k' \in \tilde{\mathcal{K}}_{\tilde{c}}$ for some $\tilde{c}$ and equal to $-1/(\tilde{C}-1)$ otherwise, and define 
$$\VhatKwithin =\frac{1}{K^2} \sum_{ \substack{ k,k' \\ \pi_{kk'} >0 }} (v_{kk'}+s^{\tilde{\mathcal{K}}}_{kk'}) \frac{W_k W_{k'}}{ \pi_{kk'} } Y_k Y_{k'}. $$
Note that the estimator $\VhatKwithin$ corresponds exactly to $\VhatCC$ when the $\tilde{\mathcal{K}}_{\tilde{c}}$ are the connected components. The following lemma characterizes the expectation of $\VhatKwithin$.

\begin{lem}[Expectation of $\VhatKwithin$]
	\label{lem:expectation-vwithin}
	$$\E[\VhatKwithin] = V_{\text{within}} - \frac{1}{ \tilde{C} -1 } V_{\text{across}} + \frac{\tilde{C}}{K^2} \Var_{\tilde{c}}(\tau_{\tilde{c}}) $$
	where $\tau_{\tilde{c}}  = \sum_{k \in \tilde{\mathcal{K}}_{\tilde{c}}} Y_k$ is the within-cluster sum and $\Var_{\tilde{c}}(\cdot)$ denotes the finite-population variance across the $\tilde{\mathcal{K}}_{\tilde{c}}$ (with denominator $\tilde{C}-1$). 
\end{lem}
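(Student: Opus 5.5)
The plan is a direct computation of the expectation, followed by an algebraic identity for the slack term. First, since $\E[W_kW_{k'}/\pi_{kk'}] = 1$ whenever $\pi_{kk'}>0$, linearity of expectation gives
\[
\E[\VhatKwithin] = \frac{1}{K^2}\sum_{\substack{k,k'\\ \pi_{kk'}>0}} v_{kk'}Y_kY_{k'} + \frac{1}{K^2}\sum_{\substack{k,k'\\ \pi_{kk'}>0}} s^{\tilde{\mathcal{K}}}_{kk'}Y_kY_{k'}.
\]
The first sum is exactly the ``jointly observable pairs'' part of $V_{\text{within}}$. It therefore remains to handle the second sum, which is the slack sum restricted to observable pairs.

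Next, I will write the restricted slack sum as the unrestricted sum over all pairs minus the sum over pairs with $\pi_{kk'}=0$. The pairs with $\pi_{kk'}=0$ split into two groups. Same-cluster pairs carry $s^{\tilde{\mathcal{K}}}_{kk'}=1$, so they contribute $\frac{1}{K^2}\sum_{\text{same-cluster},\,\pi=0}Y_kY_{k'}$. Cross-cluster pairs carry $s^{\tilde{\mathcal{K}}}_{kk'}=-1/(\tilde C-1)$, so they contribute $-\frac{1}{\tilde C-1}\frac{1}{K^2}\sum_{\text{cross},\,\pi=0}Y_kY_{k'}$. Subtracting these from the unrestricted sum yields two pieces. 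The first is $\frac{1}{K^2}\sum_{\text{same},\,\pi=0}(-1)Y_kY_{k'}$, which is the ``same-cluster unobservable pairs'' term and completes $V_{\text{within}}$. The second is $+\frac{1}{\tilde C-1}\frac{1}{K^2}\sum_{\text{cross},\,\pi=0}Y_kY_{k'} = -\frac{1}{\tilde C-1}V_{\text{across}}$, using the fact that $V_{\text{across}}$ carries the factor $(-1)$. This is where the sign bookkeeping must be done carefully.

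Finally, I will evaluate the unrestricted slack sum $\frac{1}{K^2}\sum_{k,k'}s^{\tilde{\mathcal{K}}}_{kk'}Y_kY_{k'}$ by grouping indices into clusters. It equals
\[
\frac{1}{K^2}\Big[\sum_{\tilde c}\tau_{\tilde c}^2 - \frac{1}{\tilde C-1}\sum_{\tilde c\neq \tilde c'}\tau_{\tilde c}\tau_{\tilde c'}\Big].
\]
Substituting $\sum_{\tilde c\neq\tilde c'}\tau_{\tilde c}\tau_{\tilde c'} = (\sum_{\tilde c}\tau_{\tilde c})^2 - \sum_{\tilde c}\tau_{\tilde c}^2$ gives
\[
\frac{1}{K^2}\frac{\tilde C}{\tilde C-1}\Big[\sum_{\tilde c}\tau_{\tilde c}^2 - \tilde C\,\bar\tau^2\Big] = \frac{\tilde C}{K^2}\Var_{\tilde c}(\tau_{\tilde c}).
\]
This is the same identity that underlies \Cref{prop:cc-valid}, and indeed the whole argument mirrors that proposition's proof; the only difference is that the cross-cluster zero-probability pairs no longer vanish. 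None of the steps is a real obstacle; the main risk is the sign of the $V_{\text{across}}$ contribution, which I will double-check against the special case where the $\tilde{\mathcal{K}}_{\tilde c}$ are the connected components, so that $V_{\text{across}}=0$ and the result reduces to \Cref{prop:cc-valid}.
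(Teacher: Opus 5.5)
Your proposal is correct and follows essentially the same route as the paper: take expectations using $\E[W_kW_{k'}/\pi_{kk'}]=1$, account for the zero-probability pairs (which is where the $-\frac{1}{\tilde{C}-1}V_{\text{across}}$ term comes from), and evaluate $\frac{1}{K^2}\sum_{k,k'}s^{\tilde{\mathcal{K}}}_{kk'}Y_kY_{k'}$ by grouping indices into clusters exactly as in \Cref{prop:cc-valid}. The only difference is bookkeeping, and your signs check out: the paper keeps $v_{kk'}+s^{\tilde{\mathcal{K}}}_{kk'}$ together, extends that sum to all pairs using $v_{kk'}=-1$ on unobservable pairs, and then reassembles $\Var(\hat\tau)=V_{\text{within}}+V_{\text{across}}$, whereas you extend only the $s^{\tilde{\mathcal{K}}}$ part, which lets you read off $V_{\text{within}}$ directly.
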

It follows that we can obtain an upwards-biased estimator for $\Var(\hat\tau)$ if we can obtain an upwards-biased estimator for $V_{\text{across}}$. One simple approach would be to use Young's inequality, as in AS, but applied only to the cross-cluster components. That is, we could use 
$$\widehat{V}_{\text{across}} = \frac{1}{K^2} \sum_{ \substack{ k,k' \\ \pi_{kk'} =0  \\ k \in \tilde{\mathcal{K}}_{\tilde{c}}, k' \in \tilde{\mathcal{K}}_{\tilde{c}'} \neq \tilde{\mathcal{K}}_{\tilde{c}} } } \frac{1}{2}\big( \frac{W_k}{\pi_k} Y_k^2 + \frac{W_{k'}}{\pi_{k'}} Y_{k'}^2 \big) $$
which has expectation
$$\E[\widehat{V}_{\text{across}} ]= V_{\text{across}} + \frac{1}{K^2} \sum_{ \substack{ k,k' \\ \pi_{kk'} =0  \\ k \in \tilde{\mathcal{K}}_{\tilde{c}}, k' \in \tilde{\mathcal{K}}_{\tilde{c}'} \neq \tilde{\mathcal{K}}_{\tilde{c}} } } \frac{1}{2}\big( Y_k +Y_{k'})^2 =: V_{\text{across}} + \Delta_{\text{Young},\text{across}}. $$
If a small share of the pairs $(k,k')$ are cross-cluster with $\pi_{kk'}=0$, then we will generally expect $ \Delta_{\text{Young},\text{across}}$ to be small, since it is normalized by $1/K^2$. The variance estimator $\widehat{V}^{\tilde{\mathcal{K}}} := \VhatKwithin + \frac{\tilde{C}}{\tilde{C}-1} \widehat{V}_{\text{across}}$ is then upwards-biased for $\Var(\hat\tau)$ with conservativeness $\frac{\tilde{C}}{K^2} \Var_{\tilde{c}}(\tau_{\tilde{c}}) + \frac{\tilde{C}}{\tilde{C}-1} \Delta_{\text{Young},\text{across}} $. We therefore expect the conservativeness of this estimator to be relatively small if it is true that both (a) the $\tau_{\tilde{c}}$ are close to homogeneous across clusters, so that $\Var_{\tilde{c}}(\tau_{\tilde{c}})$ is small, and (b) there are relatively few cross-cluster pairs $(k,k')$ with $\pi_{kk'}=0$, so that $ \Delta_{\text{Young},\text{across}} $ is small. Other approaches to obtain an upper bound on $V_{\text{across}}$ are possible, too: for example, one could solve an optimization similar to that in \citet{harshaw_optimized_2026} to find a feasible quadratic-form estimator for $V_{\text{across}}$ that minimizes a particular matrix norm.

While the CC estimator is exact when components are homogeneous, the estimator $\widehat{V}^{\tilde{\mathcal{K}}}$ just described will generally be conservative owing to the $\Delta_{\text{Young},\text{across}}$ term. It appears, however, that some conservativeness even under homogeneous effects is necessary when there is only one connected component, at least if we focus on quadratic-form estimators. Specifically, \Cref{prop:sharp-null-exact} above characterizes the set of quadratic-form variance estimators for causal contrasts that are exact under the sharp null of no treatment effects. A corollary of this result is that the only quadratic-form variance estimator that is sharp-null-exact when $C=1$ is the variance estimator $\widehat{V} = \hat\tau^2$.
\begin{cor} \label{cor:cequals1-trivial-sharp-null-exact}
Consider the setting in \Cref{prop:sharp-null-exact} and suppose $C=1$. Then $\VhatM$ is weakly upwards-biased, $\E[\VhatM] \geq \Var(\hat\tau)$, and sharp-null-exact if and only if $\VhatM = \hat\tau^2$. 
\end{cor}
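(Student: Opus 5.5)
The plan is to apply part 1 of \Cref{prop:sharp-null-exact} with $C=1$. In that case the only $1\times 1$ correlation matrix is $\Lambda=1$, and $H=\mathbf{1}_K$ because every item lies in the single component. Part 1 therefore says that $\VhatM$ is weakly upwards-biased and sharp-null-exact if and only if
\[ M - V = \mathbf{1}_K \mathbf{1}_K'. \]
This pins down a unique matrix $M = V + \mathbf{1}_K\mathbf{1}_K'$. The remaining work is to identify $\VhatM$ with $\hat\tau^2$.

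For the forward direction, I would compute the entries of $M$. We have
\[ m_{kk'} = v_{kk'} + 1 = \frac{\pi_{kk'}-\pi_k\pi_{k'}}{\pi_k\pi_{k'}} + 1 = \frac{\pi_{kk'}}{\pi_k\pi_{k'}}. \]
In particular $m_{kk'}=0$ exactly when $\pi_{kk'}=0$, so $M\in\mathcal{M}$, as part 1 implicitly requires. The estimator then simplifies:
\[ \VhatM = \frac{1}{K^2}\sum_{\pi_{kk'}>0} \frac{\pi_{kk'}}{\pi_k\pi_{k'}}\,\frac{W_kW_{k'}}{\pi_{kk'}}\,Y_kY_{k'} = \frac{1}{K^2}\sum_{\pi_{kk'}>0} \frac{W_kW_{k'}}{\pi_k\pi_{k'}}\,Y_kY_{k'}. \]
Whenever $\pi_{kk'}=0$ the product $W_kW_{k'}$ is zero almost surely, so the restriction on the sum can be dropped almost surely. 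The sum then factors as $\bigl(\frac{1}{K}\sum_k \frac{W_k}{\pi_k}Y_k\bigr)^2=\hat\tau^2$.

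For the converse, I would run the same calculation backwards. Write $\hat\tau^2$ as the quadratic-form estimator with $M=(\pi_{kk'}/(\pi_k\pi_{k'}))_{kk'}$; this matrix lies in $\mathcal{M}$ and satisfies $M-V=\mathbf{1}\mathbf{1}'=H\Lambda H'$ with $\Lambda=1$. Part 1 then gives that $\hat\tau^2$ is upwards-biased and sharp-null-exact.

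The main obstacle is being careful about what ``$\VhatM=\hat\tau^2$'' means. Equality holds almost surely, or for every realization in the support of $W$. The matrix $M$ itself is unique, because the characterization in part 1 forces $M-V$. A direct check is also available: $\E[\hat\tau^2]-\Var(\hat\tau)=\tau^2$, which equals $K^{-2}\,\boldsymbol\tau'\Lambda\boldsymbol\tau$ with $\boldsymbol\tau=K\tau$ and $\Lambda=1$. Under the sharp null this slack vanishes, since the within-unit weights sum to zero and hence $\tau=0$.
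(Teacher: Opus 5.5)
Your proposal is correct and follows the paper's proof essentially step for step. Like the paper, you specialize part 1 of \Cref{prop:sharp-null-exact} to $C=1$ to force $M-V=\mathbf{1}_K\mathbf{1}_K'$, obtain $m_{kk'}=\pi_{kk'}/(\pi_k\pi_{k'})$, and use $W_kW_{k'}=0$ almost surely when $\pi_{kk'}=0$ to factor the sum into $\hat\tau^2$. Your explicit handling of the converse direction and of the almost-sure meaning of $\VhatM=\hat\tau^2$ makes precise what the paper leaves implicit.
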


We note that since $\Var(\hat\tau) = \E[\hat\tau^2] - \tau^2$, the variance estimator $\hat\tau^2$ is trivially upwards-biased and exact when $\tau=0$. It is not a very useful variance estimator, however. For one, if $\tau \neq 0$, then it has conservativeness $\tau^2$, regardless of sample size, and therefore does not converge to zero even as $K \to \infty$. Moreover, the implied $t$-statistic always has absolute value equal to 1, and therefore it is not useful for $t$-based inference. \Cref{cor:cequals1-trivial-sharp-null-exact} shows that any quadratic-form estimator that does not have these undesirable properties will necessarily not be sharp-null exact when $C=1$.  

\begin{rem}[Choice of clusters]
The above discussion allows an arbitrary partition of units into clusters $\tilde{\mathcal{K}}_1,\ldots, \tilde{\mathcal{K}}_{\tilde{C}}$. This raises the question of how to choose the clusters in practice. To minimize conservativeness, the researcher would ideally choose clusters that (a) have relatively homogeneous effects and (b) for which there are few cross-cluster pairs $k,k'$ with $\pi_{kk'}=0$. In some settings, there may be natural choices that approximately satisfy these criteria. For example, in spatial settings there may be geographic units (say counties) that we expect to have relatively homogeneous effects and for which there are relatively few cross-border pairs of jointly unobservable potential outcomes. In principle, one could also choose clusters algorithmically based on the joint unobservability graph, e.g. using balanced graph-partitioning methods \citep{kernighan_efficient_1970,buluc_recent_2016} to find clusters of approximately equal size with few cross-cluster edges. We leave a formal analysis of the optimal choice of algorithm to future work. 
\end{rem}

\section{Simulations and Applications\label{sec:sims-and-apps}}

We now consider simulations and applications based on two applied settings, one involving interference and the other a spatial treatment, similar to \Cref{example:spatial,example:interference}.
\subsection{A setting with interference}
We first revisit the simulation designs in \citet{harshaw_optimized_2026}, which are based on the social network experiment in \citet{paluck_changing_2016}. 

\paragraph{Setting.} As a brief review, \citet{paluck_changing_2016} conduct a randomized trial in schools intended to reduce bullying and conflict. They employ a two-stage stratified randomization procedure, in which 28 of 56 schools are randomly selected for the intervention; among the treated schools, a random subset of students are selected to participate in an anticonflict intervention. Prior to the intervention, the researchers collected information on the social network of students (based on questions about which students they spent time with). In this setting it is natural to expect that the intervention may have spillovers across students. \citet{harshaw_optimized_2026} specify an exposure mapping where student $i$'s outcome is assumed to depend on (i) whether they are in a treated school, (ii) whether they were directly selected for the intervention, and (iii) whether any of their friends were selected for the intervention.\footnote{This is a somewhat simplified version of the exposure mapping in \citet{paluck_changing_2016}, who further subdivide (iii) into levels for whether any peer was treated and for whether any popular peer was treated.} 

\paragraph{Simulation designs.} \citet{harshaw_optimized_2026} focus on two treatment contrasts.  First, they study the $(1,1,0)$ vs. $(0,0,0)$ contrast, which corresponds to the effect of being directly treated without any treated peers versus being in a control school (i.e. the ``direct effect'' of treatment). Second, they study the $(1,1,0)$ vs. $(1,0,1)$ contrast, which compares the effect of being directly treated (without any treated peers) to not being directly treated but having a peer treated (i.e. a comparison of the ``direct'' vs. ``indirect'' effects). \citet{harshaw_optimized_2026} approximate the $\pi_{kk'}$ by simulating 5 million times from the assignment mechanism. For each treatment contrast, they restrict attention to students who have at least a 0.005 probability of receiving both treatment levels in the contrast. They construct point estimates of the relevant contrasts using the Horvitz-Thompson estimator.  They then implement five variance estimators, the \citet{aronow_estimating_2017} estimator and the variance estimator defined by the optimization in \eqref{eqn:g-optimization} for four different objectives. For simplicity of presentation, we focus on the AS estimator and those based on the trace and ``composite'' objectives, which of the four objectives considered in \citet{harshaw_optimized_2026} yield the most and least conservative variance estimators in their main results reported in Table 1. (The composite objective is a weighted combination of the trace and Frobenius norms.) \citet{harshaw_optimized_2026} consider four designs for the potential outcomes. In two of the designs, they impose the sharp null of no treatment effects by setting all potential outcomes equal to an observed outcome in \citet{paluck_changing_2016}. They do this for two observed outcomes: whether a student wore an anticonflict wristband, and whether there were any disciplinary actions against the student. In the third simulation design, they used ``mixed'' potential outcomes where one of the potential outcomes is set to the observed Wristband outcome and the other is set to the observed Disciplinary outcome. The final simulation design is based on drawing the potential outcomes as $(X' \beta_0 +\epsilon_0, X' \beta_1 + \epsilon_1)$ where $X$ are observed covariates and the $\epsilon$'s are independent Gaussian noise (see \citet{harshaw_optimized_2026} for details on the choice of $X$ and $\beta$). The first two designs have homogeneous (zero) treatment effects, whereas the third and fourth designs have heterogeneous effects. To explore how the estimators behave under homogeneous but non-zero effects, we also consider a modification to the sharp-null simulations where we add a constant effect $\tau$ to the $Y(1,1,0)$ potential outcome for all units. Below we focus on $\tau=0.5$, which is close to the estimated average direct effect for the wristband outcome in the \citet{paluck_changing_2016} data.

\paragraph{Simulation implementation.} We implement the AS and trace and composite objective estimators identically to the simulations in \citet{harshaw_optimized_2026} using their replication code.\footnote{\citet{harshaw_optimized_2026} impose a constraint such that the variance estimator does not put any weight on pairs with $\pi_{kk'}<0.002$, and so we impose this constraint on these estimators as well.} For comparison, we implement the CC variance estimator using the same design. For the $(1,1,0)$ vs. $(0,0,0)$ contrast, there are 56 connected components corresponding exactly to the 56 schools in the data. This is because it is not possible for one student in a school to be directly treated while another student in the same school is in a control school; hence, all the potential outcomes for students in the same school are connected in the joint unobservability graph. For the $(1,1,0)$ vs. $(1,0,1)$ contrast, there are 468 connected components. The connected components do not reduce to the school for this outcome because the two relevant potential outcomes are not necessarily jointly unobservable for students in the same school; it is possible that unit $i$ in one school is directly treated with no treated neighbors and unit $j$ in the same school is not directly treated but has treated neighbors.\footnote{This is possible unless all of $j$'s friends are also $i$'s friends. In that case, $j$ can only have a treated neighbor if $i$ also has a treated neighbor. Hence, there are some joint unobservability links across units in the same school.} We conduct 100,000 simulations using the simulation designs in \citet{harshaw_optimized_2026} described above. For each variance estimator, we report the same statistics as in \citet{harshaw_optimized_2026}: (i) the bias of the variance estimator as a fraction of the true variance, namely $(\hat{\E}[\widehat{V}] - \widehat{\Var}(\hat\tau))/\widehat{\Var}(\hat\tau)$ (where $\hat{\E}$, $\widehat{V}$ denote mean and variance over the simulations), (ii) the precision of the variance estimator $\widehat{\mathrm{SD}}(\widehat{V})/ \widehat{\Var}(\hat\tau)$, (iii) coverage of a nominal 95\% interval of the form $\hat\tau \pm 1.96 \sqrt{\widehat{V}}$, and (iv) the average width of the nominal 95\% CI relative to that for the AS estimator.\footnote{Horvitz-Thompson-based variance estimators are generally not guaranteed to be non-negative with probability 1. If $\hat{V}<0$, we set $\sqrt{\hat{V}}=0$ when calculating the confidence intervals and average length. This occurs for the CC estimator in only 2 simulation draws across all specifications (which each use 100,000 simulations), and does not occur for the other variance estimators.}

\paragraph{Simulation results.} The results for the two contrasts are shown in \Cref{tab:simulation-results-30,tab:simulation-results-32}, respectively. Recall from \Cref{prop:cc-valid} that the CC estimator should be unbiased under the sharp null. In line with this, the estimated proportional bias in Panels A and B, which impose the sharp null, is very small in magnitude, less than 1\% of $\Var(\hat\tau)$ (with differences from zero likely due to simulation error). As it turns out, the CC estimator is also close to unbiased under the synthetic outcomes shown in Panel D, with upwards bias of 1-2\%. Intuitively, this occurs because the noise generated by the $\epsilon$ draws is independent across units, and thus with the specified error process there does not wind up being a large variance in the total treatment effect across components. In the Mixed PO design shown in Panel C, there is relatively more heterogeneity, and so the CC estimator has larger upward bias (64 and 16 percent, respectively). Coverage for the CC estimator is at least 93.5\% for nominal 95\% intervals in all specifications. Comparing to the other estimators, we find that the CC estimator has the lowest bias and shortest resulting confidence interval widths in all simulation designs. The gains relative to the AS estimator can be large, with the width of the CC estimator CI ranging from 44 to 84\% of that for the AS estimator across designs. The gains relative to the \citet{harshaw_optimized_2026} estimators using the trace and composite objectives are much more modest in Panels A-D (which match the original \citet{harshaw_optimized_2026} design), with the CI width relative to AS typically differing only by single-digit percentage points. We find larger reductions in average length from using the CC estimator in Panels E and F, which consider DGPs with a constant non-zero treatment effect, with differences of at least 10\% of the AS estimator length in all cases and over 30\% in one case. To better understand these results, \Cref{fig:bias-as-function-of-tau} plots the normalized bias of each estimator as a function of the constant effect size $\tau$, which shows that the bias of the CC estimator increases much more slowly in $\tau$ as for the other estimators. To gain intuition, recall that the bias of the CC estimator is proportional to $\Var(\tau_c)$, the variance of the sum of treatment effects across connected components. Hence, if all components were the same size, the CC estimator would be unbiased for all values of $\tau$; in practice, there is modest variation in component size in this application, and so the bias of the CC estimator is not completely insensitive to the magnitude of $\tau$, but much less sensitive than the other estimators. In addition, we find that the precision of the CC estimator is generally similar to that of the two \citet{harshaw_optimized_2026} estimators, and generally smaller than that of AS; the reductions in length do not appear to come at a large cost in terms of estimator precision.  Finally, we comment briefly on computation time. Solving the optimization \eqref{eqn:g-optimization} to compute the trace-optimal and composite-objective-optimal variance estimators for the direct effect took 31 and 2 hours, respectively, on a MacStudio; for the direct vs. indirect contrast, the analogous runtimes were 9 and 2 hours. In comparison, computation of the analogous matrices for the CC and AS approach took under 1 second for both contrasts.\footnote{These runtimes compare how long it takes to compute the matrix $M$ given the second-order treatment probabilities ($\pi_{kk'}$) as input. In practice we calculate these via 5 million simulations, which takes about 5 minutes when run in parallel using 20 cores.} 

\paragraph{Application results.} \Cref{tab:paluck-application} applies the four estimators studied in the simulations to the actual data from \citet{paluck_changing_2016}. We report results for four outcomes, the two studied in the simulation (Wristband and Discipline) and the two additional outcomes reported in Figure 3 of \citet{paluck_changing_2016}. These are an indicator for whether a student's friends talked to them about conflict, and an index capturing whether students perceived social norms against conflict. The first column shows the Horvitz-Thompson point estimates while the remaining columns show the standard errors (SEs) from the four estimators. Overall, we find that the CC estimator SEs are shorter than the AS and trace-optimal SEs for all eight specifications, and smaller than those for the Composite objective in 7 of the 8. The differences relative to AS are large in many cases, with the CC SEs being less than half those of the AS ones in several specifications. The CC SEs are also substantively shorter than the trace-optimal and composite-objective optimal SEs for the two contrasts involving the Wristband outcome. The CC SE is less than half the magnitude of the other estimators for the direct effect, and about one-third shorter than the composite-objective SE for the comparison of the direct and indirect effects. For the latter comparison, the implied $t$-statistic is 2.4 using the CC SE, compared to 1.97 and 1.6 for the trace and composite-objective estimators. The differences for the other outcomes are much smaller in percentage terms. The direct effect for the Wristband outcome is the largest in magnitude (and has the largest $t$-statistic), so these results are consistent with the findings in \Cref{fig:bias-as-function-of-tau} in which the gap between the CC and other estimators grew as the treatment effect became larger. 

\subsection{A setting with spatial treatments\label{subsec:sims-pollmann}}
We next consider simulations based on \citet{pollmann_causal_2026}, who studies settings in which treatments occur at specific points in space and affect nearby locations, similar to \Cref{example:spatial}.

\paragraph{Setting.} Pollmann supposes there is a finite set $\mathcal{S}$ of possible treatment sites, with $T_s = 1$ denoting that site $s$ is treated. Outcomes are measured for a separate set of units indexed by $i$, and each unit's outcome $Y_i$ may be affected by treatment at nearby sites. This is a bipartite experiment in the terminology of \citet{zigler_bipartite_2021}. In his application, Pollmann is interested in the causal effect of having a nearby grocery store on the foot traffic to restaurants during the COVID-19 pandemic; hence, $\mathcal{S}$ indicates possible locations for grocery stores, while $Y_i$ denotes foot traffic to restaurant $i$. Pollmann considers the setting where $\pi_s = P(T_s =1)$ is treated as known, and assumes that the $T_{s}$ are independent across $s$.  For inference, Pollmann imposes that there is some distance $d_0$ such that the outcome for unit $i$ is only affected by sites within distance $d_0$ of $i$ (see his Assumption 3(a)) . We can thus write the potential outcomes as $Y_i(\mathbf{s}_i)$ where $\mathbf{s}_i$ denotes the set of sites within distance $d_0$ of $i$ that are treated---that is $\mathbf{s}_i \subseteq \mathcal{N}_i$, where $\mathcal{N}_i$ is the set of sites within distance $d_0$ of $i$. 

Pollmann considers the H\'ajek-style estimator
\begin{equation}
\hat\tau(d)
=
\frac{
	\sum_{s\in\mathcal S}T_s\sum_i w_i(s,d)Y_i
}{
	\sum_{s\in\mathcal S}T_s\sum_i w_i(s,d)
}
-
\frac{
	\sum_{s\in\mathcal S}(1-T_s)\frac{\pi_s}{1-\pi_s}
	\sum_i w_i(s,d)Y_i
}{
	\sum_{s\in\mathcal S}(1-T_s)\frac{\pi_s}{1-\pi_s}
	\sum_i w_i(s,d)
}, \label{eqn:tauhat-d-pollmann}
\end{equation}
where for a reference distance $d$, $w_i(s,d) \geq 0$ denotes the weight placed on site $s$ for unit $i$. For example, we may set $w_i(s,d)$ to be positive if and only if site $s$ is between 0.075 and 0.1 miles of $i$.  For simplicity of exposition, we will assume that the weights $w_i(s,d)$ are non-zero only if $s \in \mathcal{N}_i$, so we place no weight on sites that are assumed not to affect the outcome of unit $i$. Under appropriate regularity conditions, Pollmann shows that $\hat\tau(d)$ is consistent for the estimand
\begin{equation}
\tau_{\mathrm{marginal}}(d) =  \frac{
	\sum_{\mathbf{s} \subseteq \mathcal{S} } P(\mathbf{S} = \mathbf{s} ) \sum_{s\in \mathbf{s} } \sum_i w_i(s,d) (Y_i(\mathbf{s}_i) -Y_i( \mathbf{s}_i \setminus \{s\} ))
}{
	\sum_{\mathbf{s} \subseteq \mathcal{S} } P(\mathbf{S} = \mathbf{s} ) \sum_{s\in \mathbf{s} } \sum_i w_i(s,d)
} \label{eqn:tau-marginal-def}
\end{equation}
where $\mathbf{S}$ is the random set denoting which sites are treated, and $\mathbf{s}_i = \mathbf{s} \cap \mathcal{N}_i$. In words, $\tau_{\mathrm{marginal}}(d)$ is a weighted average of the marginal effect of including site $s$ in the treated set of sites for nearby restaurants. 

As we describe in detail in \Cref{sec:pollmann-details}, Pollmann also shows that under appropriate regularity conditions, the H\'ajek-style estimator $\hat\tau$ is asymptotically equivalent to a Horvitz-Thompson estimator $\tilde\tau$, where 
$$\tilde\tau - \tau_{\mathrm{marginal}}(d) = \frac{1}{K} \sum_{k} \frac{W_k}{\pi_k} \psi_k, $$
where $k$ indexes pairs of the form $(i,s,\mathbf{s}_i)$, $K$ is the total number of such configurations, $W_k = 1\{\mathbf{S}_i = \mathbf{s}_i\}$, $\pi_k = \P(W_k=1)$, and the $\psi_k$ are fixed scores defined in \eqref{eqn:defn-psi-k} in the appendix. Since $\tilde\tau$ is a Horvitz-Thompson estimator, its variance is given by the expression in \eqref{eqn:var-tauhat}, with $Y_k = \psi_k$. As we describe in \Cref{sec:pollmann-details}, Pollmann obtains an upper bound on $\Var(\tilde\tau)$ by using a weighted version of Young's inequality to upper bound the terms involving $\psi_k \psi_{k'}$ for pairs with $\pi_{kk'} =0$. If the $\psi_k$ were known, it would then be straightforward to estimate this upper bound using a Horvitz-Thompson variance estimator for the pairs with $\pi_{kk'}>0$; one could likewise estimate $\Var(\tilde\tau)$ using the CC estimator. In practice, the scores $\psi_k$ are not known exactly since they depend on $\mu_t(d), \mu_c(d)$, which are the values that are obtained if $Y_i(\mathbf{s}_i) -Y_i( \mathbf{s}_i \setminus \{s\} )$ is replaced with $Y_i(\mathbf{s}_i) $ and $Y_i( \mathbf{s}_i \setminus \{s\} )$, respectively, in the definition of $\tau_{\mathrm{marginal}}(d)$. Pollmann thus considers a feasible version of the variance estimator that replaces $\psi_k$ with $\hat\psi_k$, where $\hat\psi_k$ is the estimated score that replaces $\mu_t(d), \mu_c(d)$ with plug-in estimates $\hat\mu_t(d), \hat\mu_c(d)$ corresponding to the first and second terms in the expression for $\hat\tau(d)$. A plug-in version of the CC estimator can be constructed analogously. We also consider a leave-out version of the Pollmann and CC variance estimators where the score $\hat\psi_{k,-c}$ for potential outcome $k$ in component $c$ is constructed only using observations $k$ that are not in component $c$.

\paragraph{Data.} Pollmann uses proprietary data on foot traffic to businesses from Safegraph. However, he posted on Github a dataset that is ``simulated to resemble the spatial distributions of businesses'' in his data. We base our simulations and application on this publicly available dataset. The data contains information on the foot traffic $Y_i$ to 1,575 restaurants. It also contains the locations of 481 potential sites for grocery stores, of which 167 actually have a grocery store. For each site $s$, the data contain $\pi_s$, the probability that site $s$ would have a grocery store.\footnote{In practice, Pollmann estimates the $\pi_s$ using a convolutional neural network trained on characteristics of the sites. Following Pollmann's ``hypothetical experiment perspective'', however, we treat the $\pi_s$ as known. Pollmann also provides conditions under which one can ignore estimation error in the $\pi_s$ if one uses an estimator based on a doubly-robust score; in his Table I, he finds similar results using the doubly-robust score as he does treating the $\pi_s$ as known. We therefore focus on the hypothetical experiment perspective for simplicity, although one could likewise construct the CC estimator based on the doubly robust score.}

\paragraph{Simulation design.} We consider simulations under the sharp null of no treatment effects based on Pollmann's posted data. That is, the potential outcomes $Y_i(\mathbf{s}_i)$ are set equal to the observed value $Y_i$ in the data for all $\mathbf{s}_i$. We then simulate different data realizations by re-drawing the treatment statuses $T_s \sim \bernoulli{\pi_s}$, independently across $s$. We first calculate Pollmann's variance estimator and the CC estimator based on the true scores $\psi_k$ (which are known since we know $\mu_t(d),\mu_c(d)$) and compare these relative to the estimator $\tilde\tau$. We then calculate the feasible variance estimators for $\hat\tau$ that plug-in the estimated score and compare these to the estimator $\hat\tau$. Comparing these two approaches illustrates the quality of the first-order approximation to the distribution of $\hat\tau$ based on estimated scores. Following the example, on Pollmann's Github page, we impose that there is no treatment effect for sites more than 0.1 miles away ($d_0=.1$), and we consider the weights $w_i(s,d)$ corresponding to indicators for whether the distance between $i$ and $s$ falls in the following bins: $(0,0.025],(0.025,.05], (0.05,0.075], (0.075,.1]$.

We construct the connected components as follows. Recall that the potential outcome for restaurant $i$, $Y_i(\mathbf{s_i})$, depends on the set of grocery stores within distance $d_0=0.1$ miles. It follows that restaurants $i$ and $j$ have some pair of potential outcomes that are jointly unobservable if and only if they are both within distance $d_0$ of the same site, i.e. $\mathcal{N}_i \cap \mathcal{N}_j \neq \emptyset$. The potential outcomes for units $i$ and $j$ are thus in the same connected component if there is a path of restaurants from $i$ to $j$ that each share a neighboring site. When focusing on a particular distance bin, we can restrict attention to the subset of restaurants that fall within that distance of some grocery store, and hence get non-zero weight in the estimand. The number of connected components thus differs across distance bins, ranging from 89 to 121.

\paragraph{Simulation results.} \Cref{tab:pollmann-sims-1bin} shows the simulated results for the $(0.075,0.1]$ bin. Panel A shows the results for estimating the variance of the Horvitz-Thompson estimator $\tilde\tau$ using the true scores. Since the simulation design imposes the sharp null, we expect the CC estimator to be unbiased, and consistent with this, the simulated relative bias of the CC estimator is very close to zero. Coverage for CIs based on the CC estimator is also very close to the nominal level of 95\%. By contrast, the Pollmann variance estimator is upwards-biased by approximately 7 times the actual variance of $\tilde\tau$. This leads to coverage of 100\% in our simulations. The average length of the CI based on the CC estimator is approximately three times smaller than that based on Pollmann's variance estimator. Panel B shows the results using estimated scores as estimators for $\hat\tau$. Using the naive plug-in CC estimator, we find moderate downward-bias for the CC estimator, with coverage dropping to about 92 percent for nominal 95\% intervals. This is natural, since we expect terms of the form $(Y_k - \hat\mu)^2$ to be downward biased for $(Y_k - \mu)^2$ when the outcome $Y_k$ contributes to the estimate $\hat\mu$, owing to overfitting. Using the leave-out estimates for the scores, we find that the CC estimator is now moderately upwards-biased, and coverage is close to the nominal level (94\%). Both the plug-in and leave-out versions of Pollmann's estimator remain extremely conservative, with coverage of 100 percent and average length roughly 3 times that of the CC estimators. We find similar results for the other bins (\Cref{tab:pollmann-sims-all-bins-true,tab:pollmann-sims-all-bins-feasible}).

\paragraph{Additional simulations.} In \Cref{tab:pollmann-sims-hac} in the appendix, we present some extensions of the simulation design in which we consider designs with heterogeneous treatment effects and also report results for spatial HAC estimators (\citet{conley_gmm_1999}) using the choice of bandwidth and kernel recommended in \citet{wang_design-based_2025}. We find that spatial HAC and the CC estimator both have good coverage and yield CIs of a similar length under the sharp null and under a simple form of treatment effect heterogeneity, but that spatial HAC can be strongly downward biased under treatment effect heterogeneity that violates the homophily assumption in \citet{wang_design-based_2025}. The positive semi-definite projection of HAC has coverage above the nominal level in all our simulation designs, but tends to be somewhat more conservative than the CC estimator. In all our simulation designs, Pollmann's variance estimator has large upwards bias, by a factor of 5 or more relative to the true variance.

\paragraph{Application.} \Cref{tab:pollmann-application} shows the point estimates ($\hat\tau$) and SEs when applied to Pollmann's data (which recall is simulated, but calibrated based on the real Safegraph data). We focus on comparing the SEs using leave-out scores, although the results are similar with plug-in scores. For all of the distance bins, we find that the CC SEs are approximately 3 times smaller than those using Pollmann's approach. For the first three bins, the estimates are statistically significantly different from zero at the 5\% level when using the CC SEs, while none of the estimates are significant using Pollmann's SEs. 

\begin{table}[htbp]
	\centering
	\caption{Simulation results based on \citet{paluck_changing_2016} for the direct effect ($(1,1,0)$ vs. $(0,0,0)$ contrast)}
	\label{tab:simulation-results-30}
	\begin{tabular}{lrrrr}
\toprule
Estimator & Bias & Precision & Coverage & Width \\
\midrule
\multicolumn{5}{l}{\textit{Panel A: Outcome: Wristband (Sharp null)}} \\
Aronow--Samii        & 0.8411 & 0.285 & 0.992 & 1.000 \\
Harshaw--trace       & 0.0460 & 0.201 & 0.960 & 0.752 \\
Harshaw--composite   & 0.0182 & 0.181 & 0.951 & 0.743 \\
CC                   & -0.0090 & 0.206 & 0.940 & 0.732 \\
\addlinespace
\multicolumn{5}{l}{\textit{Panel B: Outcome: Disciplinary (Sharp null)}} \\
Aronow--Samii        & 2.0277 & 0.628 & 1.000 & 1.000 \\
Harshaw--trace       & 0.0879 & 0.334 & 0.970 & 0.596 \\
Harshaw--composite   & 0.0432 & 0.339 & 0.955 & 0.583 \\
CC                   & -0.0061 & 0.376 & 0.939 & 0.567 \\
\addlinespace
\multicolumn{5}{l}{\textit{Panel C: Outcome: Mixed POs}} \\
Aronow--Samii        & 2.7559 & 0.700 & 1.000 & 1.000 \\
Harshaw--trace       & 0.7573 & 0.416 & 0.992 & 0.682 \\
Harshaw--composite   & 0.7520 & 0.388 & 0.991 & 0.682 \\
CC                   & 0.6424 & 0.440 & 0.984 & 0.659 \\
\addlinespace
\multicolumn{5}{l}{\textit{Panel D: Outcome: Synthetic Linear}} \\
Aronow--Samii        & 4.0343 & 0.768 & 1.000 & 1.000 \\
Harshaw--trace       & 0.1114 & 0.416 & 0.968 & 0.465 \\
Harshaw--composite   & 0.0277 & 0.347 & 0.954 & 0.448 \\
CC                   & 0.0162 & 0.384 & 0.945 & 0.444 \\
\addlinespace
\multicolumn{5}{l}{\textit{Panel E: Outcome: Wristband (Constant effect = 0.5)}} \\
Aronow--Samii        & 1.4795 & 0.541 & 0.995 & 1.000 \\
Harshaw--trace       & 0.7884 & 0.497 & 0.979 & 0.846 \\
Harshaw--composite   & 0.4934 & 0.414 & 0.972 & 0.773 \\
CC                   & 0.0050 & 0.292 & 0.940 & 0.634 \\
\addlinespace
\multicolumn{5}{l}{\textit{Panel F: Outcome: Disciplinary (Constant effect = 0.5)}} \\
Aronow--Samii        & 2.5925 & 0.862 & 0.999 & 1.000 \\
Harshaw--trace       & 1.5259 & 0.678 & 0.993 & 0.837 \\
Harshaw--composite   & 0.9360 & 0.603 & 0.985 & 0.731 \\
CC                   & 0.0265 & 0.435 & 0.941 & 0.528 \\
\bottomrule
\end{tabular}
	
	\begin{minipage}{0.85\textwidth}
		\vspace{0.5em}
		\footnotesize
		\textit{Notes:} Results are based on 100,000 simulation draws. Bias is the bias of the variance estimator relative to the true variance. Precision is the standard deviation of the variance estimator relative to the true variance. Coverage is the coverage rate of nominal 95\% confidence intervals. Width is average confidence-interval width relative to Aronow--Samii. All estimators use assignment probabilities calculated from 5 million design draws.
	\end{minipage}
\end{table}

\begin{table}[htbp]
	\centering
	\caption{Simulation results based on \citet{paluck_changing_2016} for comparison of direct vs. indirect effect ($(1,1,0)$ vs. $(1,0,1)$ contrast)}
	\label{tab:simulation-results-32}
	\begin{tabular}{lrrrr}
\toprule
Estimator & Bias & Precision & Coverage & Width \\
\midrule
\multicolumn{5}{l}{\textit{Panel A: Outcome: Wristband (Sharp null)}} \\
Aronow--Samii        & 1.3665 & 1.298 & 0.992 & 1.000 \\
Harshaw--trace       & 0.0554 & 0.649 & 0.947 & 0.664 \\
Harshaw--composite   & 0.3428 & 0.672 & 0.970 & 0.756 \\
CC                   & -0.0019 & 0.661 & 0.941 & 0.644 \\
\addlinespace
\multicolumn{5}{l}{\textit{Panel B: Outcome: Disciplinary (Sharp null)}} \\
Aronow--Samii        & 1.3292 & 1.650 & 0.990 & 1.000 \\
Harshaw--trace       & 0.0596 & 0.810 & 0.944 & 0.671 \\
Harshaw--composite   & 0.2876 & 0.793 & 0.962 & 0.749 \\
CC                   & -0.0007 & 0.803 & 0.935 & 0.652 \\
\addlinespace
\multicolumn{5}{l}{\textit{Panel C: Outcome: Mixed POs}} \\
Aronow--Samii        & 0.6537 & 0.787 & 0.983 & 1.000 \\
Harshaw--trace       & 0.1996 & 0.441 & 0.970 & 0.858 \\
Harshaw--composite   & 0.1873 & 0.448 & 0.963 & 0.853 \\
CC                   & 0.1634 & 0.496 & 0.958 & 0.842 \\
\addlinespace
\multicolumn{5}{l}{\textit{Panel D: Outcome: Synthetic Linear}} \\
Aronow--Samii        & 1.3774 & 1.503 & 0.998 & 1.000 \\
Harshaw--trace       & 0.0894 & 0.733 & 0.963 & 0.674 \\
Harshaw--composite   & 0.0401 & 0.645 & 0.958 & 0.662 \\
CC                   & 0.0174 & 0.711 & 0.950 & 0.650 \\
\addlinespace
\multicolumn{5}{l}{\textit{Panel E: Outcome: Wristband (Constant effect = 0.5)}} \\
Aronow--Samii        & 1.6601 & 1.200 & 0.994 & 1.000 \\
Harshaw--trace       & 1.0938 & 0.715 & 0.988 & 0.895 \\
Harshaw--composite   & 0.9633 & 0.707 & 0.987 & 0.865 \\
CC                   & 0.1041 & 0.626 & 0.944 & 0.639 \\
\addlinespace
\multicolumn{5}{l}{\textit{Panel F: Outcome: Disciplinary (Constant effect = 0.5)}} \\
Aronow--Samii        & 1.6382 & 1.377 & 0.993 & 1.000 \\
Harshaw--trace       & 1.2596 & 0.789 & 0.991 & 0.937 \\
Harshaw--composite   & 0.9649 & 0.757 & 0.986 & 0.872 \\
CC                   & 0.1264 & 0.688 & 0.945 & 0.649 \\
\bottomrule
\end{tabular}
	
	\begin{minipage}{0.85\textwidth}
		\vspace{0.5em}
		\footnotesize
		\textit{Notes:} Results are based on 100,000 simulation draws. Bias is the bias of the variance estimator relative to the true variance. Precision is the standard deviation of the variance estimator relative to the true variance. Coverage is the coverage rate of nominal 95\% confidence intervals. Width is average confidence-interval width relative to Aronow--Samii. All estimators use assignment probabilities calculated from 5 million design draws.
	\end{minipage}
\end{table}

\begin{figure}
	\includegraphics[width = 0.9 \linewidth]{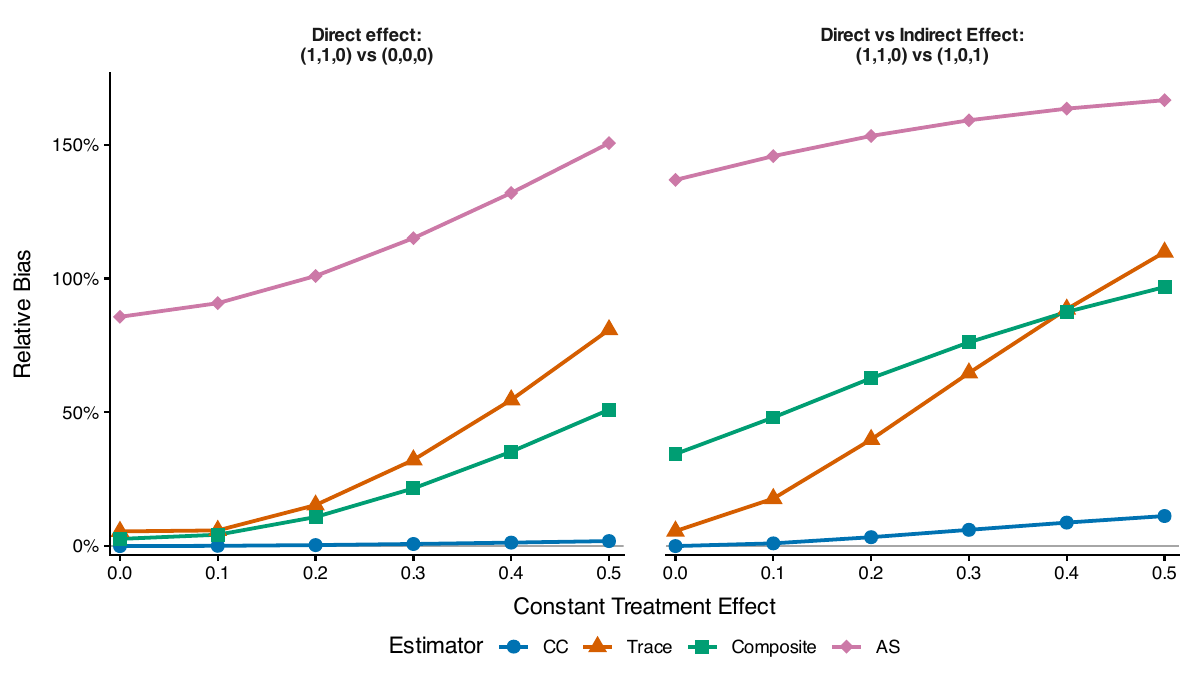}
	\caption{Bias of variance estimators as a function of constant effect $\tau$ using wristband outcome}
	\label{fig:bias-as-function-of-tau}
	\begin{minipage}{0.85\textwidth}
		\vspace{0.5em}
		\footnotesize
		\textit{Notes: } This figure reports the relative bias of the variance estimators as we vary the constant effect $\tau$ for the wristband outcome. $\tau=0$ corresponds to the sharp null (Panel A of \Cref{tab:simulation-results-30,tab:simulation-results-32}) while $\tau=0.5$ corresponds to the results in Panel E of \Cref{tab:simulation-results-30,tab:simulation-results-32}. We calculate the relative bias analytically (by dividing $1/K^2 Y' (M-V) Y$ by $1/K^2 Y' V Y$), although we have verified that the results are quite similar using simulation.
	\end{minipage}
\end{figure}

\begin{table}
	\centering
	\caption{Comparison of standard errors using actual \citet{paluck_changing_2016} data}
	\label{tab:paluck-application}
	\begin{tabular}{lrrrrr}
\toprule
 & & \multicolumn{4}{c}{SE} \\
\cmidrule(lr){3-6}
Outcome & Estimate & CC & Trace & Comp. & AS \\
\midrule
\multicolumn{6}{l}{\textit{Panel A: Direct Effect: $(1,1,0)$ vs. $(0,0,0)$}} \\
Wristband              & 0.546 & 0.043 & 0.111 & 0.091 & 0.128 \\
Discipline             & -0.047 & 0.042 & 0.046 & 0.045 & 0.082 \\
Friends talked         & 0.236 & 0.111 & 0.112 & 0.110 & 0.240 \\
Perceived social norm  & -0.143 & 0.317 & 0.349 & 0.332 & 0.720 \\
\addlinespace
\multicolumn{6}{l}{\textit{Panel B: Direct vs Indirect Effect: $(1,1,0)$ vs. $(1,0,1)$}} \\
Wristband              & 0.144 & 0.060 & 0.073 & 0.089 & 0.111 \\
Discipline             & -0.038 & 0.027 & 0.029 & 0.030 & 0.042 \\
Friends talked         & 0.083 & 0.127 & 0.131 & 0.145 & 0.180 \\
Perceived social norm  & 0.526 & 0.373 & 0.391 & 0.403 & 0.568 \\
\bottomrule
\end{tabular}
 
	\begin{minipage}{0.85\textwidth}
		\vspace{0.5em}
		\footnotesize
		\textit{Notes: } This table reports the Horvitz-Thompson point estimates and SEs for different methods using the actual data from \citet{paluck_changing_2016}. The implementation of the estimators is the same as in the simulations. The Wristband and Discipline outcomes are the same as in the simulations; the remaining two outcomes are taken from Figure 3 of \citet{paluck_changing_2016}: `Friends talked' is an indicator for whether a student talked with friends about conflict; `Perceived social norm' is an index with higher values indicating a student is more anti-conflict. 
	\end{minipage}
\end{table}

\begin{table}
	\caption{Simulation results based on \citet{pollmann_causal_2026}}
	\label{tab:pollmann-sims-1bin}
	\begin{tabular}{lrrrr}
\toprule
Estimator & Bias & Precision & Coverage & Width \\
\midrule
\multicolumn{5}{l}{\textit{Panel A: Variance estimates for $\tilde\tau$ using true scores}} \\
Pollmann & 7.0222 & 2.372 & 1.000 & 1.000 \\
CC & 0.0024 & 1.124 & 0.949 & 0.330 \\
\addlinespace
\multicolumn{5}{l}{\textit{Panel B: Variance estimates for $\hat\tau$ using plug-in/leave-out scores}} \\
Pollmann (plug-in) & 7.4809 & 2.127 & 1.000 & 0.988 \\
Pollmann (leave-out) & 8.1547 & 2.822 & 1.000 & 1.023 \\
CC (plug-in) & -0.0625 & 0.845 & 0.918 & 0.312 \\
CC (leave-out) & 0.2241 & 1.383 & 0.939 & 0.348 \\
\bottomrule
\end{tabular}

	\begin{minipage}{0.85\textwidth}
		\vspace{0.5em}
		\footnotesize
		\textit{Notes:} Results are based on 100,000 simulation draws. Panel A shows results where the variance of the Horvitz-Thompson $\tilde\tau$ is estimated using the true scores $\psi$. Panel B shows results where the variance of the H\'ajek-style estimator $\hat\tau$ is estimated using feasible versions of the scores that replace $\mu_t,\mu_c$ with feasible estimates. We consider two feasible variance estimators, a direct plug-in and leave-one-out version where the estimated value for units in component $c$ excludes units in component $c$. All CI widths are normalized with respect to the CI width for the Pollmann estimator for $\tilde\tau$. Results are shown for the $(0.075,.1]$ mile bin. 
	\end{minipage}
\end{table}

\begin{table}
	\begin{tabular}{lrrrrr}
\toprule
 & & \multicolumn{4}{c}{SE} \\
\cmidrule(lr){3-6}
Distance bin & Estimate & CC-PI & CC-LO & P-PI & P-LO \\
\midrule
$(0.000,0.025]$ & 6.014 & 2.413 & 2.645 & 7.188 & 7.421 \\
$(0.025,0.050]$ & 5.210 & 1.609 & 1.788 & 5.360 & 5.514 \\
$(0.050,0.075]$ & 4.566 & 1.383 & 1.549 & 4.795 & 4.950 \\
$(0.075,0.100]$ & 3.052 & 1.559 & 1.656 & 5.385 & 5.486 \\
\bottomrule
\end{tabular}

	\caption{Application to Pollmann's (simulated) data} \label{tab:pollmann-application}
	\begin{minipage}{0.85\textwidth}
		\vspace{0.5em}
		\footnotesize
		\textit{Notes:} This table shows the estimated causal effect of having a grocery store at the indicated distance bin on restaurant foot traffic. We use the public version of Pollmann's data, which is simulated to match characteristics of the proprietary data he uses in his application. CC-PI and CC-LO denote the SE based on the CC variance estimator using plug-in and leave-out estimates of the scores, respectively. P-PI and P-LO are defined analogously for Pollmann's variance estimator.
	\end{minipage}
\end{table}
\clearpage
\section{Conclusion}

This paper develops a variance estimator that is applicable in general settings where the treatment assignment scheme is known but may exhibit complicated dependence across units. The proposed estimator reduces to the common Neyman variance estimator in a completely randomized experiment and a cluster-robust variance estimator in a cluster-randomized experiment. It moreover generalizes three well-known properties of the Neyman variance estimator to more general designs: it is valid from the design-based perspective, its conservativeness depends on treatment effect heterogeneity, and it is also valid from a super-population sampling perspective. We find the proposed estimator performs well in simulations and two applications. 

Future work could extend the analysis in this paper in a variety of directions. First, we have focused on quadratic-form variance estimators. In the context of a completely randomized experiment, \citet{aronow_sharp_2014} and \citet{abadie_sampling-based_2020} respectively show that one can obtain tighter variance estimators by using distributional information or observable covariates to lower-bound treatment effect heterogeneity and hence the conservativeness of the Neyman variance estimator. Future work might explore whether analogous approaches could be used to lower-bound the conservativeness of the CC estimator in more general designs, which likewise depends on treatment effect heterogeneity. Second, we have focused on settings where the exposure to treatment is discrete. Future work might extend the framework to settings with continuous exposures (possibly under functional form assumptions), which would, for example, incorporate shift-share settings with known shock distributions \citep{borusyak_nonrandom_2023}. Third, while we have shown consistency of the CC estimator for an upper bound on the asymptotic variance of $\hat\tau$, formal justification for normal-based confidence intervals requires a central limit theorem (CLT); such CLTs are currently available for some complicated design-based settings, but not (to our knowledge) for all settings matching the general set-up here. Future work may develop more general CLTs for this setting. Finally, our framework assumes correctly-specified exposure mappings. Future work may explore the properties of the CC estimator when the exposure mapping may be mis-specified, as in \citet{savje_average_2021} or \citet{leung_causal_2022}.

\bibliography{main.bib}

@article{bai_optimality_2022,
	title = {Optimality of {Matched}-{Pair} {Designs} in {Randomized} {Controlled} {Trials}},
	volume = {112},
	issn = {0002-8282},
	url = {https://pubs.aeaweb.org/doi/10.1257/aer.20201856},
	doi = {10.1257/aer.20201856},
	language = {en},
	number = {12},
	urldate = {2026-09-02},
	journal = {American Economic Review},
	author = {Bai, Yuehao},
	month = dec,
	year = {2022},
	pages = {3911--3940},
}

@article{abadie_sampling-based_2020,
	title = {Sampling-{Based} versus {Design}-{Based} {Uncertainty} in {Regression} {Analysis}},
	volume = {88},
	copyright = {© 2020 The Econometric Society},
	issn = {1468-0262},
	url = {https://onlinelibrary.wiley.com/doi/abs/10.3982/ECTA12675},
	doi = {10.3982/ECTA12675},
	language = {en},
	number = {1},
	urldate = {2020-05-06},
	journal = {Econometrica},
	author = {Abadie, Alberto and Athey, Susan and Imbens, Guido W. and Wooldridge, Jeffrey M.},
	year = {2020},
	note = {\_eprint: https://onlinelibrary.wiley.com/doi/pdf/10.3982/ECTA12675},
	pages = {265--296},
}

@article{aronow_sharp_2014,
	title = {Sharp bounds on the variance in randomized experiments},
	volume = {42},
	issn = {0090-5364},
	url = {http://arxiv.org/abs/1405.6555},
	doi = {10.1214/13-AOS1200},
	number = {3},
	urldate = {2021-02-01},
	journal = {The Annals of Statistics},
	author = {Aronow, Peter M. and Green, Donald P. and Lee, Donald K. K.},
	month = jun,
	year = {2014},
	note = {arXiv: 1405.6555},
	pages = {850--871},
}

@article{li_asymptotic_2018,
	title = {Asymptotic theory of rerandomization in treatment–control experiments},
	volume = {115},
	url = {https://www.pnas.org/doi/10.1073/pnas.1808191115},
	doi = {10.1073/pnas.1808191115},
	number = {37},
	urldate = {2026-08-31},
	journal = {Proceedings of the National Academy of Sciences},
	publisher = {Proceedings of the National Academy of Sciences},
	author = {Li, Xinran and Ding, Peng and Rubin, Donald B.},
	month = sep,
	year = {2018},
	pages = {9157--9162},
}

@misc{johari_experimental_2021,
	title = {Experimental {Design} in {Two}-{Sided} {Platforms}: {An} {Analysis} of {Bias}},
	shorttitle = {Experimental {Design} in {Two}-{Sided} {Platforms}},
	url = {http://arxiv.org/abs/2002.05670},
	doi = {10.48550/arXiv.2002.05670},
	urldate = {2026-08-31},
	publisher = {arXiv},
	author = {Johari, Ramesh and Li, Hannah and Liskovich, Inessa and Weintraub, Gabriel},
	month = sep,
	year = {2021},
	note = {arXiv:2002.05670 [stat.ME]},
}

@article{abdulkadiroglu_research_2017,
	title = {Research {Design} {Meets} {Market} {Design}: {Using} {Centralized} {Assignment} for {Impact} {Evaluation}},
	volume = {85},
	issn = {0012-9682},
	shorttitle = {Research {Design} {Meets} {Market} {Design}},
	url = {https://www.econometricsociety.org/doi/10.3982/ECTA13925},
	doi = {10.3982/ECTA13925},
	language = {en},
	number = {5},
	urldate = {2026-08-31},
	journal = {Econometrica},
	author = {Abdulkadiroğlu, Atila and Angrist, Joshua D. and Narita, Yusuke and Pathak, Parag A.},
	year = {2017},
	pages = {1373--1432},
}

@article{chen_nonparametric_2026,
	title = {Nonparametric treatment effect identification in school choice},
	volume = {253},
	issn = {0304-4076},
	url = {https://www.sciencedirect.com/science/article/pii/S0304407625002258},
	doi = {10.1016/j.jeconom.2025.106172},
	urldate = {2026-08-31},
	journal = {Journal of Econometrics},
	author = {Chen, Jiafeng},
	month = jan,
	year = {2026},
	pages = {106172},
}

@misc{chang_design-based_2025,
	title = {Design-based {Estimation} {Theory} for {Complex} {Experiments}},
	url = {http://arxiv.org/abs/2311.06891},
	doi = {10.48550/arXiv.2311.06891},
	urldate = {2026-08-16},
	publisher = {arXiv},
	author = {Chang, Haoge},
	month = may,
	year = {2025},
	note = {arXiv:2311.06891 [econ.EM]},
}

@article{xu_design-based_2026,
	title = {A design-based approach to spatial correlation},
	volume = {257},
	issn = {0304-4076},
	url = {https://www.sciencedirect.com/science/article/pii/S0304407626001314},
	doi = {10.1016/j.jeconom.2026.106310},
	urldate = {2026-08-16},
	journal = {Journal of Econometrics},
	author = {Xu, Ruonan and Wooldridge, Jeffrey M.},
	month = sep,
	year = {2026},
	pages = {106310},
}

@article{zigler_bipartite_2021,
	title = {Bipartite {Causal} {Inference} with {Interference}},
	volume = {36},
	issn = {0883-4237, 2168-8745},
	url = {https://projecteuclid.org/journals/statistical-science/volume-36/issue-1/Bipartite-Causal-Inference-with-Interference/10.1214/19-STS749.full},
	doi = {10.1214/19-STS749},
	language = {en},
	number = {1},
	urldate = {2026-08-07},
	journal = {Statistical Science},
	publisher = {Institute of Mathematical Statistics},
	author = {Zigler, Corwin M. and Papadogeorgou, Georgia},
	month = feb,
	year = {2021},
	pages = {109--123},
}

@book{imbens_causal_2015,
	address = {Cambridge},
	title = {Causal {Inference} for {Statistics}, {Social}, and {Biomedical} {Sciences}: {An} {Introduction}},
	isbn = {978-0-521-88588-1},
	shorttitle = {Causal {Inference} for {Statistics}, {Social}, and {Biomedical} {Sciences}},
	url = {https://www.cambridge.org/core/books/causal-inference-for-statistics-social-and-biomedical-sciences/71126BE90C58F1A431FE9B2DD07938AB},
	doi = {10.1017/CBO9781139025751},
	urldate = {2020-07-23},
	publisher = {Cambridge University Press},
	author = {Imbens, Guido W. and Rubin, Donald B.},
	year = {2015},
}

@article{kernighan_efficient_1970,
	title = {An efficient heuristic procedure for partitioning graphs},
	volume = {49},
	issn = {0005-8580},
	url = {https://ieeexplore.ieee.org/document/6771089},
	doi = {10.1002/j.1538-7305.1970.tb01770.x},
	number = {2},
	urldate = {2026-08-06},
	journal = {The Bell System Technical Journal},
	author = {Kernighan, B. W. and Lin, S.},
	month = feb,
	year = {1970},
	pages = {291--307},
}

@incollection{buluc_recent_2016,
	address = {Cham},
	title = {Recent {Advances} in {Graph} {Partitioning}},
	isbn = {978-3-319-49487-6},
	url = {https://doi.org/10.1007/978-3-319-49487-6_4},
	doi = {10.1007/978-3-319-49487-6_4},
	language = {en},
	urldate = {2026-08-06},
	booktitle = {Algorithm {Engineering}: {Selected} {Results} and {Surveys}},
	publisher = {Springer International Publishing},
	author = {Buluç, Aydın and Meyerhenke, Henning and Safro, Ilya and Sanders, Peter and Schulz, Christian},
	editor = {Kliemann, Lasse and Sanders, Peter},
	year = {2016},
	pages = {117--158},
}

@article{samii_equivalencies_2012,
	title = {On equivalencies between design-based and regression-based variance estimators for randomized experiments},
	volume = {82},
	copyright = {https://www.elsevier.com/tdm/userlicense/1.0/},
	issn = {01677152},
	url = {https://linkinghub.elsevier.com/retrieve/pii/S0167715211003452},
	doi = {10.1016/j.spl.2011.10.024},
	language = {en},
	number = {2},
	urldate = {2026-08-06},
	journal = {Statistics \& Probability Letters},
	author = {Samii, Cyrus and Aronow, Peter M.},
	month = feb,
	year = {2012},
	pages = {365--370},
}

@misc{zhong_unconditional_2024,
	title = {Unconditional {Randomization} {Tests} for {Interference}},
	url = {http://arxiv.org/abs/2409.09243},
	doi = {10.48550/arXiv.2409.09243},
	urldate = {2026-08-06},
	publisher = {arXiv},
	author = {Zhong, Liang},
	month = sep,
	year = {2024},
	note = {arXiv:2409.09243 [econ.EM]
	version: 1},
}

@article{leung_treatment_2020,
	title = {Treatment and {Spillover} {Effects} {Under} {Network} {Interference}},
	volume = {102},
	issn = {0034-6535, 1530-9142},
	url = {https://direct.mit.edu/rest/article/102/2/368-380/96745},
	doi = {10.1162/rest_a_00818},
	language = {en},
	number = {2},
	urldate = {2026-08-06},
	journal = {The Review of Economics and Statistics},
	author = {Leung, Michael P.},
	month = may,
	year = {2020},
	pages = {368--380},
}

@article{baldi_asymptotic_1989,
	title = {Asymptotic normality of some {Graph}-{Related} statistics},
	volume = {26},
	issn = {0021-9002, 1475-6072},
	url = {https://www.cambridge.org/core/journals/journal-of-applied-probability/article/abs/asymptotic-normality-of-some-graphrelated-statistics/FC7075BE0A6E27414E6390A01A03917B?utm_source=chatgpt.com},
	doi = {10.2307/3214327},
	language = {en},
	number = {1},
	urldate = {2026-08-06},
	journal = {Journal of Applied Probability},
	author = {Baldi, Pierre and Rinott, Yosef},
	month = mar,
	year = {1989},
	pages = {171--175},
}

@misc{bhadra_causal_2025,
	title = {Causal {Inference} {Under} {Network} {Interference}},
	url = {http://arxiv.org/abs/2508.06808},
	doi = {10.48550/arXiv.2508.06808},
	urldate = {2026-08-06},
	publisher = {arXiv},
	author = {Bhadra, Subhankar and Schweinberger, Michael},
	month = aug,
	year = {2025},
	note = {arXiv:2508.06808 [stat.ME]},
}

@article{savje_average_2021,
	title = {Average treatment effects in the presence of unknown interference},
	volume = {49},
	issn = {0090-5364, 2168-8966},
	url = {https://projecteuclid.org/journals/annals-of-statistics/volume-49/issue-2/Average-treatment-effects-in-the-presence-of-unknown-interference/10.1214/20-AOS1973.full},
	doi = {10.1214/20-AOS1973},
	language = {en},
	number = {2},
	urldate = {2026-08-06},
	journal = {The Annals of Statistics},
	publisher = {Institute of Mathematical Statistics},
	author = {Sävje, Fredrik and Aronow, Peter M. and Hudgens, Michael G.},
	month = apr,
	year = {2021},
	pages = {673--701},
}

@article{conley_gmm_1999,
	title = {{GMM} estimation with cross sectional dependence},
	volume = {92},
	url = {https://www.sciencedirect.com/science/article/pii/S0304407698000840},
	number = {1},
	urldate = {2026-08-06},
	journal = {Journal of econometrics},
	publisher = {Elsevier},
	author = {Conley, Timothy G.},
	year = {1999},
	pages = {1--45},
}

@article{leung_causal_2022,
	title = {Causal {Inference} {Under} {Approximate} {Neighborhood} {Interference}},
	volume = {90},
	url = {https://onlinelibrary.wiley.com/doi/10.3982/ECTA17841},
	doi = {10.3982/ECTA17841},
	language = {en},
	number = {1},
	urldate = {2026-08-06},
	journal = {Econometrica},
	author = {Leung, Michael P},
	year = {2022},
}

@article{wang_design-based_2025,
	title = {Design-based inference for spatial experiments under unknown interference},
	volume = {19},
	issn = {1932-6157, 1941-7330},
	url = {https://projecteuclid.org/journals/annals-of-applied-statistics/volume-19/issue-1/Design-based-inference-for-spatial-experiments-under-unknown-interference/10.1214/24-AOAS1986.full},
	doi = {10.1214/24-AOAS1986},
	language = {en},
	number = {1},
	urldate = {2026-08-06},
	journal = {The Annals of Applied Statistics},
	publisher = {Institute of Mathematical Statistics},
	author = {Wang, Ye and Samii, Cyrus and Chang, Haoge and Aronow, P. M.},
	month = mar,
	year = {2025},
	pages = {744--768},
}

@article{basse_randomization_2019,
	title = {Randomization tests of causal effects under interference},
	volume = {106},
	issn = {0006-3444},
	url = {https://www.jstor.org/stable/48546381},
	number = {2},
	urldate = {2026-08-06},
	journal = {Biometrika},
	publisher = {[Oxford University Press, Biometrika Trust]},
	author = {Basse, G. W. and Feller, A. and Toulis, P.},
	year = {2019},
	pages = {487--494},
}

@article{athey_exact_2018,
	title = {Exact p-{Values} for {Network} {Interference}},
	volume = {113},
	issn = {0162-1459},
	url = {https://doi.org/10.1080/01621459.2016.1241178},
	doi = {10.1080/01621459.2016.1241178},
	number = {521},
	urldate = {2026-08-06},
	journal = {Journal of the American Statistical Association},
	publisher = {Taylor \& Francis},
	author = {Athey, Susan and Eckles, Dean and Imbens, Guido W.},
	month = jan,
	year = {2018},
	note = {\_eprint: https://doi.org/10.1080/01621459.2016.1241178},
	pages = {230--240},
}

@article{paluck_changing_2016,
	title = {Changing climates of conflict: {A} social network experiment in 56 schools},
	volume = {113},
	shorttitle = {Changing climates of conflict},
	url = {https://www.pnas.org/doi/10.1073/pnas.1514483113},
	doi = {10.1073/pnas.1514483113},
	number = {3},
	urldate = {2026-07-27},
	journal = {Proceedings of the National Academy of Sciences},
	publisher = {Proceedings of the National Academy of Sciences},
	author = {Paluck, Elizabeth Levy and Shepherd, Hana and Aronow, Peter M.},
	month = jan,
	year = {2016},
	pages = {566--571},
}

@article{neyman_application_1923,
	title = {On the {Application} of {Probability} {Theory} to {Agricultural} {Experiments}. {Essay} on {Principles}. {Section} 9.},
	volume = {5},
	issn = {0883-4237},
	url = {https://www.jstor.org/stable/2245382},
	number = {4},
	urldate = {2020-04-21},
	journal = {Statistical Science},
	publisher = {Institute of Mathematical Statistics},
	author = {Neyman, Jerzy},
	year = {1923},
	pages = {465--472},
}

@article{horvitz_generalization_1952,
	title = {A {Generalization} of {Sampling} {Without} {Replacement} {From} a {Finite} {Universe}},
	volume = {47},
	issn = {0162-1459},
	url = {https://www.jstor.org/stable/2280784},
	doi = {10.2307/2280784},
	number = {260},
	urldate = {2026-07-14},
	journal = {Journal of the American Statistical Association},
	publisher = {[American Statistical Association, Taylor \& Francis, Ltd.]},
	author = {Horvitz, D. G. and Thompson, D. J.},
	year = {1952},
	pages = {663--685},
}

@article{borusyak_nonrandom_2023,
	title = {Nonrandom {Exposure} to {Exogenous} {Shocks}},
	volume = {91},
	copyright = {© 2023 The Econometric Society},
	issn = {1468-0262},
	url = {https://onlinelibrary.wiley.com/doi/abs/10.3982/ECTA19367},
	doi = {10.3982/ECTA19367},
	language = {en},
	number = {6},
	urldate = {2024-04-19},
	journal = {Econometrica},
	author = {Borusyak, Kirill and Hull, Peter},
	year = {2023},
	note = {\_eprint: https://onlinelibrary.wiley.com/doi/pdf/10.3982/ECTA19367},
	pages = {2155--2185},
}

@article{Deryugina2017,
  author       = {Tatyana Deryugina},
  title        = {The Fiscal Cost of Hurricanes: Disaster Aid Versus Social Insurance},
  journal      = {American Economic Journal: Economic Policy},
  year         = {2017},
  volume       = {9},
  number       = {3},
  pages        = {168--198},
  doi          = {10.1257/pol.20140296},
  url          = {https://www.aeaweb.org/articles?id=10.1257/pol.20140296}
}

@article{harshaw_optimized_2026,
	title = {Optimized {Variance} {Estimation} under {Interference} and {Complex} {Experimental} {Designs}},
	volume = {0},
	issn = {0162-1459},
	url = {https://doi.org/10.1080/01621459.2026.2627027},
	doi = {10.1080/01621459.2026.2627027},
	number = {0},
	urldate = {2026-06-22},
	journal = {Journal of the American Statistical Association},
	publisher = {Taylor \& Francis},
	author = {Harshaw, Christopher and Middleton, Joel and Sävje, Fredrik},
	month = may,
	year = {2026},
	note = {\_eprint: https://doi.org/10.1080/01621459.2026.2627027},
	pages = {1--14},
}

@article{pollmann_causal_2026,
	title = {Causal {Inference} for {Spatial} {Treatments}},
	url = {http://arxiv.org/abs/2011.00373},
	doi = {10.48550/arXiv.2011.00373},
	journal = {Econometrica},
	author = {Pollmann, Michael},
	year = {2026},
}

@article{dell_development_2020,
	title = {The {Development} {Effects} of the {Extractive} {Colonial} {Economy}: {The} {Dutch} {Cultivation} {System} in {Java}},
	volume = {87},
	issn = {0034-6527},
	shorttitle = {The {Development} {Effects} of the {Extractive} {Colonial} {Economy}},
	url = {https://doi.org/10.1093/restud/rdz017},
	doi = {10.1093/restud/rdz017},
	number = {1},
	urldate = {2026-08-06},
	journal = {The Review of Economic Studies},
	author = {Dell, Melissa and Olken, Benjamin A},
	month = jan,
	year = {2020},
	pages = {164--203},
}

@article{aronow_estimating_2017,
	title = {Estimating {Average} {Causal} {Effects} {Under} {General} {Interference}, with {Application} to a {Social} {Network} {Experiment}},
	volume = {11},
	issn = {1932-6157},
	url = {https://www.jstor.org/stable/26362172},
	number = {4},
	urldate = {2025-05-29},
	journal = {The Annals of Applied Statistics},
	author = {Aronow, Peter M. and Samii, Cyrus},
	year = {2017},
	note = {Publisher: Institute of Mathematical Statistics},
	pages = {1912--1947},
}

@article{delevoye_consistency_2020,
	title = {Consistency of the {Horvitz}–{Thompson} estimator under general sampling and experimental designs},
	volume = {207},
	issn = {03783758},
	url = {https://linkinghub.elsevier.com/retrieve/pii/S0378375819301211},
	doi = {10.1016/j.jspi.2019.12.002},
	language = {en},
	urldate = {2025-05-29},
	journal = {Journal of Statistical Planning and Inference},
	author = {Delevoye, Angèle and Sävje, Fredrik},
	month = jul,
	year = {2020},
	pages = {190--197},
}

@misc{chandrasekhar_general_2024,
	title = {General {Covariance}-{Based} {Conditions} for {Central} {Limit} {Theorems} with {Dependent} {Triangular} {Arrays}},
	url = {http://arxiv.org/abs/2308.12506},
	doi = {10.48550/arXiv.2308.12506},
	urldate = {2026-08-06},
	publisher = {arXiv},
	author = {Chandrasekhar, Arun G. and Jackson, Matthew O. and McCormick, Tyler H. and Thiyageswaran, Vydhourie},
	month = oct,
	year = {2024},
	note = {arXiv:2308.12506 [stat.ME]},
}
\appendix

\section{Derivations for Individual- and Cluster-Randomized Experiments\label{sec:cre-derivations}}

We now derive the forms of the variance estimators under individual- and cluster-level complete randomization, as presented in \Cref{sec:comparison-in-experiments}. To avoid repetition, we derive the form of each estimator in a cluster-randomized experiment where each cluster has size $m = N/G \geq 1$; the individual-level CRE is then the special case where $m=1$. 

As in \Cref{example:cluster-experiment}, we suppose there are $G$ clusters, and denote by $g(i)$ the cluster of unit $i$. We assume there are at least two clusters in each treatment arm, $G_0,G_1 \geq 2$. We suppose each cluster has $m=N/G$ units. We index the potential outcomes by $k=(i,d)$, and let $Y_k = 2 Y_i(1)$ if $d=1$ and $-2Y_i(0)$ if $d=0$, so that $\tau = \frac{1}{K} \sum_k Y_k = \frac{1}{N} \sum_i Y_i(1)-Y_i(0)$. We have $W_k = 1\{D_{g(i)} =d\}$. For cluster $g$, we define the averages
\[\bar{Y}_g(d) = \frac{1}{m} \sum_{i: g(i) =g} Y_i(d), \qquad \bar{Y}_g =  \frac{1}{m} \sum_{i: g(i) =g} Y_i = \bar{Y}_g(D_g).\]
All of the estimators that we consider take the form in \eqref{eqn:hat-v-m}, where $m_{kk'}=v_{kk'} +t_{kk'}$ for $t_{kk'}$ the elements of the slack matrix $T(M) = M-V$. We now solve for $v_{kk'}$ and $t_{kk'}$ for the relevant estimators. Let $k=(i,d)$ and $k'=(i',d')$, and suppose $g(i)=g$ and $g(i')=h$. Then standard calculations for the assignment probabilities in a cluster-randomized CRE yield
\[\pi_{kk'} = \begin{cases*}
	\frac{G_d}{G} & $g=h, d=d'$ \\
	0 & $g=h,d \neq d'$ \\	
	\frac{G_d }{G} \frac{G_d - 1}{G-1} & $g\neq h, d=d'$ \\
	\frac{G_d}{G} \frac{G_{d'}}{G-1} & $g \neq h, d \neq d'$
\end{cases*}, \qquad \qquad
 v_{kk'} = \begin{cases*}
	\frac{G_{1-d}}{G_d} & $g=h, d=d'$ \\
	-1 & $g=h,d \neq d'$ \\	
	\frac{-G_{1-d} }{G_d (G-1)} & $g\neq h, d=d'$ \\
	\frac{1}{G-1} & $g \neq h, d \neq d'$
\end{cases*}.\]
Recall that each connected component collects all the potential outcomes for a particular cluster, so $C=G$ and $k,k'$ are in the same connected component if and only if $g=h$. Hence, for the CC estimator we have
\[t_{kk'}^{CC} = s_{kk'} = \begin{cases*}
	1 & $g=h$ \\
	-1/(G-1) & $g \neq h$ 
\end{cases*}.\]
The direct AS estimator has slack $T^{AS} = D + A$ where $D$ is the degree matrix of $A$. Note that any potential outcome $(i,d)$ is jointly unobservable with the opposite-treatment potential outcomes for units in the same cluster, and therefore $D_{kk}=m$ for all $k$. Likewise, $A_{kk'}=1$ if and only if $g=h, d \neq d'$. It follows that 
\[t_{kk'}^{AS} = \begin{cases*}
	m & $k=k'$ \\
	1 & $g =h, d \neq d'$ \\
	0 & o.w.
\end{cases*}.\]
Finally, consider the collapsed-AS estimator, applied to the cluster sums $Y^{tot}_g(d) = \sum_{k=(i,d), g(i)=g} Y_k = 2 (2d-1) m \bar{Y}_g(d)$. (Since all clusters are the same size, applying collapsed-AS to cluster-sums and cluster-averages is equivalent.) Note that $Y^{tot}_{g}(d)$ and $Y^{tot}_{g'}(d')$ are jointly unobservable if and only if $g=g', d=1-d'$. Hence, if we consider the ``collapsed'' joint unobservability graph applied to cluster-level potential outcomes $(g,d)$, every observation has degree 1, and the adjacency graph $A^G$ has $A^G_{(g,d),(g',d')}=1$ if and only if $g=g', d=1-d'$. The conservativeness of the collapsed-AS estimator can thus be written as $ \frac{1}{K^2}\mathbf{Y}_{tot}' (I_{2G} + A^G) \mathbf{Y}_{tot}$, where $\mathbf{Y}_{tot}$ collects the  $Y^{tot}_{g}(d)$. Note that $\mathbf{Y}_{tot} = \tilde{H}'Y$, where $\tilde{H}$ is the $2N \times 2G$ matrix indicating cluster-by-$d$ membership  (equivalently, connected component-by-$d$ membership), i.e. $\tilde{H}_{(i,d),(g,d')} = 1\{g(i) =g, d=d' \}$ . Pulling back to the individual level potential outcomes, we see that the conservativeness of the collapsed-AS estimator is $\frac{1}{K^2}Y' T^{\textrm{AS,collapsed}} Y$ where $T^{\textrm{AS,collapsed}} = \tilde{H} (I_{2G} + A^G) \tilde{H}'$, which has elements 
 \[t_{kk'}^{\textrm{AS,collapsed}} = \begin{cases*}
 	1 & $g=h$ \\
 	0 & o.w.
 \end{cases*}.\]
Note that $T^{\textrm{AS,collapsed}}$ can equivalently be written as $HH'$, where $H$ is the $2N \times G$ cluster (equivalently, component) membership matrix. 

We now solve for the variance estimators. Observe that
\[v_{kk'} + t^{CC}_{kk'} = \begin{cases*}
	\frac{G}{G_d} & $g=h,d=d'$ \\
	0 & $g=h, d \neq d'$ \\
	-\frac{G}{G_d (G-1)} & $g \neq h, d=d'$ \\
	0 & $g \neq h, d \neq d'$
\end{cases*}. \]
Hence, the CC estimator puts non-zero weight on observed pairs $Y_k Y_{k'}$ that share the same treatment status, so we can write $$\VhatCC  = \frac{1}{K^2} \sum_{k,k':k=(i,1),k'=(i',1)} \frac{W_k W_{k'}}{ \pi_{kk'} } (v_{kk'} + t^{CC}_{kk'} ) Y_k Y_{k'} + \frac{1}{K^2} \sum_{k,k':k=(i,0),k'=(i',0)} \frac{W_k W_{k'}}{ \pi_{kk'} } (v_{kk'} + t^{CC}_{kk'})  Y_k Y_{k'} =: \widehat{V}_{\mathrm{CC},1} + \widehat{V}_{\mathrm{CC},0}.$$
Focusing on the first term, and using the expressions derived above, we see that
\begin{align*}
	\widehat{V}_{\mathrm{CC},1}  = \frac{1}{N^2} \sum_{i,i' \in \mathcal{I}_1 : g(i) = g(i')} \bigg( \frac{G}{G_1} \bigg)^2 Y_i Y_{i'} -  \frac{1}{N^2} \sum_{i,i' \in \mathcal{I}_1: g(i) \neq g(i')} \bigg( \frac{G}{G_1} \bigg)^2 \frac{1}{G_1-1} Y_i Y_{i'}
\end{align*}
where $\mathcal{I}_1 := \{i \,:\, D_{g(i)}=1\}$ is the set of treated indices. Note that
\[ \frac{1}{N^2} \sum_{i,i' \in \mathcal{I}_1 : g(i) = g(i')}  Y_i Y_{i'}  = \frac{m^2}{N^2}  \sum_{g: D_g= 1}  \bar{Y}_g^2 = \bigg( \frac{G_1}{G} \bigg)^2 \frac{1}{G_1^2}  \sum_{g: D_g= 1}  \bar{Y}_g^2 .  \]
Likewise, 
\begin{align*}
\frac{1}{N^2}
\sum_{i,i' \in \mathcal{I}_1: g(i) \neq g(i')}
Y_iY_{i'}
&=
\frac{1}{N^2}
\sum_{i,i' \in \mathcal{I}_1}
Y_iY_{i'}
-
\frac{1}{N^2}
\sum_{i,i' \in \mathcal{I}_1:\, g(i)=g(i')}
Y_iY_{i'} \\
&=  \bigg(\frac{G_1}{G}\bigg)^2  \bar{Y}_1^2 - \bigg( \frac{G_1}{G} \bigg)^2 \frac{1}{G_1^2}  \sum_{g: D_g= 1}  \bar{Y}_g^2 ,
\end{align*}
where $\bar{Y}_1 = \frac{1}{G_1 m} \sum_{i:D_{g(i)}=1} Y_i = \frac{1}{G_1} \sum_{g:D_g=1} \bar{Y}_g$ is the sample average among treated units (equivalently, among treated clusters). Putting the previous displays together, we see that
\[\widehat{V}_{\mathrm{CC},1}  = \frac{1}{G_1} \bigg( \frac{1}{G_1 -1} \sum_{g:D_g =1} \bar{Y}_g^2 - \frac{G_1}{G_1-1} \bar{Y}_1^2 \bigg) = \frac{1}{G_1} \hat{s}^2_{cl,1}
.   \]
By analogous argument, we see that $\widehat{V}_{\mathrm{CC},0} = \frac{1}{G_0} \hat{s}^2_{cl,0}$, and hence $\VhatCC = \frac{1}{G_1} \hat{s}^2_{cl,1} + \frac{1}{G_0} \hat{s}^2_{cl,0} = \VhatNeymanCl$.

Next observe that $r_{kk'} := t^{\textrm{AS,collapsed}}_{kk'} - t^{CC}_{kk'} = 1/(G-1)$ if $g \neq h$ and is equal to zero otherwise. Hence, $\VhatAScollapsed  = \VhatNeymanCl  + \frac{1}{K^2} \sum_{k,k': \pi_{kk'}>0} \frac{W_k W_{k'}}{\pi_{kk'}} r_{kk'} Y_k Y_{k'}$. The latter summation can be broken into three sums, two involving same-treatment pairs and one involving opposite-treatment pairs. The same-treatment pairs take the form 
\[ \frac{1}{N^2} \sum_{i,i' \in \mathcal{I}_d, g(i) \neq g(i')} \frac{G}{G_d (G_d-1)} Y_{i} Y_{i'} = \frac{1}{G} \frac{1}{G_d (G_d-1)} \sum_{g,g': g \neq g', D_g = D_{g'}=d} \bar{Y}_g \bar{Y}_{g'} \]

Now, observe that 
$$\sum_{g,g': g \neq g', D_g = D_{g'}=d}  \bar{Y}_g \bar{Y}_{g'}  = \sum_{g,g': D_g = D_{g'}=d} \bar{Y}_{g} \bar{Y}_{g'} - \sum_{g: D_g = d} \bar{Y}_g^2 = G_d^2 \bar{Y}_d^2 - \sum_{g:D_g=d} \bar{Y}_g^2 = -(G_d-1) \hat{s}^2_{cl,d} + (G_d -1)G_d \bar{Y}_d^2$$
and hence the cross-cluster terms reduce to 
\[ \frac{1}{G} \left[ \bar{Y}_d^2 - \frac{ \hat{s}_{cl,d}^2 }{G_d}  \right]  .\]
The opposite-arm terms take the form
\[ \frac{2}{N^2} \sum_{i \in \mathcal{I}_0, i' \in \mathcal{I}_1, g(i) \neq g(i') } \frac{G}{G_1 G_0} -Y_{i} Y_{i'} = \frac{2}{G} \frac{1}{G_1 G_0} \sum_{g,g':D_g=1, D_{g'}=0}- \bar{Y}_g \bar{Y}_{g'}  =- \frac{2}{G}  \bar{Y}_1 \bar{Y}_0  . \]
Combining terms, we see that 
\[\VhatAScollapsed  = \VhatNeymanCl + \frac{1}{G} \left( \bar{Y}_1^2 + \bar{Y}_0^2 - 2 \bar{Y}_1 \bar{Y}_0 - \frac{\hat{s}_{cl,1}^2 }{G_1} - \frac{\hat{s}_{cl,0}^2 }{G_0}  \right)  = (1- 1/G) \VhatNeymanCl + \frac{1}{G} \hat\tau^2, \]
where we use the fact that $\hat\tau^2 = (\bar{Y}_1 - \bar{Y}_0 )^2$. 

Finally, note that $\VhatAS=\VhatAScollapsed$ when $m=1$, while when $m>1$, we have that
\[r^{AS}_{kk'}:= t^{AS}_{kk'}- t^{\textrm{AS,collapsed}}_{kk'} = \begin{cases}
	m-1 & k=k' \\
	-1 & g=h, d=d' , k \neq k'\\
	0 & \text{ o.w. }
\end{cases}.\]
We thus see that 
\[\VhatAS = \VhatAScollapsed +  \frac{1}{N^2} \sum_{i,i' \in \mathcal{I}_1,g(i)=g(i')} \frac{G}{G_1}  (m 1\{i=i'\} - 1) Y_i Y_{i'} + \frac{1}{N^2} \sum_{i,i' \in \mathcal{I}_0, g(i) = g(i')} \frac{G}{G_0}  (m 1\{i=i'\} - 1) Y_i Y_{i'}.\]
Focusing on the second term on the right-hand side, we see that it can be written as 
$$ \frac{1}{N^2}  \frac{G}{G_1}  \sum_{g:D_g= 1} \left( \sum_{i: g(i)=g} m Y_i^2 - \sum_{i,i':g(i)=g(i')=g} Y_{i} Y_{i'} \right)  =  \frac{1}{N^2}  \frac{G}{G_1}  \sum_{g:D_g= 1} \left(m^2 \frac{m-1}{m} s_g^2 \right)= \frac{1}{G^2} \frac{G}{G_1} \frac{m-1}{m} \sum_{g:D_g=1}s_g^2.$$
By an analogous argument for the third term, we obtain that when $m>1$,
\[\VhatAS = \VhatAScollapsed +  \frac{m-1}{m G^2} \left( \frac{G}{G_1} \sum_{g:D_g=1} s_g^2 + \frac{G}{G_0} \sum_{g:D_g=0} s_g^2 \right) = \VhatAScollapsed +\VhatWithin. \]

The following results formalize the trace-optimal and Frobenius-optimal solutions in the cluster-randomized experiment (again, nesting the individual-CRE as the special case $m=1$). 

\begin{prop} \label{prop:trace-optimal-cluster-exp}
	Consider a cluster-randomized experiment, as in \Cref{example:cluster-experiment}, with $G$ clusters each of size $m = N/G$. Suppose that $G_0,G_1 \geq 2$. Then the estimator $\VhatM$ of the form in \eqref{eqn:hat-v-m} is trace-optimal, i.e. $M$ is a solution to \eqref{eqn:g-optimization} with $g(\cdot) = \tr(\cdot)$, if and only if $M-V = H R H'$ for $R$ a $G \times G$ matrix with $R \succeq 0$ and $R_{gg}=1$ for all $g$. The CC estimator and collapsed-AS estimator both correspond to trace-optimal solutions. When $m=1$, we have a completely randomized experiment, and $\VhatAScollapsed= \VhatAS$, so $\VhatAS$ is also trace-optimal.
\end{prop}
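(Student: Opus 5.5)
The plan is to work directly with the slack matrix $T := M - V$ and translate the feasibility constraints of \eqref{eqn:g-optimization} into constraints on $T$. Then I lower-bound $\tr(T)$ by a constant attained by every matrix of the form $HRH'$, and finally characterize the equality cases. Since $\tr(M-V)$ and $\tr(M)$ differ by the constant $\tr(V)$, it does not matter which of the two is minimized.

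\textbf{Step 1 (feasibility constraints on $T$).} From the formula for $\pi_{kk'}$ in \Cref{sec:cre-derivations}, $\pi_{kk'}=0$ exactly when $k=(i,d)$ and $k'=(i',1-d)$ lie in the same cluster. Cross-cluster pairs always have $\pi_{kk'}>0$, because $G_0,G_1\geq 2$. For these same-cluster, opposite-arm pairs we have $v_{kk'}=-1$, so the requirement $M_{kk'}=0$ becomes $T_{kk'}=1$. All other entries of $T$ are free, subject only to $T\succeq 0$. The feasible set is therefore
\[
\{T \succeq 0 : T_{kk'}=1 \text{ whenever } g(i)=g(i'),\ d\neq d'\}.
\]

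\textbf{Step 2 (lower bound).} For each unit $i$, pair $k=(i,1)$ with $k'=(i,0)$. The $2\times 2$ principal submatrix of $T$ on $\{k,k'\}$ is PSD, which gives $T_{kk}T_{k'k'}\geq T_{kk'}^2 = 1$. By AM--GM, $T_{kk}+T_{k'k'}\geq 2$. Summing over the $N$ units gives $\tr(T)\geq 2N = K$.

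\textbf{Step 3 (attainment and the "if" direction).} Take any $T=HRH'$ with $R\succeq 0$ and $R_{gg}=1$. This $T$ is PSD, and $T_{kk'}=R_{g(i)g(i')}$, which equals $1$ for same-cluster pairs, so $T$ is feasible. Its diagonal is identically $1$, so $\tr(T)=K$ and $T$ is optimal.

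In particular, the CC estimator corresponds to $R=\Lambda_{CC}$, the matrix with $1$ on the diagonal and $-1/(G-1)$ off the diagonal. This matrix is PSD with eigenvalues $0$ and $G/(G-1)$. Collapsed AS corresponds to $R=I_G$, using the computation $T^{\mathrm{AS,collapsed}}=HH'$ from \Cref{sec:cre-derivations}. When $m=1$, the within-cluster correction $\VhatWithin$ vanishes, so $\VhatAS=\VhatAScollapsed$ and AS is also trace-optimal.

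\textbf{Step 4 (the "only if" direction, the main step).} Suppose $\tr(T)=K$. Then equality holds in Step 2 for every unit. Equality $T_{kk}+T_{k'k'}=2$ together with $T_{kk}T_{k'k'}\geq 1$ forces $T_{kk}=T_{k'k'}=1$, so $T$ has unit diagonal.

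Now write $T=XX'$, whose rows $x_k$ are unit vectors. For a same-cluster, opposite-arm pair, $x_k'x_{k'}=1$ with both vectors of unit norm, so Cauchy--Schwarz forces $x_k=x_{k'}$. Within a cluster $g$, every $(i,1)$ therefore equals every $(j,0)$, and by transitivity through a common opposite-arm element all vectors in the cluster coincide, equal to some common $z_g$. Hence $T=HRH'$ with $R_{gh}=z_g'z_h$. This $R$ is a Gram matrix, so $R\succeq 0$, and $R_{gg}=\|z_g\|^2=1$.

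I expect this last step to be the only delicate point. It relies on the Gram-vector argument to propagate equality throughout each cluster, which works because every same-arm pair within a cluster is linked through an opposite-arm element of that cluster.
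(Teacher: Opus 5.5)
Your proposal is correct and follows the paper's proof essentially step by step. It uses the same reduction to the slack $T=M-V$, the same per-unit $2\times 2$ principal-submatrix and AM--GM bound giving $\tr(T)\geq K$, the same attainment by $HRH'$, and the same equality analysis forcing a unit diagonal. The only difference is cosmetic: you propagate equality within each cluster through a Gram factorization $T=XX'$, which also gives $R\succeq 0$ for free. The paper instead deduces $T(e_k-e_{k'})=0$ from $(e_k-e_{k'})'T(e_k-e_{k'})=0$ and then uses that $H$ has full column rank to get $R\succeq 0$. These are the same argument, since $\|x_k-x_{k'}\|^2=(e_k-e_{k'})'T(e_k-e_{k'})$.
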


\begin{prop} \label{prop:collapsed-as-frobenius-optimal}
	Consider a cluster-randomized experiment, as in \Cref{example:cluster-experiment}, with $G$ clusters each of size $m = N/G$. Suppose that $G_0=G_1 \geq 2$. Then the collapsed AS estimator is the unique Frobenius-norm minimizer, i.e. $M = V + HH'$ is the unique solution to \eqref{eqn:g-optimization} with $g(\cdot) = || \cdot + V||_F$. When $m=1$, we have a completely randomized experiment, and $\VhatAScollapsed= \VhatAS$, so $\VhatAS$ is the Frobenius-norm minimizer.
\end{prop}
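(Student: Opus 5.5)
We will treat the problem as a convex program in the slack $T = M - V$ and certify optimality of $T^\star = HH'$ through sufficient KKT conditions. Uniqueness will then follow from strict convexity of the squared Frobenius norm.

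\textbf{Reformulation.} Minimizing $\|M\|_F$ is equivalent to minimizing $f(T)=\tfrac12\|V+T\|_F^2$ over $T \succeq 0$. The equality constraints $M_{kk'}=0$ for $\pi_{kk'}=0$ become $T_{kk'} = -v_{kk'} = 1$ for pairs in the same cluster with opposite $d$, since the appendix gives $v_{kk'}=-1$ there. The feasible set is convex and $f$ is strictly convex, so there is at most one minimizer.

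\textbf{Feasibility of $T^\star$.} We have $T^\star=HH'\succeq 0$. Its entries equal $1$ exactly for same-cluster pairs, so in particular on all jointly unobservable pairs. Hence $T^\star$ is feasible.

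\textbf{Sufficient KKT conditions.} For a convex problem these are sufficient without any constraint qualification. We need a multiplier matrix $\Lambda$ supported on the unobservable pairs and a matrix $Z$ such that
\[ V + T^\star = Z + \Lambda, \qquad Z \succeq 0, \qquad \langle Z, T^\star\rangle = 0 .\]
Since $Z\succeq 0$ and $T^\star = HH'$, the condition $\langle Z, HH'\rangle = \tr(H'ZH)=0$ is equivalent to $ZH=0$.

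\textbf{Construction of the multiplier.} Using the formulas for $v_{kk'}$ and $G_0=G_1=G/2$, the entries of $M^\star = V+HH'$ are as follows:
\begin{itemize}
\item same cluster, same $d$: $G/G_d = 2$;
\item same cluster, opposite $d$: $0$;
\item different clusters, same $d$: $-G_{1-d}/(G_d(G-1)) = -1/(G-1)$;
\item different clusters, opposite $d$: $1/(G-1)$.
\end{itemize}
Take $\Lambda$ to equal a constant $\lambda$ on every unobservable pair. Then $Z = M^\star - \Lambda$ has same-cluster opposite-$d$ entries equal to $-\lambda$.

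Summing a row of $Z$ (cluster $g$, arm $d$) over the columns of cluster $g' \ne g$ gives $m\big(-\tfrac{1}{G-1}+\tfrac{1}{G-1}\big)=0$. Summing over cluster $g$ itself gives $m(2-\lambda)$. So $ZH=0$ forces $\lambda=2$.

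With $\lambda = 2$, $Z$ factors as a Kronecker product. Let $u=(1,-1)'$ index treatment arms and $J_m$ be the $m\times m$ all-ones matrix. Then
\[ Z = (uu') \otimes \Big[\big(2+\tfrac{1}{G-1}\big) I_G - \tfrac{1}{G-1} J_G\Big] \otimes J_m , \]
up to ordering of indices. The middle factor has eigenvalue $2+\tfrac{1}{G-1}$ on $\mathbf 1_G^\perp$ and eigenvalue $1$ on $\mathbf 1_G$. All three factors are therefore PSD, so $Z\succeq 0$, and KKT holds.

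\textbf{Where equal arm sizes are used.} The main delicate step is checking that a single constant $\lambda$ works. This relies on $G_0=G_1$: it makes the same-cluster same-$d$ entry $G/G_d$ identical across arms and makes the two cross-cluster same-$d$ entries equal, so that $Z$ has the symmetric $uu'$ structure. In the unbalanced case, one would need $d$-dependent multipliers and the factorization would fail.

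\textbf{Conclusion.} $T^\star=HH'$ is optimal and, by strict convexity, the unique optimum. The appendix derivation shows that the estimator with slack $HH'$ is $\VhatAScollapsed$. For $m=1$ that derivation also gives $\VhatAScollapsed=\VhatAS$, which yields the final claim.
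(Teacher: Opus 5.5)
Your proof is correct, and it takes a genuinely different route from the paper.

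\textbf{The paper's route.} The paper never guesses the optimizer. It uses strict convexity to get a unique minimizer. It then notes that $V$, the objective, and the constraints are invariant under three operations: permuting units within a cluster, permuting clusters, and swapping the treatment labels. The last invariance is where $G_0=G_1$ enters. This forces the minimizer into a four-parameter family $(\tilde a,\delta,x,y)$. The paper then shows $\delta=0$ and rewrites $T\succeq 0$ as eigenvalue inequalities for matrices of the form $c_1 I_G + c_2\mathbf{1}_G\mathbf{1}_G'$. Minimizing the resulting separable objective by hand gives $a=1$ and $x=y=0$, i.e.\ $T=HH'$.

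\textbf{Your route.} You certify the candidate $HH'$ directly with the dual pair $\Lambda\equiv 2$ on the jointly unobservable pairs and $Z=(uu')\otimes C\otimes J_m$. Each piece checks out:
\begin{itemize}
\item the entries of $V+HH'$ are as you list them;
\item $ZH=0$ forces $\lambda=2$;
\item $C\succ 0$, with eigenvalues $1$ and $2+\tfrac{1}{G-1}$;
\item $ZH=0$ also follows at once from $u'\mathbf{1}_2=0$;
\item sufficiency holds, since for any feasible $T$ one has $\langle V+HH',\,T-HH'\rangle=\langle Z,T\rangle\ge 0$.
\end{itemize}

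\textbf{Comparison.} Your certificate is shorter and easy to verify. It avoids both the symmetrization step and the PSD case analysis. The paper's reduction has the advantage of deriving the answer rather than presupposing it. The same machinery, minus the label-swap symmetry, is what one would use to compute the optimizer in unbalanced designs.

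\textbf{A correction to your side remark on the unbalanced case.} When $G_0\neq G_1$, arm-dependent multipliers do not cure the problem. The cross-cluster entries of $Z$ are untouched by $\Lambda$, and their block sums equal $m(G_d-G_{1-d})/(G_d(G-1))\neq 0$. So $ZH\neq 0$ for every admissible $\Lambda$. Moreover, $HH'+I$ is strictly feasible, so KKT is also necessary. Hence $HH'$ is simply not the Frobenius minimizer in that case.
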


\section{Proofs}
\paragraph{Proof to \Cref{prop:cc-valid}}
We prove a slightly more general version of \Cref{prop:cc-valid}, which allows for a vector-valued outcome. 
	\begin{prop} \label{prop:cc-valid-vector}
	Suppose $C > 1$. Let $\mathbf{Y}_k = (Y_{k1},\ldots, Y_{kQ})'$ be a $Q$-dimensional vector and let $\boldsymbol{\hat\tau} = \frac{1}{K} \sum_k \frac{W_k}{\pi_k} \mathbf{Y}_k$ be the corresponding vector of Horvitz-Thompson estimates. Let $\VhatCCbold= \frac{1}{K^2} \sum_{k, k': \pi_{kk'}>0} \frac{W_k W_{k'}}{\pi_{kk'}} b_{kk'} \mathbf{Y}_k \mathbf{Y}_{k'}'$ be the extension of $\VhatCC$ to vector-valued outcomes. Then
	
	\begin{equation}
		\E[\VhatCCbold] = \Var(\boldsymbol{\hat\tau}) + \frac{C}{K^2} \Var_c(\boldsymbol{\tau}_c) \succeq \Var(\boldsymbol{\hat\tau}) \label{eqn:conservativeness-vhatcc-vector}	
	\end{equation}
	where $$\Var_c(\boldsymbol{\tau}_c) = \frac{1}{C-1} \sum_c (\boldsymbol{\tau}_c - \bar{\boldsymbol{\tau}}) (\boldsymbol{\tau}_c - \bar{\boldsymbol{\tau}})' $$
	is the finite-population variance of the component sums $\boldsymbol{\tau}_c:=\sum_{k\in\mathcal{K}_c}\boldsymbol{Y}_k$ across components (with $\bar{\boldsymbol{\tau}}:= \frac{1}{C} \sum_c \boldsymbol{\tau}_c$).  	
\end{prop}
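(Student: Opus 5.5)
We compute the expectation of $\VhatCCbold$ term by term and then rewrite the slack as a centered second moment of the component sums.

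\textbf{Step 1: expectation.} Since $\E[W_kW_{k'}]=\pi_{kk'}$, each summand with $\pi_{kk'}>0$ has expectation $b_{kk'}\mathbf{Y}_k\mathbf{Y}_{k'}'$. Hence
\[
\E[\VhatCCbold]=\frac{1}{K^2}\sum_{k,k':\pi_{kk'}>0}(v_{kk'}+s_{kk'})\mathbf{Y}_k\mathbf{Y}_{k'}'.
\]

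\textbf{Step 2: variance of the estimator.} The vector analogue of \eqref{eqn:var-tauhat} gives
\[
\Var(\boldsymbol{\hat\tau})=\frac{1}{K^2}\sum_{k,k'}v_{kk'}\mathbf{Y}_k\mathbf{Y}_{k'}'.
\]
For pairs with $\pi_{kk'}=0$ we have $v_{kk'}=-1$. By construction of the connected components, every such pair lies in a common component, so $s_{kk'}=1$ and $v_{kk'}+s_{kk'}=0$. We may therefore extend the sum in Step 1 to all pairs at no cost:
\[
\E[\VhatCCbold]=\Var(\boldsymbol{\hat\tau})+\frac{1}{K^2}\sum_{k,k'}s_{kk'}\mathbf{Y}_k\mathbf{Y}_{k'}'.
\]
This is the key observation, and it is where the definition of the components enters.

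\textbf{Step 3: the slack term.} Write $S=\frac{C}{C-1}HH'-\frac{1}{C-1}\mathbf{1}\mathbf{1}'$, with $H$ the component membership matrix. Then
\[
\sum_{k,k'}s_{kk'}\mathbf{Y}_k\mathbf{Y}_{k'}'=\frac{C}{C-1}\sum_c\boldsymbol{\tau}_c\boldsymbol{\tau}_c'-\frac{1}{C-1}\Big(\sum_c\boldsymbol{\tau}_c\Big)\Big(\sum_c\boldsymbol{\tau}_c\Big)'.
\]
Using $\sum_c\boldsymbol{\tau}_c=C\bar{\boldsymbol{\tau}}$, the right-hand side equals
\[
\frac{C}{C-1}\Big(\sum_c\boldsymbol{\tau}_c\boldsymbol{\tau}_c'-C\bar{\boldsymbol{\tau}}\bar{\boldsymbol{\tau}}'\Big)=\frac{C}{C-1}\sum_c(\boldsymbol{\tau}_c-\bar{\boldsymbol{\tau}})(\boldsymbol{\tau}_c-\bar{\boldsymbol{\tau}})'=C\,\Var_c(\boldsymbol{\tau}_c).
\]

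\textbf{Step 4: conclusion.} Dividing by $K^2$ gives the equality in \eqref{eqn:conservativeness-vhatcc-vector}. The ordering $\succeq$ holds because $\Var_c(\boldsymbol{\tau}_c)$ is a sum of outer products and hence PSD. \Cref{prop:cc-valid} is the scalar case $Q=1$.

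The only delicate point is Step 2: we must confirm that $v_{kk'}=-1$ exactly when $\pi_{kk'}=0$, which follows from its definition, and that such pairs always share a component. Everything after that is algebra.
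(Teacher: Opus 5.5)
Your proposal is correct and follows the paper's proof essentially step for step. Like the paper, you take Horvitz--Thompson expectations, note that $b_{kk'}=0$ whenever $\pi_{kk'}=0$ so the sum extends to all pairs, and then reduce $\sum_{k,k'} s_{kk'}\mathbf{Y}_k\mathbf{Y}_{k'}'$ to $C\,\Var_c(\boldsymbol{\tau}_c)$. Your explicit justification of the $\succeq$ ordering via outer products is a small addition the paper leaves implicit.
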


\begin{proof}

Note that if $\pi_{kk'}>0$, by standard Horvitz-Thompson logic, we have $\E[ \frac{W_k W_{k'}}{ \pi_{kk'} }]=1$, and hence
\[ \E[\VhatCCbold] = \frac{1}{K^2} \sum_{ \substack{ k,k' \\ \pi_{kk'} >0 }} b_{kk'} \mathbf{Y}_k \mathbf{Y}_{k'}' .\]

Observe that if $\pi_{kk'}=0$, then $v_{kk'} = \frac{ \pi_{kk'} - \pi_{k}\pi_{k'} }{ \pi_{k}\pi_{k'} } = -1$. By construction, if $\pi_{kk'} =0$, then $k,k' \in \mathcal{K}_c$ for some $c$, and hence $s_{kk'}= 1$. It follows that if $\pi_{kk'}=0$, then $b_{kk'}=v_{kk'} + s_{kk'} = 0$. Combining this fact with the previous display, we see that

\[ \E[\VhatCCbold] = \frac{1}{K^2} \sum_{  k,k' } b_{kk'} \mathbf{Y}_k \mathbf{Y}_{k'}' . \] 	
Since $b_{kk'} = v_{kk'} + s_{kk'}$ by definition, we see that 
$$\E[ \VhatCCbold ] = \underbracket{ \frac{1}{K^2} \sum_{k,k'} v_{kk'} \mathbf{Y}_k \mathbf{Y}_{k'}' }_{= \Var(\boldsymbol{\hat\tau})}+ \frac{1}{K^2} \sum_{k,k'} s_{kk'} \mathbf{Y}_k \mathbf{Y}_{k'}' .$$ 
It thus remains to show that $\frac{1}{K^2} \sum_{k,k'} s_{kk'} \mathbf{Y}_k \mathbf{Y}_{k'}' = \frac{C}{K^2} \Var_c(\boldsymbol{\tau}_c) .$ 
Recall that $s_{kk'}=1$ if $k,k' \in \mathcal{K}_c$ for some $c$ and $s_{kk'} = -1/(C-1)$ otherwise. We thus have that
\begin{align*}
\sum_{k,k'} s_{kk'} \mathbf{Y}_k \mathbf{Y}_{k'}' &= \sum_c \sum_{ \substack{k,k' \\ k,k' \in \mathcal{K}_c} } \mathbf{Y}_k \mathbf{Y}_{k'}' - \frac{1}{C-1} \sum_{c,c' : c \neq c'} \sum_{ \substack{k,k' \\ k \in \mathcal{K}_c, k' \in \mathcal{K}_{c'} }} \mathbf{Y}_k \mathbf{Y}_{k'}'  \\
&= \frac{C}{C-1}  \sum_c  \overbracket{ \sum_{ \substack{k,k' \\ k,k' \in \mathcal{K}_c} } \mathbf{Y}_k \mathbf{Y}_{k'}' }^{ \textcolor{blue}{=\boldsymbol{\tau}_c\boldsymbol{\tau}_c'}} - \frac{1}{C-1}  \overbracket{\sum_{ \substack{k,k'}} \mathbf{Y}_k \mathbf{Y}_{k'}' }^{\textcolor{blue}{= (\sum_c \boldsymbol{\tau}_c)(\sum_c \boldsymbol{\tau}_c)' = C^2 \bar{\boldsymbol{\tau}} \bar{\boldsymbol{\tau}}' }} \\
&= \frac{C}{C-1} \sum_c \boldsymbol{\tau}_c \boldsymbol{\tau}_c' - \frac{C^2}{C-1} \bar{\boldsymbol{\tau}} \bar{\boldsymbol{\tau}}' \\
& = C \Var_c(\boldsymbol{\tau}_c),
\end{align*} 
which completes the proof.
\end{proof}

\paragraph{Proof of \Cref{prop:super-pop-interp}}
\begin{proof}
	Let $\bar{\tau} = \frac{1}{C} \sum_c \tau_c = \frac{K}{C} \tau$. Observe that
	\begin{align*}
		\E_Y\big[(C-1) \Var_c(\tau_c)\big] &= \E_Y\bigg[ \sum_c \tau_c^2 - C \bar\tau^2 \bigg] \\
		& = \sum_c \bigg( \Var_Y(\tau_c) + \mu_c^2 \bigg) - C \bigg( \Var_Y(\bar\tau) + \E_Y[\bar\tau]^2 \bigg)  \\
		&=  \sum_c \bigg( \Var_Y(\tau_c) + \mu_c^2 \bigg) - C \bigg( \frac{1}{C^2} \sum_c \Var_Y(\tau_c)   + \bigg( \frac{1}{C}\sum_c \mu_c\bigg)^2 \bigg) \\
		&= \frac{C-1}{C} \sum_c \Var_Y(\tau_c) + \bigg( \underbracket{\sum_c \mu_c^2 - C  \bigg( \frac{1}{C}\sum_c \mu_c\bigg)^2 }_{=(C-1) \Var_c(\mu_c)} \bigg) 
	\end{align*}
	where the third line uses the fact that $\Cov_Y(\tau_c,\tau_{c'})=0$ for $c \neq c'$ to obtain that $\Var_Y(\bar\tau)= C^{-2} \sum_c \Var_Y(\tau_c)$. We thus see that
	\[\E_Y[\Var_c(\tau_c)] = \frac{1}{C} \sum_c \Var_Y( \tau_c )+  \Var_c(\mu_c) = C \Var_Y(\bar\tau) +  \Var_c(\mu_c).\]
	Combining the previous display with  \eqref{eqn:var-tauhat-sampling}, \eqref{eqn:e-vcc-sampling} and \eqref{eqn:e-hat-tau-given-y}, we then see that
	$$\E_{Y,W}[\VhatCC] = \Var_{Y,W}(\hat\tau) + \frac{C}{K^2} \Var_c(\mu_c),$$
	as desired. 
	
\end{proof}

\paragraph{Proof to \Cref{prop:vhatcc-admissible}}
\begin{proof}
	Let $S$ denote the matrix with $kk'$th entry $s_{kk'}$, so that $B=V+S$. For any given $M$, let $T(M) = M-V$, so that $S = T(B)$.  Note that since $V_{kk'}=-1$ for any $k,k'$ such that $\pi_{kk'}=0$, the constraint that $M_{kk'}=0$ if $\pi_{kk'}=0$ is equivalent to the constraint that $(T(M))_{kk'}=1$ if $\pi_{kk'}=0$. It therefore suffices to show that, for some positive definite $\Sigma$, $S$ solves
	\begin{equation}
	\min_T \ \langle \Sigma,T\rangle
	\quad\text{subject to}\quad
	T\succeq0,\qquad T_{kk'}=1\text{ for all }k,k'\text{ s.t. }\pi_{kk'}=0. \label{eqn:sigma-optimization}
\end{equation}
		Let $P_{S^\perp}$ be the orthogonal projection onto the null space of $S$. For $\alpha>0$, define
		\[
		\Sigma_{\alpha}:=P_{S^\perp}+\alpha A.
		\]
		
		We first show that $\Sigma_{\alpha}$ is positive definite for sufficiently small $\alpha>0$. We claim that the matrix $A$ is positive definite on $\operatorname{range}(S)$. To see this, note that any $x\in\operatorname{range}(S)$ is constant within connected components, since the rows of $S$ are identical within each component. Thus for any $x \in \operatorname{range}(S) \setminus \{0\}$, there exist $a_1,\ldots,a_C$ not all equal to zero such that $x_k=a_{c(k)}$ for $c(k)$ the value such that $k\in\mathcal{K}_{c(k)}$. Since each connected component has at least two items, each component contains at least one pair $k,k'$ such that $\pi_{kk'}=0$, or equivalently, $A_{kk'}=1$. Therefore,
	\[
	x'Ax
	=
	\sum_{\substack{k,k'\\  \pi_{kk'}=0}}x_kx_{k'}
	=
	\sum_{c=1}^C m_c a_c^2
	>
	0,
	\]
		where $m_c>0$ is the number of pairs $(k,k')$ in $\mathcal{K}_c$ with $A_{kk'}=1$.
		
		To see that $\Sigma_\alpha$ is positive definite for small $\alpha$, write
		$\mathbb{R}^K=\mathcal{R}\oplus\mathcal{N}$ with
		$\mathcal{R}=\operatorname{range}(S)$ and $\mathcal{N}=\operatorname{null}(S)$. Relative
		to this decomposition, write
		\[
		A=
		\begin{pmatrix}
		A_{\mathcal{R}\mathcal{R}} & A_{\mathcal{R}\mathcal{N}}\\
		A_{\mathcal{N}\mathcal{R}} & A_{\mathcal{N}\mathcal{N}}
		\end{pmatrix}.
		\]
		The preceding display shows that $A_{\mathcal{R}\mathcal{R}}\succ0$; let
		$\mu=\lambda_{\min}(A_{\mathcal{R}\mathcal{R}})>0$. Since $P_{S^\perp}$ is the
		orthogonal projection onto $\mathcal{N}$,
		\[
		\Sigma_\alpha
		=
		\begin{pmatrix}
		\alpha A_{\mathcal{R}\mathcal{R}} & \alpha A_{\mathcal{R}\mathcal{N}}\\
		\alpha A_{\mathcal{N}\mathcal{R}} & I_{\mathcal{N}}+\alpha A_{\mathcal{N}\mathcal{N}}
		\end{pmatrix}.
		\]
		For sufficiently small $\alpha>0$, the lower-right block is positive
		definite and has an inverse with operator norm bounded above by $2$. The Schur complement
		of this lower-right block is
		\[
		\alpha A_{\mathcal{R}\mathcal{R}}
		-
		\alpha^2 A_{\mathcal{R}\mathcal{N}}
		(I_{\mathcal{N}}+\alpha A_{\mathcal{N}\mathcal{N}})^{-1}
		A_{\mathcal{N}\mathcal{R}}.
		\]
		Its smallest eigenvalue is at least
		\[
		\alpha\mu
		-
		2\alpha^2\|A_{\mathcal{R}\mathcal{N}}\|^2,
		\]
		which is strictly positive for all sufficiently small $\alpha>0$. Hence
		$\Sigma_\alpha\succ0$. Set $\Sigma=\Sigma_{\alpha}$ for such an $\alpha$.

	Finally, we will show that $S$ is optimal for the problem \eqref{eqn:sigma-optimization}. It is clear that $S$ is feasible, since by definition $S_{kk'}=1$ for all pairs $k,k'$ with $\pi_{kk'}=0$, and we argued above that
	\[
	x'Sx
	=
	C\,\Var_c\left(\sum_{k\in\mathcal{K}_c}x_k\right)
	\geq0
	\]
	for all $x$. Now take any feasible $T$. Since $A$ is the adjacency matrix of the joint unobservability graph,
	\[
	\langle A,T\rangle
	=
	\sum_{\substack{k,k'\\  \pi_{kk'}=0}}T_{kk'}
	=
	\sum_{\substack{k,k'\\  \pi_{kk'}=0}}1
	=
	\langle A,S\rangle,
	\]
	where the second equality uses the fact that feasibility requires $T_{kk'}=1$ if $\pi_{kk'}=0$. Moreover, $\langle P_{S^\perp},T\rangle\geq0$ because $P_{S^\perp}\succeq0$ and $T\succeq0$, while $\langle P_{S^\perp},S\rangle=0$ because $P_{S^\perp}$ projects onto the null space of $S$, and thus $P_{S^\perp} S =0$. Therefore,
	\[
		\langle \Sigma,T\rangle
		=
		\langle P_{S^\perp},T\rangle+\alpha\langle A,T\rangle
		\geq
		\langle P_{S^\perp},S\rangle+\alpha\langle A,S\rangle
		=
		\langle \Sigma,S\rangle.
	\]
	Thus $S$ is optimal for the problem \eqref{eqn:sigma-optimization}, and therefore $B=V+S$ solves \eqref{eqn:g-optimization} with $g(M-V)=\langle\Sigma,M-V\rangle$.
		
	Finally, since $\Sigma$ is positive definite, $g$ is strictly monotone in the Loewner order: if $D\succeq0$ and $D\neq0$, then $\langle\Sigma,D\rangle>0$. Thus, as argued in \citet{harshaw_optimized_2026}, if there were a feasible $Q\neq B$ with $Q\preceq B$, then $Q-V\preceq B-V$ and strict monotonicity would imply $g(Q-V)<g(B-V)$, contradicting optimality of $B$. Hence $\VhatCC$ is admissible.
\end{proof}

We now provide proofs to \Cref{prop:trace-optimal-cluster-exp,prop:collapsed-as-frobenius-optimal}, since these build on arguments in the proof to \Cref{prop:vhatcc-admissible}. 
\paragraph{Proof to \Cref{prop:trace-optimal-cluster-exp}}
\begin{proof}

Let $T(M)=M-V$. As argued in the proof to \Cref{prop:vhatcc-admissible}, $M$ solves \eqref{eqn:g-optimization} with $g(\cdot) = \tr(\cdot)$ if and only if $T(M)$ solves
	\begin{equation}
	\min_T \ \tr(T)
	\quad\text{subject to}\quad
	T\succeq0,\qquad T_{kk'}=1\text{ for all }k,k'\text{ s.t. }\pi_{kk'}=0. \label{eqn:tr-optimization-T}
\end{equation}
Recall that in a cluster-randomized experiment, each $k$ indexes an $(i,d)$ combination, and since $G_0,G_1 \geq 2$, $\pi_{kk'}=0$ if and only if $\{k,k'\}= \{(i,0),(i',1)\}$ for $i,i'$ in the same cluster ($g(i)=g(i')$). Let $T$ be feasible in \eqref{eqn:tr-optimization-T}. Fix $k=(i,0),k'=(i',1)$ for $i,i'$ in the same cluster, so that $\pi_{kk'}=0$. Consider the principal submatrix $T_{sub}^{kk'}$ corresponding to the indices $k,k'$. By \Cref{lem:trace-principal-submatrix}, $T_{kk} +T_{k'k'} \geq 2$, with equality if and only if $T_{kk}=T_{k'k'}=1$. We thus see that $\tr(T) = \sum_i (T_{(i,0),(i,0)} + T_{(i,1),(i,1)}) \geq K = 2N$, and equality is obtained if and only if $T_{kk}=1$ for all $k$. 

We now show that if $T = HRH'$ for $R \succeq 0$ with $R_{gg}=1$, then $T$ is feasible and $\tr(T)=K$, and hence $T$ is optimal. Since $T = HRH'$ for $R \succeq 0$ it is immediate that $T \succeq 0$. If $\pi_{kk'}=0$, then $\{k,k'\}= \{(i,0),(i',1)\}$ with $g(i)=g(i')=g$ for some $g$, and hence $T_{kk'}= R_{gg}=1$. Thus, $T$ is feasible. Finally, if $k=(i,d)$ with $g(i)=g$, then $T_{kk}=R_{gg}=1$, so $\tr(T)=K$ and hence $T$ is optimal for \eqref{eqn:tr-optimization-T}.

To see that the CC estimator is optimal observe that its slack is $T(B) = S = H R_{CC} H'$ for $R_{CC} = \frac{C}{C-1}  I_C - \frac{1}{C-1} \mathbf{1}_C \mathbf{1}_{C}' $. Recalling that $C=G$ for a cluster-randomized experiment, we see that $R_{CC}$ is $G \times G$, and it is straightforward to verify that $R_{CC} \succeq 0$ with $R_{CC,gg} = 1$ for all $g$. Likewise, the collapsed-AS estimator has slack matrix $H H'$, from which it is immediate that it takes the desired form with $R =I_C$.

We now show the converse direction. Suppose $T$ is optimal. Then it is feasible and has $\tr(T)=K$, in which case $T^{kk'}_{sub}$ is a matrix of 1s for every pair $k =(i,0),k'=(i',1)$ where $g(i) = g(i')$. We claim that the columns of $T$ depend only on the relevant cluster, that is if $k=(i,d)$ and $k'=(i',d')$ for $g(i)=g(i')$, then $T (e_{k}- e_{k'}) = 0$, where $e_k$ is the $k$th basis vector. Consider first the case where $k=(i,d)$ and $k'=(i',1-d)$, in which case $T^{kk'}_{sub}$ is a matrix of 1s. We then see that
$$(e_{(i,d)} - e_{(i',1-d)} )'T(e_{(i,d)} - e_{(i',1-d)} ) = T_{(i,d),(i,d)} + T_{(i',1-d),(i',1-d)} - 2 T_{(i,d), (i',1-d)} = 1+1-2 = 0. $$
Since $T \succeq 0$, it follows that $T(e_{(i,d)} - e_{(i',1-d)})=0$. By analogous argument, $T(e_{(i',1-d)} - e_{(i',d)}) = 0$. Adding these two equations, we see that $T (e_{(i,d)} - e_{(i',d)})=0$, as well. Combined with the result in the previous equation display, we see that $T (e_{(i,d)} - e_{(i',d')}) = 0$ for any $i,i'$ with $g(i)= g(i')$, as we wished to show. Since $T$ is symmetric, it follows that the rows of $T$ also only depend on cluster identity. Hence, we can write $T = Q R Q' $ where $R$ is $G \times G$ and $Q$ is $K \times G$ with $Q_{kg} = 1$ if and only if $k=(i,d)$ with $g(i) =g$. Recall that in a cluster-randomized experiment with $G_0,G_1 \geq 2$, the connected components $\mathcal{K}_c$ collect all the potential outcomes for a particular $g$, and hence $Q=H$. Note that $R_{gg}=1$ for all $g$, since $T_{kk'}=1$ for any $k,k'$ corresponding to $i,i'$ in the same cluster. Note that $\mathrm{rank}(H)=G$, since each row of $H$ is a basis vector and every connected component is non-empty, so there is at least one row of $H$ matching each basis vector. Since $T = H R H' \succeq 0$ and $H$ has rank $G$, it follows that $R \succeq 0$. We have thus shown that if $T$ is optimal, then $T = H R H'$ for $R \succeq 0$ with $R_{gg}=1$.

\end{proof}

\paragraph{Proof to \Cref{prop:collapsed-as-frobenius-optimal}}
\begin{proof}
	As in the proof to \Cref{prop:vhatcc-admissible}, $M$ is optimal in  \eqref{eqn:g-optimization} with $g(\cdot) = || \cdot + V||_F$ if and only if $T(M) = M-V$ is optimal in  
	\[ \min_T ||T +V ||_F \text{ s.t. } T \succeq 0, \qquad T_{kk'}=1 \text{ whenever } \pi_{kk'}=0 .\]
	It thus suffices to show that $T = HH'$ is the unique optimizer to the optimization in the previous display.
	
	We proceed in two steps. First, we show that the optimal $T$ must be in a restricted class of matrices satisfying certain symmetry conditions. Second, we show that $T=HH'$ is optimal among this class. 
	
	\textit{Step 1: Reduction to a symmetric class of solutions.} Optimizing $||T+V||_F$ is equivalent to optimizing the squared Frobenius norm, $||T+V||_F^2$, which is strictly convex. Since the constraint set is convex and the objective is strictly convex, it follows that there is a unique solution. Let $k= (i,d)$ and $k'=(i',d')$. Note that the matrix $V$ has entry $V_{kk'} = \Cov\big( \tfrac{ 1\{D_i = d \} }{ P(D_i = d) }, \tfrac{ 1\{D_{i'} = d' \} }{ P(D_{i'} = d') }   \big) $, which depends on $(i,d),(i',d')$ only through: (a) whether $i,i'$ are in the same cluster, (b) whether $d=d'$, (c) whether $(i,d)=(i',d')$. It follows that $V$ is invariant to permutations of the rows and columns that swap the indices of pairs $(i,d),(i',d)$ for $i,i'$ in the same cluster, swap the cluster labels, or swap the treatment labels. Let $P$ be any permutation matrix that swaps indices in these ways and $\mathcal{G}$ the set of such permutations. It follows that $V = P V P'$ for any $P \in \mathcal{G}$, and hence for any $T$, $||PTP'+V||_F^2 = ||P (T+V) P'||_F^2 = ||T+V||_F^2$, since the Frobenius norm is invariant to permutations of row and column indices. Hence, the objective is invariant to permutations of $T$ by $P \in \mathcal{G}$. Likewise, the constraints that $T \succeq 0$ and $T_{kk'}=1$ when $\pi_{kk'}=0$ are invariant to such permutations (the latter because $\pi_{kk'}=0$ if and only if $V_{kk'} = -1$, and we've shown that $V$ is permutation-invariant). It follows that the unique optimizer $T$ must have $PT P'= T$ for all $P \in \mathcal{G}$, since if not there would exist $\tilde{T} = P T P' \neq T$ for some $P \in \mathcal{G}$, and $\tilde{T}$ would also be feasible and optimal, contradicting the existence of a unique optimal solution. We thus see that $T$ is permutation-invariant. Order the potential outcomes so that all the values of $Y(1)$ for a given cluster come first, followed by all the values of $Y(0)$ for that cluster. Let $T_{gh}$ correspond to the submatrix of $T$ with rows corresponding to potential outcomes $i$ with $g(i)=g$ and columns corresponding to potential outcomes with $g(i) = h$. Let $E= \mathbf{1}_m \mathbf{1}_m'$ be the $m \times m$ matrix of 1s. The symmetries just derived, along with the feasibility constraint, imply that 
	\[ T_{gg} = \twobytwomat{ \tilde{a} E + \delta I  }{ E }{ E }{ \tilde{a} E + \delta I }, \qquad T_{gh} = \twobytwomat{x E}{y E}{y E}{x E} \quad (g \neq h) \]
	for constants $\tilde{a}, \delta,x,y$. Note that if $m=1$ (which corresponds to a completely-randomized experiment), then we can normalize $\delta =0$, since $E=I=1$. Note, further, that if we define $a = \tilde{a} + \tfrac{1}{m} \delta$ and $Q = I - \frac{1}{m} E$, then we can equivalently write
		\[ T_{gg} = \twobytwomat{ a E + \delta Q  }{ E }{ E }{ a E + \delta Q } .\]
		The latter formulation will be more convenient since it implies that $E Q =0$. 
		
		\textit{Step 2: optimization over the restricted class.} Standard calculations for the pairwise treatment probabilities in a cluster-randomized experiment with treatment probability $1/2$ yield, for $r = 1/(G-1)$, 
		\[ V_{gg} = \twobytwomat{ E  }{ - E }{  -E }{ E} , \qquad V_{gh} = \twobytwomat{ -rE  }{ r E }{ rE }{-r E}   \quad (g \neq h).\]
		Hence, for $T$ in the restricted class derived in Step 1, 
		\[ V_{gg} + T_{gg} = \twobytwomat{ (a+1) E + \delta Q  }{ 0 }{ 0 }{ (a+1) E + \delta Q }, \qquad V_{gh} + T_{gh} = \twobytwomat{(x-r) E}{(y+r) E}{(y+r) E}{(x-r) E}  \quad (g \neq h).\]
		Recalling that $E$ and $Q$ are orthogonal in the Frobenius inner product, we see that  
		\begin{align*}
			||V+T||_F^2 &= 2 G (a+1)^2 ||E||_F^2 + 2G \delta^2 ||Q||_F^2 + 2 G(G-1) ((x-r)^2 + (y+r)^2) ||E||_F^2 \\
									&= 2 m^2 \big(G (a+1)^2 + G(G-1) ((x-r)^2 + (y+r)^2) \big) + 2G \delta^2 ||Q||_F^2.
		\end{align*}
	Next, we claim that the optimal $T$ must have $\delta = 0$. It is clear that, holding other parameters constant, the objective is minimized at $\delta =0$. Now, consider the constraint that $T \succeq 0$, and note that $T \succeq 0$ if and only if $v' T v \geq 0$ for any vector $v$. Let $v_{gd}$ be the subvector of $v$ corresponding to cluster $g$ and potential outcome $Y(d)$. We can write any $v_{gd} = \alpha_{gd} \mathbf{1}_m + \beta_{gd} v^{\perp}_{gd}$ where $\mathbf{1}_m' v^{\perp}_{gd} =0$. Note that $Q \mathbf{1}_m = 0$ and $E v^{\perp}_{gd} = 0$. Let $v_1,v_2$ respectively be the vectors that stack the $\alpha_{gd} \mathbf{1}_m$ and $\beta_{gd} v^{\perp}_{gd}$ terms. For fixed $a,x,y$, let $T_{\delta=0}$ be the matrix that sets $\delta=0$ and let $D_{Q}$ be the block diagonal matrix with $Q$ on the diagonals, so that $T = T_{\delta=0} + \delta D_{Q}$. Then for any $v$ we have that
	\[ v'T v = v_1 ' T_{\delta=0} v_1 +  v_2' (\delta D_{Q}) v_2.   \]
	It follows that $T \succeq 0$ if and only if $T_{\delta =0 } \succeq 0$ and $\delta  D_{Q} \succeq 0$. But since $Q \succeq 0$, $\delta D_{Q} \succeq  0$ if and only if $\delta \geq 0$. We thus see that any $(a,x,y)$ that are feasible with $\delta >0$ are also feasible when $\delta = 0$. Since the objective is minimized in $\delta$ at $\delta =0 $, holding other parameters constant, it follows that $\delta = 0$ must be optimal. 
	
	Next, observe that for $v_1$ of the form described above and $T$ with $\delta =0$, we have 
	
		\[ v_1' T  v_1 =  \sum_g \alpha_{g}'  \twobytwomat{a E  }{ E }{  E }{a E} \alpha_g + \sum_g \sum_{g': g' \neq g}  \alpha_g \twobytwomat{x E}{y E}{y E}{x E} \alpha_{g'} \]
		where $\alpha_g' = (\alpha_{g1} \mathbf{1}_m, \alpha_{g0} \mathbf{1}_m).$ Since $\mathbf{1}_m' E \mathbf{1}_m = m^2$, it follows that  
		
		\[ v_1' T  v_1  / m^2 =  \sum_g \twovec{\alpha_{g1}}{\alpha_{g0}} ' \twobytwomat{a   }{1}{ 1}{a}  \twovec{\alpha_{g1}}{\alpha_{g0}} + \sum_g \sum_{g':g' \neq g}  \twovec{\alpha_{g1}}{\alpha_{g0}} ' \twobytwomat{x   }{y}{ y}{x}  \twovec{\alpha_{g'1}}{\alpha_{g'0}}\]
		
		Let $e_+ = \tfrac{1}{\sqrt{2}} \mathbf{1}_2$ and $e_{-}= \tfrac{1}{\sqrt{2}} (1,-1)'$. Note that we have the decompositions
		\[
		\twobytwomat{a}{1}{1}{a}=(a+1) e_{+} e_{+}'+ (a-1)e_{-}e_{-}',
		\qquad
		\twobytwomat{x}{y}{y}{x}=(x+y) e_{+} e_{+}'+ (x-y)e_{-}e_{-}'.
		\]
		
	We can likewise decompose $(\alpha_{g,1},\alpha_{g,0}) = \alpha_{g,+} e_+ + \alpha_{g,-} e_{-} $ for $\alpha_{g,+} = (\alpha_{g,1} + \alpha_{g,0} ) /\sqrt{2}$ and  $\alpha_{g,-} = (\alpha_{g,1} - \alpha_{g,0} ) /\sqrt{2}$. Letting $\alpha_+$ and $\alpha_{-}$ be the corresponding vectors, we see that
	\[ v_1' T  v_1  / m^2 = \alpha_+' \big[ ((a+1) - (x+y) ) I_G + (x+y) \mathbf{1}_G \mathbf{1}_G'  \big]  \alpha_+  +  \alpha_-' \big[ ((a-1) - (x-y) ) I_G + (x-y) \mathbf{1}_G \mathbf{1}_G'  \big]  \alpha_-. \] 
	It follows that $v_1' T v_1 \geq 0$ for all choices of $\alpha_+,\alpha_{-}$ if and only if the two matrices in the quadratic forms in the previous display are PSD. Note that a matrix of the form $c_1 I_G + c_2 \mathbf{1}_G \mathbf{1}_G'$ has eigenvalue of $c_1 + G c_2$ corresponding to eigenvector $\mathbf{1}_G$ and eigenvalue $c_1$ with multiplicity $G-1$, corresponding to the remaining vectors in any orthogonal basis relative to $\mathbf{1}_G$. Hence, $c_1 I_G + c_2 \mathbf{1}_G \mathbf{1}_G'$ is PSD if and only if $c_1 \geq 0$ and $c_1 \geq -G c_2$. Letting $u=(x+y)$ and $v=(x-y)$ for notational convenience, we see that $T$ is PSD iff 
	\begin{align*}
		& (a+1) - u \geq 0 & (a+1)  \geq -(G-1)  u \\
		& (a-1) - v \geq 0 & (a-1)   \geq  -(G-1) v. \numberthis \label{eqn:psd-feasibility}
	\end{align*}
Using the fact that $\delta=0$ and some algebra, we can express the objective as
$$||T+V||_F^2 = 2m^2G ((a+1)^2 + \frac{1}{2} (G-1) (u^2 + (v-2r)^2 )   ),$$
which is additively separable in functions of $a$, $u,$ and $v$. Note that the two constraints on the second line in display \eqref{eqn:psd-feasibility} are satisfied if and only if
\begin{equation}  
 - \frac{a-1}{G-1} \leq  v \leq (a-1) , \label{eqn:constraint-on-v} 
\end{equation}
which can be satisfied only if $a \geq 1$. Observe that the constraints involving $u$ involve only $a$ and not $v$, and for any $a \geq 1$, $u =0$ is feasible. Since the objective is minimized wrt $u$ at $u=0$, it follows that $u=0$ must be optimal. Next, consider the optimization wrt $v$ for a fixed value of $a$. Since only the bottom two constraints in \eqref{eqn:psd-feasibility} involve $v$, for a fixed value of $a$, $v$ is feasible if and only if it satisfies \eqref{eqn:constraint-on-v}. Since $r> 0$, it is clear that the constrained optimum wrt $v$ is $v = \min\{a-1,2r\}$. We are then left with the one-dimensional profiled objective with respect to $a$,
$$ 2m^2 G \big( (a+1)^2 + \frac{1}{2} (G-1) \max\{0, 2r - a +1 \}^2  \big)  $$
subject to the constraint $a \ge 1$. For $a \in [1,1+2r] $, the objective is proportional to 
$$(a+1)^2 + \frac{1}{2} (G-1) (2r - a + 1)^2,$$
which has derivative
$$ 2(a+1) - (G-1) (2r - a +1)  = (G+1)(a-1) + 2 > 0 $$
where the second equality uses the fact that $r = 1/(G-1)$.  Likewise for $a \geq 1+2r$, the objective is proportional to $(a+1)^2$. Hence the objective is increasing in $a$ for $a \geq 1$, and thus the minimum is achieved at $a=1$, which implies that $v = \min\{a-1,2r\}=0$. Note that if $u = (x+y)=0$ and $v= (x-y)=0$, then $x=y=0$. It follows that the optimum is obtained at $T = HH'$, which is precisely the collapsed-AS slack matrix. 
\end{proof}

\paragraph{Proof to \Cref{prop:as-admissibility}}
\begin{proof}
Suppose $|\mathcal{K}_c| \leq 2$ for all $c$. If $|\mathcal{K}_c|=1$ for all $c$, then the AS slack matrix is the zero matrix, and so it is trivially admissible. For the remainder of the proof, we assume $|\mathcal{K}_c| =2$ for at least some $c$. We will show that the AS slack matrix is trace optimal, i.e. is a solution to \eqref{eqn:tr-optimization-T}, and therefore the AS estimator is admissible. By \Cref{lem:trace-principal-submatrix}, for any feasible $T$, the principal submatrix $T_{sub}^{kk'}$ corresponding to any pair $k,k'$ with $\pi_{kk'}=0$ must have trace of at least 2, so that $T_{kk}+T_{k'k'} \geq 2$. It follows that \[\tr(T) = \sum_k T_{kk} \geq \sum_{c} \sum_{k \in \mathcal{K}_c} 1\{|\mathcal{K}_c| = 2\} T_{kk} \geq 2  \sum_{c} 1\{|\mathcal{K}_c| = 2\}\]
where the first inequality uses the fact that $T \succeq 0$ and hence all its diagonal elements are weakly positive. However, the AS slack is $T_{AS} = D + A$, where $D$ is the degree matrix with diagonal element $D_{kk} = \sum_{k'} 1\{\pi_{kk'} = 0\}$. Note that $\tr(A) =0$ by definition, so that $\tr(T_{AS})= \tr(D)$. Since $|\mathcal{K}_c | \leq 2$, $D_{kk}=1$ if $k \in \mathcal{K}_c$ with $|\mathcal{K}_c|=2$ and $D_{kk}=0$ otherwise. Hence, $$\tr(D) = \sum_k 1\{k \in \mathcal{K}_c \text{ s.t. } |\mathcal{K}_c| = 2\} =  2 \sum_c 1\{ |\mathcal{K}_c| = 2\},$$
from which we see that $T_{AS}$ is trace-optimal, and hence AS is admissible.

Conversely, suppose that there exists at least one $c$ such that $|\mathcal{K}_c |\geq 3$. First, suppose that the subgraph for the component $c$ is non-bipartite. Let $T_{AS,c}$ denote the principal submatrix of $T_{AS}$ corresponding to component $c$. Since $T_{AS,c} = D_c + A_c$ is the signless Laplacian of the subgraph of component $c$, which is non-bipartite, it is positive definite. Hence for $\epsilon$ sufficiently small $T_{\epsilon,c} = T_{AS,c} - \epsilon I$ is positive definite. Observe that if the elements are ordered by connected component, then $T_{AS}$ is block diagonal, and recall that a block-diagonal matrix is PSD if and only if all its blocks are PSD. Letting $T_{\epsilon}$ be the matrix that replaces $T_{AS,c}$ with $T_{\epsilon,c}$ and preserves $T_{AS}$ otherwise, we see that $T_{\epsilon} \succeq 0$.  Moreover, since $T_{\epsilon}$ differs from $T_{AS}$ only on the diagonal, and $\pi_{kk}>0$ for all $k$, it follows that we have $T_{\epsilon,kk'}= T_{AS,kk'} = 1$ for all $k,k'$ with $\pi_{kk'}=0$, and hence $T_\epsilon$ is feasible and dominates $T_{AS}$. Next, suppose that the subgraph for component $c$ is bipartite. Partition $\mathcal{K}_c$ into sets $L$ and $R$ so that all pairs $k,k'\in\mathcal{K}_c$ with $\pi_{kk'}=0$ have one entry in $L$ and one entry in $R$. Because $|\mathcal{K}_c|>2$, at least one of $L$ and $R$ has two distinct elements. Suppose, without loss of generality, that $L$ has at least two elements, so there exist $k_1,k_2 \in L$, $k_1 \neq k_2$. Then by construction $\pi_{k_1k_2}>0$.  Let $u = e_{k_1}-e_{k_2}$, for $e_k$ the $k$th basis vector. Let $U = uu'$ and note that $U_{k k'} = u_{k} u_{k'} \neq 0$ if and only if $\{k,k'\}\subseteq \{k_1,k_2\}$, and hence $U_{kk'}=0$ for all $k,k'$ such that $\pi_{kk'}=0$. Hence, for any $\epsilon>0$,  $T_\epsilon = T_{AS} - \epsilon U$ satisfies the constraint that $T_{\epsilon,kk'}=1$ if $\pi_{kk'} =0$.  Next, observe that for any vector $x$, $x' T_{AS} x = \frac{1}{2} \sum_{k,k': \pi_{kk'} =0} (x_k +x_{k'})^2$. It follows that $x \in \mathrm{null}(T_{AS})$ only if $x_{k} + x_{k'}=0$ for all $k,k'$ s.t. $\pi_{kk'}=0$. Because the subgraph for component $c$ is connected and bipartite, any $x \in \mathrm{null}(T_{AS})$ must have $x_\ell=\alpha, x_{r} = -\alpha$ for all $\ell \in L, r \in R$ and some constant $\alpha$: we can construct a sequence $\tilde{k}_1,\ldots,\tilde{k}_{m}$ including all the elements of $\mathcal{K}_c$ (possibly multiple times) such that $\pi_{\tilde{k}_t,\tilde{k}_{t+1}}=0$ for all $t$ and the elements alternate between $L$ and $R$; the constraints $x_{\tilde{k}_t} + x_{\tilde{k}_{t+1}}=0$ can then hold simultaneously only if $x_{\tilde{k}_t} = -x_{\tilde{k}_{t+1}}$ for all $t$. However, since $k_1,k_2$ are both in $L$, we see that $u'x =0$ for any $x \in \mathrm{null}(T_{AS})$ and hence $u \in \mathrm{range}(T_{AS})$. Since $T_{AS} \succeq 0$ and $u$ is in the range of $T_{AS}$, it follows from \Cref{lem:psd-transformation-in-range} that there exists $\epsilon$ sufficiently small such that $T_\epsilon \succeq 0$. Thus, $T_\epsilon$ is feasible and $T_{AS} - T_\epsilon = \epsilon uu' \succeq 0$, so $T_\epsilon$ dominates $T_{AS}$.

\end{proof}

\paragraph{Proof of \Cref{prop:sharp-null-exact}}
\begin{proof}
	\emph{(Proof of part 1.)} We begin by showing that if $T$ has the form in \eqref{eqn:h-lambda-h}, then $\VhatM$ is upwards-biased and sharp-null-exact. Observe that if $\pi_{kk'}=0$, then $k,k' \in \mathcal{K}_c$ for some $c$, and therefore $T_{kk'} = (H \Lambda H')_{kk'} = \Lambda_{cc} = 1$. Further, since $\Lambda \succeq 0$, we have that $\E[\VhatM] - \Var(\hat\tau) = \frac{1}{K^2}(H' Y)' \Lambda (H'Y) \geq 0$, and hence $\VhatM$ is upwards-biased. Next, let $r_i$ be the $K \times 1$ vector with elements $\omega_{i,q}$ in position $(i,q)$ and zeros otherwise. Since all potential outcomes corresponding to the same unit $i$ are in the same connected component, and $\sum_{q} \omega_{i,q}=0$, it follows that $H' r_i = 0$ for all $i$. Under the sharp null, however, $Y = \sum_i Y_i(0) r_i$, which is in the span of the $r_i$, and hence $Y' T Y = 0$, so that $\VhatM$ is sharp-null-exact. Further, observe from the definition of $H$ that $H'Y = \boldsymbol{\tau}$, and hence $\E[\VhatM] - \Var(\hat\tau) = \frac{1}{K^2}(H' Y)' \Lambda (H'Y)= \frac{1}{K^2}\boldsymbol{\tau}' \Lambda \boldsymbol{\tau}$. 
	
	Conversely, suppose that $T=T(M)$ corresponds to the slack of an upwards-biased, sharp-null-exact estimator $\VhatM$. Then $T \succeq 0$ and $T_{kk'}=1$ for all $k,k'$ with $\pi_{kk'}=0$. Further, since $\VhatM$ is sharp-null-exact, we have that  $ (\sum_i Y_i(0) r_i)' T \left( \sum_i Y_i(0) r_i \right) =0$ for any choice of $Y_i(0)$, from which it follows that $r_i ' T r_i =0$ for all $i$. Since $T \succeq 0$, however, this implies that $T r_i = 0$ for all $i$. Since $\pi_{(i,q),(i,q')}=0$ for any $q \neq q'$, we have that $T_{(i,q),(i,q')}=1$ for all $q \neq q'$. Hence $(T r_i)_{(i,q)} = T_{(i,q),(i,q)} \omega_{i,q} + \sum_{q': q' \neq q} \omega_{i,q'} = 0$. Since $\sum_{\tilde{q}} \omega_{i,\tilde{q}} =0$ and $\omega_{i,q} \neq 0$, it follows that $T_{(i,q),(i,q)}=1$. We thus see that $T$ has all subblocks corresponding to pairs of potential outcomes for the same unit $i$ equal to 1. Now take the basis vectors $e_{(i,q)}, e_{(i,q')}$ for $q \neq q'$. Note that $(e_{(i,q)} - e_{(i,q')})' T (e_{(i,q)} - e_{(i,q')}) = T_{(i,q),(i,q)} +  T_{(i,q'),(i,q')} - 2 T_{(i,q),(i,q')}= 1+1-2=0$. Since $T \succeq 0$, we see that $ T (e_{(i,q)} - e_{(i,q')}) =0$. It follows that all columns involving potential outcomes for the same unit are identical. By symmetry, the same is true of the rows of $T$. Hence $T = G_N C_N G_N'$ where $G_N$ is the $K \times N$ matrix indicating unit membership ($G_{N,(i,q),j}= 1\{i=j\}$) and $C_N$ is a symmetric matrix. We claim that $C_N \succeq 0$. Indeed, note that $G_N$ has full column rank, and so for any $x \in \reals^N$, we can choose $y \in \reals^K$ such that $G_N'y = x$ and hence $0 \leq y' T y  = x' C_N x$, since $T \succeq 0$. Moreover, for $q \neq q'$, $T_{(i,q),(i,q')} = C_{N,ii} $, and hence $C_{N,ii} =1$ for all $i$.  Now consider any $(i,q),(j,q')$ such that $\pi_{(i,q),(j,q')}=0$. Since $T_{(i,q),(j,q')}=1$, we have that $C_{N,ij}=C_{N,ji}=1$. It follows that $(e_i - e_j)' C_N (e_i - e_j)= 1+1-2=0$, and since $C_N \succeq 0$, we see that $C_N (e_i - e_j)=0$, so that all columns of $C_N$ corresponding to $i$ and $j$ are equal. By symmetry of $C_N$, all rows corresponding to $i$ and $j$ are equal as well. Iterating this argument, we see that all rows and columns of $C_N$ must be equal for indices in the same connected components, since for any $k=(i,q)$ and $k'=(j,q')$ in the same $\mathcal{K}_c$, there exists a path of finite length $r$, $(i,q)=(i_1,q_1),\ldots,(i_r,q_r)=(j,q')$ such that $\pi_{(i_t,q_t), (i_{t+1},q_{t+1})}=0$, and hence the rows and columns corresponding to $(i_t,q_t), (i_{t+1},q_{t+1})$ are the same. It follows that there exists a symmetric $C \times C$ matrix $\Lambda$ such that $C_N = J \Lambda J'$, where $J$ is the $N \times C$ matrix that maps units to components (and this is well-defined since all potential outcomes for the same unit are in the same component). Since $C_N \succeq 0$ and $J$ has full column rank, we see that $\Lambda \succeq 0$, and further, for any $i$ associated with component $c$, we have $C_{N,ii} =\Lambda_{c,c} =1$.  Additionally, since all potential outcomes for each unit are in the same component, we see that $G J = H$. It follows that $T = G C_N G' = GJ \Lambda (GJ)' = H \Lambda H'$, as we wished to show. 
	
	\textit{(Proof of part 2.)} Suppose $\VhatM$ is exact under component homogeneity. Then $\VhatM$ must be exact under the sharp null, since under the sharp null $\tau_c=0$ for all $c$. Hence, $T(M)$ takes the form in \eqref{eqn:h-lambda-h}. However, if $\VhatM$ is exact under component homogeneity, then $\boldsymbol{\tau}'\Lambda \boldsymbol{\tau}=0$ whenever $\boldsymbol{\tau} = \alpha \mathbf{1}_C$ for any $\alpha$, which together with the fact that $\Lambda \succeq 0$ implies that $\Lambda \mathbf{1}_C=0$. (We note that $\boldsymbol{\tau} = \alpha \mathbf{1}_C$ is achievable for any $\alpha$ since $\omega_{i,q}\neq 0$ for all $i,q$, and thus one can choose potential outcomes to achieve any $\boldsymbol{\tau}$.) Recall that $||\Lambda||_{op}$ is equal to the largest eigenvalue of $\Lambda$. Since $\Lambda \mathbf{1}_C = 0$, $\Lambda$ has rank at most $C-1$, and hence at most $C-1$ strictly positive eigenvalues. Since $\Lambda_{cc}=1$ for all $c$, $\tr(\Lambda) =C$, and thus the eigenvalues of $\Lambda$ sum to $C$. It follows that the maximal eigenvalue must be at least $C/(C-1)$, with equality only if all non-zero eigenvalues equal $C/(C-1)$. However, $\Lambda_{CC} = \frac{C}{C-1}I_C - \frac{1}{C-1} \mathbf{1}_C \mathbf{1}_C'$ has eigenvalue of 0 in the direction $\mathbf{1}_C$ and has eigenvalues equal to $C/(C-1)$ in all orthogonal directions, and thus achieves the lower bound. To show uniqueness, suppose that $\Lambda$ is such that $\Lambda \mathbf{1}_C=0$ and the remaining eigenvalues are equal to $C/(C-1)$. Since $\Lambda$ is rank $C-1$, its remaining eigenvectors span the orthogonal complement to $\mathbf{1}_C$, and hence for any $x \in \reals^C$, $\Lambda x=  C/(C-1) \cdot P_{ \mathbf{1}_C^\perp } x$, where $P_{ \mathbf{1}_C^\perp }$ is the projection into the orthogonal complement to $\mathbf{1}_C$. Since this holds for all $x$, we see that $\Lambda =  C/(C-1) \cdot P_{ \mathbf{1}_C^\perp }  = \Lambda_{CC}$. 
\end{proof}

\begin{lem} \label{lem:trace-principal-submatrix}
	Let $T \succeq 0$ be such that $T_{kk'}=1$ for all $k,k'$ such that $\pi_{kk'}=0$.	Consider a pair $k,k'$ such that $\pi_{kk'}=0$.  Let $T_{sub}^{kk'}$ be the principal submatrix  corresponding to the indices $k,k'$. Then $\tr(T_{sub}^{kk'}) \geq 2$ with equality if and only if $T_{kk}=T_{k'k'}=1$. 
\end{lem}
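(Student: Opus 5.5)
The plan is to reduce everything to the $2\times 2$ principal submatrix
\[
T_{sub}^{kk'}=\twobytwomat{T_{kk}}{1}{1}{T_{k'k'}} .
\]
Its off-diagonal entries equal $1$ because $T_{kk'}=T_{k'k}=1$ whenever $\pi_{kk'}=0$; the identity $T_{k'k}=T_{kk'}$ holds either by symmetry of $T$ or because $\pi_{k'k}=\pi_{kk'}=0$. Since $T\succeq 0$, every principal submatrix is PSD. Hence $T_{kk}\ge 0$, $T_{k'k'}\ge 0$, and $\det T_{sub}^{kk'} = T_{kk}T_{k'k'}-1\ge 0$, so $T_{kk}T_{k'k'}\ge 1$.

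Next I would apply AM--GM to the nonnegative diagonal entries:
\[
\tr(T_{sub}^{kk'}) = T_{kk}+T_{k'k'} \ge 2\sqrt{T_{kk}T_{k'k'}} \ge 2 .
\]
This gives the inequality.

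For the equality case, suppose $T_{kk}+T_{k'k'}=2$. Then both inequalities in the chain are equalities. AM--GM is tight only when $T_{kk}=T_{k'k'}$, so both equal $1$. Conversely, if $T_{kk}=T_{k'k'}=1$, the trace is $2$ directly.

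An alternative I could use instead is to evaluate the quadratic form at $(1,-1)'$, which gives $T_{kk}+T_{k'k'}-2\ge 0$. That yields the inequality immediately, but the equality case would still need the determinant argument. I therefore plan to use the determinant/AM--GM route. I don't expect any step to be a real obstacle; the only point needing care is confirming that the off-diagonal entries of the submatrix both equal $1$.
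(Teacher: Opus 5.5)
Your proposal is correct and is essentially the paper's proof: restrict to the $2\times 2$ principal submatrix with off-diagonal entries equal to $1$, use positive semidefiniteness to get $T_{kk}T_{k'k'}\geq 1$, and apply AM--GM, with the equality case following from tightness of AM--GM. Your treatment of the equality case, and your remark that the diagonal entries are nonnegative before invoking AM--GM, are slightly more explicit than the paper's one-line version.
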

\begin{proof}
	By assumption $T_{kk'}=T_{k'k} = 1$, so we have
	$$ T_{sub}^{kk'} = \twobytwomat{ T_{kk} }{1 }{1}{T_{k'k'}} .$$
	The principal submatrices of PSD matrices are PSD, so $T_{sub}^{kk'}  \succeq 0$, and hence $T_{kk} T_{k'k'} \geq 1$. By the AM-GM inequality, it follows that $T_{kk} +T_{k'k'} \geq 2$, with equality if and only if $T_{kk}=T_{k'k'}=1$.
\end{proof}

\begin{lem} \label{lem:psd-transformation-in-range}
Suppose $T \succeq 0$ and $u \in \mathrm{range}(T)$. Then there exists $\epsilon >0$ such that $T_{\epsilon}= T - \epsilon u u' \succeq 0$. 
\end{lem}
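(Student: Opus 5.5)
The plan is to work in the eigenbasis of $T$ restricted to its range. Since $T \succeq 0$ is symmetric, write its spectral decomposition $T = \sum_{j=1}^r \lambda_j q_j q_j'$, where $\lambda_1,\ldots,\lambda_r>0$ are the nonzero eigenvalues and $q_1,\ldots,q_r$ are orthonormal eigenvectors spanning $\mathrm{range}(T)$. Let $\lambda_{\min}^+ := \min_j \lambda_j > 0$; if $u=0$ the claim is trivial, so assume $u \neq 0$ (which forces $r \geq 1$).

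To check $T - \epsilon uu' \succeq 0$, take an arbitrary $x \in \reals^K$ and decompose it as $x = x_R + x_N$ with $x_R \in \mathrm{range}(T)$ and $x_N \in \mathrm{null}(T)$. These subspaces are orthogonal complements because $T$ is symmetric. Since $u \in \mathrm{range}(T)$, we get $u'x = u'x_R$. We also have $x'Tx = x_R' T x_R \geq \lambda_{\min}^+ \|x_R\|^2$. Cauchy--Schwarz then gives $(u'x)^2 \leq \|u\|^2 \|x_R\|^2$.

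Combining these,
\[ x'(T-\epsilon uu')x \geq \big(\lambda_{\min}^+ - \epsilon \|u\|^2\big)\|x_R\|^2 , \]
which is nonnegative for any $\epsilon \in (0, \lambda_{\min}^+/\|u\|^2]$. This proves the lemma.

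There is no real obstacle here. The only point that needs care is the hypothesis $u \in \mathrm{range}(T)$: it is what makes $u$ orthogonal to $\mathrm{null}(T)$, so the rank-one perturbation never acts in directions where $T$ has no curvature. Without it, taking $x \in \mathrm{null}(T)$ with $u'x \neq 0$ would make $x'(T-\epsilon uu')x<0$ for every $\epsilon > 0$.
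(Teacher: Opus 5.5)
Your proof is correct and is essentially the paper's argument. The paper also expands $x$ and $u$ in an orthonormal eigenbasis of $T$, uses $u \in \mathrm{range}(T)$ to drop the zero-eigenvalue coordinates of $u$, and applies Cauchy--Schwarz to obtain $x'T_\epsilon x \ge \sum_{\lambda_k>0} c_{x,k}^2(\lambda_k - \epsilon\|u\|^2)$, which gives the same threshold $\epsilon < \min\{\lambda_k/\|u\|^2 : \lambda_k>0\}$; your decomposition $x = x_R + x_N$ is a coordinate-free phrasing of that same step.
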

\begin{proof}
If $u=0$ the result holds trivially, so assume $||u||>0$. Let $v_1,\ldots,v_m$ be the eigenvectors of $T$, normalized to have Euclidean norm of 1, and $\lambda_1,\ldots,\lambda_m$ the corresponding eigenvalues. Take any non-zero $x \in \reals^m$ and write $x = c_{x,1} v_1 + \ldots c_{x,m} v_m$ and $u= c_{u,1} v_1 + \ldots c_{u,m} v_m$. Note that since $u \in \mathrm{range}(T)$, $c_{u,m}=0$ for all $m$ such that $\lambda_m = 0$. Then \begin{align*}
	x' T_\epsilon x &= \sum_k c_{x,k}^2 \lambda_k - \epsilon \left(\sum_k c_{x,k} c_{u,k} \right)^2 \\ 
	&= \sum_{k: \lambda_k > 0} c_{x,k}^2 \lambda_k - \epsilon \left(\sum_{k: \lambda_k > 0} c_{x,k} c_{u,k} \right)^2 \\&  \geq  \sum_{k: \lambda_k > 0} c_{x,k}^2 \lambda_k - \epsilon \left(\sum_{k: \lambda_k > 0} c_{x,k}^2 \right) \left( \sum_{k: \lambda_k >0} c_{u,k}^2 \right)  \text{ (Cauchy-Schwarz) }\\
	&=  \sum_{k: \lambda_k > 0} c_{x,k}^2 \cdot \left( \lambda_k - \epsilon ||c_u||^2 \right).
\end{align*} 

It follows that $T_\epsilon \succeq 0$ for $\epsilon <  \min\{  \lambda_k / ||c_u||^2  \,:\, \lambda_k > 0 \}$. 
\end{proof}

\paragraph{Proof of \Cref{lem:tauhat-consistency}}
\begin{proof}
	We have that $\Var(\hat\tau) = \frac{1}{K^2} Y' V Y$. By definition of the operator norm, $||V Y|| \leq ||V||_{op} ||Y||$, and hence by Cauchy-Schwarz, $\frac{1}{K^2} Y' V Y \leq \frac{1}{K^2} ||V||_{op} ||Y||^2 = s_K \frac{1}{K} \sum_k Y_k^2 = s_K O(1)= O(s_K)$. Since $\E[\hat\tau] = \tau$, Chebyshev's inequality implies $\hat\tau - \tau = O_p( \sqrt{\Var(\hat\tau)} ) = O_p( \sqrt{s_K} )$.
\end{proof}

\paragraph{Proof of \Cref{prop:vhat-consistency}}
We prove \Cref{prop:vhat-consistency-vector}, which generalizes \Cref{prop:vhat-consistency} to vector-valued outcomes (and relaxes the assumptions needed for the first conclusion).
\begin{prop} \label{prop:vhat-consistency-vector}
	Let $\mathbf{Y}_k \in \reals^Q$, $\boldsymbol{\hat\tau}, \VhatCCbold$ be as in \Cref{prop:cc-valid-vector}, with the dimension $Q$ fixed in $K$. Suppose that $C>1$, $s_K \to 0$. Consider the following conditions:	
	
	\begin{enumerate}[(i)]
		\item 
		(Limiting distribution of $\boldsymbol{\hat\tau}$) $s_K^{-1/2} (\boldsymbol{\hat\tau} - \boldsymbol{\tau}) \xrightarrow{d} \mathbf{L}$ for a non-degenerate distribution $\mathbf{L}$ with $\Var(\mathbf{L}) = \lim_{K \to \infty} s_K^{-1} \Var(\boldsymbol{\hat\tau}) =: \mathbf{V}^*\succ 0$.
		
		\item (Limit of CC slack)
		$s_K^{-1} (\E[\VhatCCbold]- \Var(\boldsymbol{\hat\tau})) = s_K^{-1}  \frac{C}{K^2} \Var_c(\boldsymbol{\tau}_c) \to \boldsymbol{\delta} \succeq 0$.
		
		\item (Bounded fourth moment of $Y$) $\frac{1}{K} \sum_k ||\mathbf{Y}_k||^4 = O(1)$.
		
		\item (Limited pair-dependence in treatment indicators) 
		\begin{equation*} \bigg(\frac{1}{K^4} \sum_{ \substack{(k,k'), (\ell,\ell') \\ \pi_{kk'}>0, \pi_{\ell \ell'}>0}}   \left| \Cov\bigg(\frac{W_k W_{k'}}{\pi_{kk'}} b_{kk'}, \frac{W_{\ell}W_{\ell'} }{\pi_{\ell \ell'}} b_{\ell\ell'} \bigg)\right|^{4/3}\bigg)^{3/4} = o(s_{K}^2) .
		\end{equation*}

	\end{enumerate}
	If conditions (iii) and (iv) hold, then $\VhatCCbold =\E[\VhatCCbold] + o_p(s_K)$. Under (i)-(iv), it is further the case that $s_K^{-1} \VhatCCbold \xrightarrow{p} \mathbf{V}^* + \boldsymbol{\delta} \succeq \mathbf{V}^*$.
\end{prop}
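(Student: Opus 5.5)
The plan is to work entrywise, reducing the matrix statement to scalar ones. Fix $q,r\in\{1,\ldots,Q\}$. The $(q,r)$ entry of $\VhatCCbold$ is the pair-level Horvitz--Thompson sum
\[
\frac{1}{K^2}\sum_{k,k':\pi_{kk'}>0}\frac{W_kW_{k'}}{\pi_{kk'}}\,b_{kk'}\,Y_{kq}Y_{k'r}.
\]
Its variance is
\[
\frac{1}{K^4}\sum_{(k,k'),(\ell,\ell')} c_{kk',\ell\ell'}\,Y_{kq}Y_{k'r}Y_{\ell q}Y_{\ell' r},
\qquad
c_{kk',\ell\ell'}:=\Cov\!\bigg(\frac{W_kW_{k'}}{\pi_{kk'}}b_{kk'},\frac{W_\ell W_{\ell'}}{\pi_{\ell\ell'}}b_{\ell\ell'}\bigg),
\]
with both sums restricted to pairs having positive joint probability.

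To bound this variance, apply H\"older's inequality with exponents $4/3$ and $4$ over the index set of pairs-of-pairs. This gives
\[
\Var \le \bigg(\frac{1}{K^4}\sum |c|^{4/3}\bigg)^{3/4}\bigg(\frac{1}{K^4}\sum_{k,k',\ell,\ell'} Y_{kq}^4Y_{k'r}^4Y_{\ell q}^4Y_{\ell'r}^4\bigg)^{1/4}.
\]
The second factor factorizes as $\big(\frac1K\sum_k Y_{kq}^4\big)^{1/2}\big(\frac1K\sum_k Y_{kr}^4\big)^{1/2}$. Condition (iii) makes this $O(1)$, since $|Y_{kq}|\le\|\mathbf Y_k\|$. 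Condition (iv) makes the first factor $o(s_K^2)$. Hence the variance is $o(s_K^2)$, and Chebyshev's inequality yields the entry equals its expectation plus $o_p(s_K)$. There are finitely many entries because $Q$ is fixed, so $\VhatCCbold=\E[\VhatCCbold]+o_p(s_K)$. This part uses only (iii) and (iv).

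The main care point is this H\"older step: the normalizations must match. Both factors carry $K^{-4}$ weights, and the total exponent works because $\big(K^{-4}\big)^{3/4}\big(K^{-4}\big)^{1/4}=K^{-4}$. Summing the fourth-moment factor over all index quadruples rather than only positive-probability pairs is harmless, since it only enlarges an upper bound.

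For the second conclusion, use \Cref{prop:cc-valid-vector}, which gives $\E[\VhatCCbold]=\Var(\boldsymbol{\hat\tau})+\frac{C}{K^2}\Var_c(\boldsymbol\tau_c)$. Multiply through by $s_K^{-1}$. By (i), $s_K^{-1}\Var(\boldsymbol{\hat\tau})\to\mathbf V^*$, and by (ii), $s_K^{-1}\frac{C}{K^2}\Var_c(\boldsymbol\tau_c)\to\boldsymbol\delta$. Combining these with the first conclusion gives $s_K^{-1}\VhatCCbold\xrightarrow{p}\mathbf V^*+\boldsymbol\delta$. Finally, $\boldsymbol\delta\succeq0$ because it is a limit of PSD matrices, which gives $\mathbf V^*+\boldsymbol\delta\succeq\mathbf V^*$.
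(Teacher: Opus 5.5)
Your proposal is correct and follows the paper's proof: the same H\"older step with exponents $4/3$ and $4$ against condition (iv), the fourth-moment bound from (iii), Chebyshev, and then \Cref{prop:cc-valid-vector} together with (i)--(ii) for the limit. The only difference is that you bound the variance of each of the $Q^2$ entries separately and then use that $Q$ is fixed, whereas the paper bounds $\E\|\VhatCCbold-\E[\VhatCCbold]\|_F^2$ in one step through the $\mathrm{vec}$ representation, applying Cauchy--Schwarz to $\langle \mathbf{Y}_k\mathbf{Y}_{k'}',\mathbf{Y}_\ell\mathbf{Y}_{\ell'}'\rangle_F$ so that the bound is stated directly in terms of $\|\mathbf{Y}_k\|$.
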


\begin{proof}

	Note that $\mathrm{vec}(\VhatCCbold)$ is a vector-valued Horvitz-Thompson estimator for $\frac{1}{K^2} \sum_{k,k':\pi_{kk'}>0} b_{kk'} \mathrm{vec}(\mathbf{Y}_k \mathbf{Y}_{k'}')$, and hence 
	\[\Var(\mathrm{vec}(\VhatCCbold)) = \frac{1}{K^4} \sum_{ \substack{(k,k'), (\ell,\ell') \\ \pi_{kk'}>0, \pi_{\ell \ell'}>0}}   \Cov\bigg(\frac{W_k W_{k'}}{\pi_{kk'}}, \frac{W_{\ell}W_{\ell'} }{\pi_{\ell \ell'}} \bigg) b_{kk'} b_{\ell \ell'} \mathrm{vec}(\mathbf{Y}_{k} \mathbf{Y}_{k'}') \mathrm{vec}(\mathbf{Y}_{\ell} \mathbf{Y}_{\ell'})'. \]
	It follows that 
	
	\begin{align*}  \E[ || \VhatCCbold - \E[\VhatCCbold] ||_F^2 ] &= \E[ || \mathrm{vec}(\VhatCCbold) - \E[ \mathrm{vec}(\VhatCCbold)] ||_2^{2} ]\\  &=   \tr( \Var(\mathrm{vec}(\VhatCCbold)) )  \\ &= \frac{1}{K^4} \sum_{ \substack{(k,k'), (\ell,\ell') \\ \pi_{kk'}>0, \pi_{\ell \ell'}>0}}   \Cov\bigg(\frac{W_k W_{k'}}{\pi_{kk'}}, \frac{W_{\ell}W_{\ell'} }{\pi_{\ell \ell'}} \bigg) b_{kk'} b_{\ell \ell'} \langle \mathbf{Y}_{k} \mathbf{Y}_{k'}', \mathbf{Y}_{\ell} \mathbf{Y}_{\ell'}' \rangle_F,
	\end{align*} 
	where we use the fact that for vectors $v_1,v_2$, $\tr(v_1 v_2') = v_1'v_2$ and hence $\tr(\mathrm{vec}(\mathbf{Y}_{k} \mathbf{Y}_{k'}') \mathrm{vec}(\mathbf{Y}_{\ell} \mathbf{Y}_{\ell'})') = \tr(\mathrm{vec}(\mathbf{Y}_{k} \mathbf{Y}_{k'}')' \mathrm{vec}(\mathbf{Y}_{\ell} \mathbf{Y}_{\ell'})) = \langle \mathbf{Y}_{k} \mathbf{Y}_{k'}', \mathbf{Y}_{\ell} \mathbf{Y}_{\ell'}' \rangle_F$. Applying Holder's inequality to the final expression in the previous display along with (iv), we obtain that 
	{\small
		\begin{align*}  \E[ || \VhatCCbold - \E[\VhatCCbold] ||_F^2 ] & \leq \bigg(\frac{1}{K^4} \sum_{ \substack{(k,k'), (\ell,\ell') \\ \pi_{kk'}>0, \pi_{\ell \ell'}>0}}   \left| \Cov\bigg(\frac{W_k W_{k'}}{\pi_{kk'}} b_{kk'}, \frac{W_{\ell}W_{\ell'} }{\pi_{\ell \ell'}} b_{\ell\ell'} \bigg)\right|^{4/3}\bigg)^{3/4} 	\bigg(\frac{1}{K^4} \sum_{ \substack{(k,k'), (\ell,\ell') \\ \pi_{kk'}>0, \pi_{\ell \ell'}>0}} |\langle \mathbf{Y}_{k} \mathbf{Y}_{k'}', \mathbf{Y}_{\ell} \mathbf{Y}_{\ell'}' \rangle_F |^4 \bigg)^{1/4} \\
			&= o(s_K^2) \bigg(\frac{1}{K^4} \sum_{ \substack{(k,k'), (\ell,\ell') \\ \pi_{kk'}>0, \pi_{\ell \ell'}>0}} |\langle \mathbf{Y}_{k} \mathbf{Y}_{k'}', \mathbf{Y}_{\ell} \mathbf{Y}_{\ell'}' \rangle_F |^4 \bigg)^{1/4}.
	\end{align*}}
	However, by Cauchy-Schwarz, $|\langle \mathbf{Y}_{k} \mathbf{Y}_{k'}', \mathbf{Y}_{\ell} \mathbf{Y}_{\ell'}' \rangle_F| \leq ||\mathbf{Y}_{k} \mathbf{Y}_{k'}'||_F  ||\mathbf{Y}_{\ell} \mathbf{Y}_{\ell'}'||_F = ||\mathbf{Y}_k|| \cdot ||\mathbf{Y}_{k'}|| \cdot ||\mathbf{Y}_\ell|| \cdot ||\mathbf{Y}_{\ell'}|| $ and hence using (iii),
	\begin{align*}
		\bigg(\frac{1}{K^4} \sum_{ \substack{(k,k'), (\ell,\ell') \\ \pi_{kk'}>0, \pi_{\ell \ell'}>0}} |\langle \mathbf{Y}_{k} \mathbf{Y}_{k'}', \mathbf{Y}_{\ell} \mathbf{Y}_{\ell'}' \rangle_F |^4 \bigg)^{1/4}& \leq \bigg(\frac{1}{K^4} \sum_{ \substack{(k,k'), (\ell,\ell') \\ \pi_{kk'}>0, \pi_{\ell \ell'}>0}} ||\mathbf{Y}_k||^4 \cdot ||\mathbf{Y}_{k'}||^4 \cdot ||\mathbf{Y}_\ell||^4 \cdot ||\mathbf{Y}_{\ell'}||^4 \bigg)^{1/4} \\
		&\leq \bigg(\frac{1}{K^4} \sum_{ \substack{(k,k'),(\ell,\ell')}} ||\mathbf{Y}_k||^4 \cdot ||\mathbf{Y}_{k'}||^4 \cdot ||\mathbf{Y}_\ell||^4 \cdot ||\mathbf{Y}_{\ell'}||^4 \bigg)^{1/4} \\
		& = \frac{1}{K} \sum_{k}||\mathbf{Y}_k||^4   = O(1).
	\end{align*}
	Combining the previous two displays we see that $ \E[ || \VhatCCbold - \E[\VhatCCbold] ||_F^2 ] = o(s_K^2)$. From Chebyshev's inequality, it follows that $\VhatCCbold = \E[\VhatCCbold] + o_p(s_K)$. Under conditions (i) and (ii), we then see that 
	\[s_K^{-1} \VhatCCbold  = s_K^{-1} \E[\VhatCCbold]  + o_p(1) = s_K^{-1} (\Var(\boldsymbol{\hat\tau}) +  (\E[\VhatCCbold]- \Var(\boldsymbol{\hat\tau}))  ) + o_p(1) \xrightarrow{p} \mathbf{V}^* + \boldsymbol{\delta}, \]
	as desired. In the univariate case $Q=1$, we have $\frac{\VhatCC}{ \Var(\hat\tau) } = \frac{s_K^{-1}\VhatCC}{ s_K^{-1} \Var(\hat\tau) } \xrightarrow{p} \frac{V^* + \delta}{V^*} = 1 + \frac{\delta}{V^*}$, which yields the conclusion in \Cref{prop:vhat-consistency}.
\end{proof}

\paragraph{Proof of \Cref{lem:ols-consistency}}
\begin{proof}
	Let $Z_k := \pi_k X_k X_k'$ and $u_k := \pi_k X_k \epsilon_k$.  Note that $\hat{\Gamma}_K$ and $\hat{U}_K$ are respectively (matrix- and vector-valued-) Horvitz Thompson estimators for $\Gamma_K = \frac{1}{K} \sum_k Z_k$ and  $\frac{1}{K} \sum_k u_k$, and $\frac{1}{K} \sum_k u_k=0$ by the OLS orthogonality conditions. Condition (ii) implies that each element of $Z_k$ and $u_k$ has bounded fourth moment (and hence, bounded second moment), and thus applying \Cref{lem:tauhat-consistency} elementwise we obtain that $\hat{\Gamma}_K = \Gamma_K + O_p(\sqrt{s_K})$ and $\hat{U}_K = O_p(\sqrt{s_K})$. Since $\Gamma \succ 0$, the continuous mapping theorem implies $\hat{\Gamma}_K^{-1} \xrightarrow{p} \Gamma^{-1}$. From Condition (iv) and Slutsky's lemma, we then see that $s_K^{-1/2} a'(\hat\theta - \theta) \xrightarrow{d} a' \Gamma^{-1} L =: L^*$, which yields the first result of the lemma. 
	
	We next claim that $s_K^{-1} \hat\Omega_{\epsilon} \xrightarrow{p} \Omega + S_u^* \succeq \Omega$, where $S_u^*$ is the upper left block of $S^*_{(u,z)}$. Observe that $\hat\Omega_{\epsilon}$ corresponds to the CC variance estimator for the (vector-valued) HT estimator $\hat{U}_K$. Hence, by \Cref{prop:vhat-consistency-vector}, we have that $\hat{\Omega}_{\epsilon} = \E[\hat{\Omega}_{\epsilon}] + o_p(s_K) = \Var(\hat{U}_K) + \frac{C}{K^2} \Var_c(\tau_c^u) + o_p(s_K)$.  Using conditions (iv) and (v), we thus see that $s_K^{-1} \hat{\Omega}_{\epsilon} \xrightarrow{p} \Omega + S_{u}^* \succeq \Omega$, where $S_u^* = \lim s_K^{-1} \frac{C}{K^2} \Var_c(\tau_c^u)$ is the upper-left block of $S^*_{(u,z)}$.
	
	We next claim that $||\hat\Omega_{\epsilon} - \hat\Omega_{\hat\epsilon}||_{op} = o_p(s_K)$. Let $z_k = \mathrm{vec}(Z_k)$, and note that $Z_k (\hat\theta - \theta) = ( (\hat\theta - \theta)' \otimes I ) z_k$. We thus see that
	\begin{align*} \hat{\Omega}_{\hat\epsilon} &= (I, -(\hat\theta - \theta) '\otimes I ) \bigg( \frac{1}{K^2} \sum_{ \substack{ k,k' \\ \pi_{kk'} >0 }} \frac{W_k W_{k'}}{ \pi_{kk'} } b_{kk'} \twovec{u_k}{z_k} \twovec{u_{k'}}{z_{k'}}' \bigg) (I, -(\hat\theta - \theta)' \otimes I )' \\
		&=: C_{(\hat\theta-\theta)} \widehat{V}_{(u,z)} C_{(\hat\theta-\theta)}'
	\end{align*}
	Note that the inner term $\widehat{V}_{(u,z)}$ is the CC variance estimator corresponding to the Horvitz-Thompson estimator $\hat\tau_{(u,z)} = \frac{1}{K} \sum_k \frac{W_k}{\pi_k} (u_k',z_k')'$. Condition (ii) ensures that each element of $(u_k',z_k')'$ has finite fourth moment and thus by \Cref{lem:tauhat-consistency} each diagonal element of $\Var(\hat\tau_{(u,z)})$ is $O(s_K)$; since $\Var(\hat\tau_{(u,z)}) \succeq 0$, it follows that $||\Var(\hat\tau_{(u,z)})||_{op} \leq \tr(\Var(\hat\tau_{(u,z)})) = O(s_K) $. Additionally, condition (v) implies that $||\E[\widehat{V}_{(u,z)}] - \Var(\hat\tau_{(u,z)})||_{op} = O(s_K)$. Applying \Cref{prop:vhat-consistency-vector}, we see that $\widehat{V}_{(u,z)} = \E[\widehat{V}_{(u,z)} ] + o_p(s_K)$ and thus 	
	\[||\widehat{V}_{(u,z)}||_{op} \leq || \Var(\hat\tau_{(u,z)}) ||_{op} + ||\E[\widehat{V}_{(u,z)}] - \Var(\hat\tau_{(u,z)})||_{op}  + o_p(s_K) = O_p(s_K) .\]
	Further, observe that $\hat\Omega_{\epsilon} = C_{0} \widehat{V}_{(u,z)} C_{0}'$. We thus see that
	\begin{align*}
		|| \hat{\Omega}_{\hat\epsilon} -  \hat{\Omega}_{\epsilon} ||_{op} &= ||  C_{(\hat\theta-\theta)} \widehat{V}_{(u,z)} C_{(\hat\theta-\theta)}' -  C_{0} \widehat{V}_{(u,z)} C_{0}' ||_{op} \\
		&= \frac{1}{2}|| ( C_{(\hat\theta-\theta)} + C_0) \widehat{V}_{(u,z)} ( C_{(\hat\theta-\theta)} - C_0)' +  ( C_{(\hat\theta-\theta)} - C_0) \widehat{V}_{(u,z)} ( C_{(\hat\theta-\theta)} + C_0)'  ||_{op} \\
		& \leq ||( C_{(\hat\theta-\theta)} + C_0) ||_{op} \cdot ||\widehat{V}_{(u,z)}||_{op} \cdot ||( C_{(\hat\theta-\theta)} - C_0) ||_{op} \\
		&= O_p(1) O_p(s_K) O_p(\sqrt{s_K}) = O_p(s_K^{3/2}) = o_p(s_K)
	\end{align*} 
	where the last line uses the fact that $\hat\theta - \theta = O_p(\sqrt{s_K})$ and hence $||( C_{(\hat\theta-\theta)} + C_0) ||_{op} = O_p(1)$ and $||( C_{(\hat\theta-\theta)} - C_0) ||_{op} = O_p(\sqrt{s_K})$. 
	
	It follows from the previous two claims that $s_{K}^{-1} \hat\Omega_{\hat\epsilon} \xrightarrow{p} \Omega + S_u^* \succeq \Omega$. From Slutsky's lemma, we see that $s_{K}^{-1} \widehat{V}_{\hat\epsilon} =a' \hat{\Gamma}^{-1} s_K^{-1} \hat{\Omega}_{\hat\epsilon} \hat{\Gamma}^{-1} a \xrightarrow{p}a' \Gamma^{-1} (\Omega + S_u^*) \Gamma^{-1} a \geq a'\Gamma^{-1} \Omega \Gamma^{-1} a= \Var(L^*)$, as we wished to show.
	
\end{proof}

\paragraph{Proof of \Cref{cor:cequals1-trivial-sharp-null-exact}}
\begin{proof}
	\Cref{prop:sharp-null-exact} shows that $\VhatM$ is weakly upwards biased and sharp-null exact if and only if $T(M)$ satisfies \eqref{eqn:h-lambda-h}. However, when $C=1$, $\Lambda$ is equal to 1, and $H$ is the $K \times 1$ matrix of 1s. Thus, $T$ is equal to the $K \times K$ matrix of 1s. We thus see that $m_{kk'} = v_{kk'}+1 = \frac{\pi_{kk'}}{ \pi_{k} \pi_{k'} }$. It follows that
	\[\VhatM = \frac{1}{K^2} \sum_{ \substack{ k,k' \\ \pi_{kk'} >0 }} \frac{W_k W_{k'}}{\pi_{kk'}} m_{kk'} Y_k Y_{k'} = \frac{1}{K^2} \sum_{ \substack{ k,k' \\ \pi_{kk'} >0 }} \frac{W_k W_{k'}}{ \pi_{k} \pi_{k'} }  Y_k Y_{k'} =  \frac{1}{K^2} \sum_{ \substack{ k,k'  }} \frac{W_k W_{k'}}{ \pi_{k} \pi_{k'} }  Y_k Y_{k'} = \left( \frac{1}{K} \sum_k \frac{W_k}{\pi_k} Y_k \right)^2 = \hat\tau^2,\]
	where the first equality uses the definition of $\VhatM$, the second uses the form of $m_{kk'}$ derived above, the third uses the fact that $W_{k}W_{k'}=0$ with probability 1 if $\pi_{kk'}=0$, and the final equalities use algebra and the definition of $\hat\tau$. 
\end{proof}

\paragraph{Proof of \Cref{lem:expectation-vwithin}}

\begin{proof}
	The proof is very similar to that of \Cref{prop:cc-valid}. Let $b^{\tilde{\mathcal{K}}}_{kk'}= v_{kk'} + s_{kk'}^{\tilde{\mathcal{K}}}$. Note that if $\pi_{kk'}>0$, by standard Horvitz-Thompson logic, we have $\E[ \frac{W_k W_{k'}}{ \pi_{kk'} }]=1$, and hence
	\[ \E[\VhatKwithin] = \frac{1}{K^2} \sum_{ \substack{ k,k' \\ \pi_{kk'} >0 }} b_{kk'}^{\tilde{\mathcal{K}}} Y_k Y_{k'} .\]
	
	Observe that if $\pi_{kk'}=0$, then $v_{kk'} = \frac{ \pi_{kk'} - \pi_{k}\pi_{k'} }{ \pi_{k}\pi_{k'} } = -1$. By construction, if $\pi_{kk'} =0$ and $k,k' \in \tilde{\mathcal{K}}_{\tilde{c}}$ for some $\tilde{c}$, then $s_{kk'}^{\tilde{\mathcal{K}}}= 1$ and so $b_{kk'}^{\tilde{\mathcal{K}}}= 0$. Meanwhile, if $\pi_{kk'} =0$ and $k,k'$ are in distinct $\tilde{\mathcal{K}}_{\tilde{c}}, \tilde{\mathcal{K}}_{\tilde{c}'}$, then  $s_{kk'}^{\tilde{\mathcal{K}}}= -1/(\tilde{C}-1)$ and hence $b^{\tilde{\mathcal{K}}}_{kk'} = -1-1/(\tilde{C}-1)$. It follows that
		\[ \E[\VhatKwithin] = \frac{1}{K^2} \sum_{  k,k' } b_{kk'}^{\tilde{\mathcal{K}}} Y_k Y_{k'} + \bigg(1 + \frac{1}{\tilde{C}-1} \bigg) \overbracket{ \frac{1}{K^2} \sum_{\substack{ k,k' \\ \pi_{kk'} =0  \\ k \in \tilde{\mathcal{K}}_{\tilde{c}}, k' \in \tilde{\mathcal{K}}_{\tilde{c}'} \neq \tilde{\mathcal{K}}_{\tilde{c}} }}   Y_k Y_{k'} }^{=-V_{\text{across}}} \] 	
	Since $b_{kk'}^{\tilde{\mathcal{K}}} = v_{kk'} + s_{kk'}^{\tilde{\mathcal{K}}}$ by definition, we see that 
	\begin{align*}
		\E[ \VhatKwithin ] &= \underbracket{ \frac{1}{K^2} \sum_{k,k'} v_{kk'} Y_k Y_{k'} }_{= \Var(\hat\tau)}+ \frac{1}{K^2} \sum_{k,k'} s^{\tilde{\mathcal{K}}}_{kk'} Y_k Y_{k'} -\bigg(1 + \frac{1}{\tilde{C}-1} \bigg) V_{\text{across}} \\
		&= V_{\text{within}} - \frac{1}{\tilde{C}-1}V_{\text{across}} + \frac{1}{K^2} \sum_{k,k'} s^{\tilde{\mathcal{K}}}_{kk'} Y_k Y_{k'}.
	\end{align*} 
	It thus remains to show that $\frac{1}{K^2} \sum_{k,k'} s_{kk'}^{\tilde{\mathcal{K}}} Y_k Y_{k'} = \frac{\tilde{C}}{K^2} \Var_{\tilde{c}}(\tau_{\tilde{c}}).$ 
	Recall that $s_{kk'}^{\tilde{\mathcal{K}}}=1$ if $k,k' \in \tilde{\mathcal{K}}_{\tilde{c}}$ for some $c$ and $s_{kk'}^{\tilde{\mathcal{K}}} = -1/(\tilde{C}-1)$ otherwise. We thus have that
	\begin{align*}
		\sum_{k,k'} s_{kk'}^{\tilde{\mathcal{K}}} Y_k Y_{k'} &= \sum_{\tilde{c}} \sum_{ \substack{k,k' \\ k,k' \in \tilde{\mathcal{K}}_{\tilde{c}}} } Y_k Y_{k'} - \frac{1}{\tilde{C}-1} \sum_{\tilde{c},\tilde{c}' : \tilde{c} \neq \tilde{c}'} \sum_{ \substack{k,k' \\ k \in \tilde{\mathcal{K}}_{\tilde{c}}, k' \in \tilde{\mathcal{K}}_{\tilde{c}'} }} Y_k Y_{k'}  \\
		&= \frac{\tilde{C}}{\tilde{C}-1}  \sum_{\tilde{c}}  \overbracket{ \sum_{ \substack{k,k' \\ k,k' \in \tilde{\mathcal{K}}_{\tilde{c}}} } Y_k Y_{k'} }^{ \textcolor{blue}{=\tau_{\tilde{c}}^2}} - \frac{1}{\tilde{C}-1}  \overbracket{\sum_{ \substack{k,k'}} Y_k Y_{k'} }^{\textcolor{blue}{= (\sum_{\tilde{c}} \tau_{\tilde{c}})^2 = \tilde{C}^2 \bar{\tau}^2  }} \\
		&= \frac{\tilde{C}}{\tilde{C}-1} \sum_{\tilde{c}} \tau_{\tilde{c}}^2  - \frac{\tilde{C}^2}{\tilde{C}-1} \bar{\tau}^2 \\
		& = \tilde{C} \Var_{\tilde{c}}(\tau_{\tilde{c}}),
	\end{align*} 
	which completes the proof.
\end{proof}

\section{\texorpdfstring{Details on \citet{pollmann_causal_2026}}{Details on Pollmann (2026)}}
\label{sec:pollmann-details}

We now provide additional details on the setting in \citet{pollmann_causal_2026} described in \Cref{subsec:sims-pollmann}. 

\paragraph{Pollmann's estimator.} Pollmann considers the H\'ajek-style estimator $\hat\tau(d)$ given in \eqref{eqn:tauhat-d-pollmann}. Under appropriate regularity conditions, he shows that $\hat\tau(d)$ is consistent for the estimand $\tau_{\mathrm{marginal}}(d)$ given in \eqref{eqn:tau-marginal-def}.

\paragraph{First-order approximation.} Pollmann likewise shows that under appropriate regularity conditions, $\hat\tau$ is asymptotically equivalent to an estimator $\tilde\tau$, where
\[\tilde\tau(d) =
\tau_{\mathrm{marginal}}(d)
+
\frac{1}{|\mathcal{S}|}
\sum_{s\in\mathcal{S}}
\sum_{i}
\frac{w_i(s,d)}{\bar n(d)}
\left[
T_s\left\{Y_i-\mu_t(d)\right\}
-
(1-T_s)\frac{\pi_s}{1-\pi_s}
\left\{Y_i-\mu_c(d)\right\}
\right],
\]
where
\[
\bar n(d)
=
\frac{1}{|\mathcal{S}|}
\sum_{s\in\mathcal{S}}
\pi_s
\sum_{i\in\mathcal{I}}
w_i(s,d),
\] 
is the expectation of the weights, and $\mu_t(d),\mu_c(d)$ are the values obtained if we replace $Y_i(\mathbf{s}_i) -Y_i( \mathbf{s}_i \setminus \{s\} )$ in the definition of $\tau_{\mathrm{marginal}}(d)$ with $Y_i(\mathbf{s}_i)$ and $Y_i( \mathbf{s}_i \setminus \{s\} )$, respectively. The estimator $\tilde\tau$ can be viewed as a Horvitz-Thompson estimator over pairs indexed by $k=(i,s,\mathbf{s}_i)$. In particular, we have
\[\tilde\tau(d) - \tau_{\mathrm{marginal}}(d) = \frac{1}{K} \sum_k \frac{W_k}{ \pi_k } \psi_k,  \]
where $W_k = 1\{\mathbf{S}_i = \mathbf{s}_i\}$, $\pi_k = P(W_k =1)$, and for $a_{s,\mathbf{s}_i} = 1\{ s \in \mathbf{s}_i  \}$,
\begin{equation} \psi_k = \frac{ K \pi_k }{ |\mathcal{S}| } \frac{w_i(s,d)}{\bar n(d)}
\bigg[
a_{s, \mathbf{s}_i} \left\{Y_i( \mathbf{s}_i )-\mu_t(d)\right\}
-
(1-a_{s, \mathbf{s}_i})\frac{\pi_s}{1-\pi_s}
\left\{Y_i(\mathbf{s}_i)-\mu_c(d)\right\} \bigg],  \label{eqn:defn-psi-k}
\end{equation}
and the summation runs over the $K$ index values for which $\pi_k >0$ and $w_i(s,d)>0$. 

\paragraph{Pollmann's variance bound.} Because $\tilde\tau$ is a Horvitz-Thompson estimator, its variance is given by the expression in \eqref{eqn:var-tauhat} with $Y_k = \psi_k$. Pollmann obtains an upper bound on the terms $-\psi_k \psi_{k'}$ for pairs with $\pi_{kk'}=0$ using a weighted version of Young's inequality, similar to \citet{aronow_estimating_2017}. Specifically, define $\dot{Y}_k$ equal to $Y_i( \mathbf{s}_i )-\mu_t(d)$ if $a_{s,\mathbf{s}_i}=1$ and equal to $Y_i( \mathbf{s}_i )-\mu_c(d)$ if $a_{s,\mathbf{s}_i}=0.$  Let $\sigma_k=(2a_{s,\mathbf{s}_i} - 1) \in \{-1,1\}$. Then we can write $\psi_k = \sigma_k c_k \dot{Y}_k$, where $c_k > 0$ collects the absolute value of the appropriate constants in \eqref{eqn:defn-psi-k}. Recall the weighted version of Young's inequality: for any $\eta \neq 0$, $-2 ab = \eta^2 a^2 + 1/\eta^2 \cdot b^2 - (\eta a + 1/\eta \cdot b)^2 \leq  \eta^2 a^2 + 1/\eta^2 \cdot b^2$. Applying this inequality to $\psi_k = \sigma_k c_k \dot{Y}_k, \psi_{k'} = \sigma_{k'} c_{k'} \dot{Y}_{k'}$ with weight $\eta = \sqrt{c_{k'}/c_k}$, we obtain that
\[ - \psi_k \psi_{k'} = \frac{1}{2} c_k c_{k'} ( \dot{Y}_k^2 + \dot{Y}_{k'}^2 - (\sigma_k \dot{Y}_k + \sigma_{k'} \dot{Y}_{k'})^2 ) \leq \frac{1}{2} c_k c_{k'} ( \dot{Y}_k^2 + \dot{Y}_{k'}^2).\]
Pollmann considers the upper bound for $\Var(\tilde\tau)$ that replaces the $-\psi_k \psi_{k'}$ with $\pi_{kk'}=0$ with the upper bound given in the previous display. The resulting upper bound has conservativeness
{\footnotesize\[\frac{1}{2 K^2} \sum_{ \substack{ k,k' \\ \pi_{kk'} =0 } }  c_k c_{k'} (\sigma_k \dot{Y}_k + \sigma_{k'} \dot{Y}_{k'})^2 =  \frac{1}{2 K^2} \sum_{ \substack{ k,k' \\ \pi_{kk'} =0  \\ \sigma_k = \sigma_{k'}=1 }}  c_k c_{k'} (\dot{Y}_k + \dot{Y}_{k'})^2  +  \frac{1}{2 K^2} \sum_{ \substack{ k,k' \\ \pi_{kk'} =0  \\ \sigma_k = \sigma_{k'}=-1 }}  c_k c_{k'} (\dot{Y}_k + \dot{Y}_{k'})^2 + \frac{1}{2 K^2} \sum_{ \substack{ k,k' \\ \pi_{kk'} =0  \\ \sigma_k \neq \sigma_{k'} }}  c_k c_{k'} (\dot{Y}_k - \dot{Y}_{k'})^2 ,\]} where the last three terms correspond, respectively, to the terms defined as $\tilde{V}_{tt},\tilde{V}_{cc}, \tilde{V}_{ct}$ in Pollmann. Pollmann considers a plug-in analog to the Horvitz-Thompson estimator for the upper bound, which uses $\hat\psi_k$ instead of $\psi_k$, where $\hat\psi_k$ is the analog to $\psi_k$ that replaces $\mu_t(d),\mu_c(d)$ with estimators $\hat\mu_t(d),\hat\mu_c(d)$ obtained from the first and second terms in the definition of $\hat\tau(d)$.

\begin{table}[htbp]
	\centering
	\caption{Simulation results based on \citet{pollmann_causal_2026}: Estimating the variance of $\tilde\tau$ using true scores}
	\label{tab:pollmann-sims-all-bins-true}
	\begin{tabular}{lrrrr}
\toprule
Estimator & Bias & Precision & Coverage & Width \\
\midrule
\multicolumn{5}{l}{\textit{Panel A: Distance bin: $(0,0.025]$ miles}} \\
Pollmann & 5.0802 & 2.286 & 1.000 & 1.000 \\
CC & -0.0045 & 0.664 & 0.954 & 0.392 \\
\addlinespace
\multicolumn{5}{l}{\textit{Panel B: Distance bin: $(0.025,0.050]$ miles}} \\
Pollmann & 4.5928 & 2.151 & 1.000 & 1.000 \\
CC & -0.0087 & 0.829 & 0.947 & 0.402 \\
\addlinespace
\multicolumn{5}{l}{\textit{Panel C: Distance bin: $(0.050,0.075]$ miles}} \\
Pollmann & 8.2045 & 1.163 & 1.000 & 1.000 \\
CC & 0.0024 & 0.361 & 0.952 & 0.326 \\
\addlinespace
\multicolumn{5}{l}{\textit{Panel D: Distance bin: $(0.075,0.100]$ miles}} \\
Pollmann & 7.0222 & 2.372 & 1.000 & 1.000 \\
CC & 0.0024 & 1.124 & 0.949 & 0.330 \\
\bottomrule
\end{tabular}

	\begin{minipage}{0.85\textwidth}
		\vspace{0.5em}
		\footnotesize
		\textit{Notes:} This table shows results analogous to Panel A of \Cref{tab:pollmann-sims-1bin} for all four distance bins. 
	\end{minipage}
\end{table}

\begin{table}[htbp]
	\centering
	\caption{Simulation results based on \citet{pollmann_causal_2026}: Estimating the variance of $\hat\tau$ using plug-in and leave-out scores}
	\label{tab:pollmann-sims-all-bins-feasible}
	\begin{tabular}{lrrrr}
\toprule
Estimator & Bias & Precision & Coverage & Width \\
\midrule
\multicolumn{5}{l}{\textit{Panel A: Distance bin: $(0,0.025]$ miles}} \\
Pollmann (plug-in $\hat\mu$) & 4.8320 & 2.017 & 1.000 & 0.972 \\
Pollmann (leave-out $\hat\mu_{-c}$) & 5.5326 & 2.594 & 1.000 & 1.025 \\
CC (plug-in $\hat\mu$) & -0.1104 & 0.546 & 0.914 & 0.369 \\
CC (leave-out $\hat\mu_{-c}$) & 0.1140 & 0.765 & 0.938 & 0.409 \\
\addlinespace
\multicolumn{5}{l}{\textit{Panel B: Distance bin: $(0.025,0.050]$ miles}} \\
Pollmann (plug-in $\hat\mu$) & 4.6362 & 1.975 & 1.000 & 0.983 \\
Pollmann (leave-out $\hat\mu_{-c}$) & 5.1327 & 2.469 & 1.000 & 1.021 \\
CC (plug-in $\hat\mu$) & -0.0712 & 0.704 & 0.917 & 0.383 \\
CC (leave-out $\hat\mu_{-c}$) & 0.1392 & 0.976 & 0.937 & 0.419 \\
\addlinespace
\multicolumn{5}{l}{\textit{Panel C: Distance bin: $(0.050,0.075]$ miles}} \\
Pollmann (plug-in $\hat\mu$) & 7.9954 & 1.110 & 1.000 & 0.992 \\
Pollmann (leave-out $\hat\mu_{-c}$) & 8.6949 & 1.283 & 1.000 & 1.030 \\
CC (plug-in $\hat\mu$) & -0.0697 & 0.324 & 0.924 & 0.315 \\
CC (leave-out $\hat\mu_{-c}$) & 0.1228 & 0.451 & 0.943 & 0.345 \\
\addlinespace
\multicolumn{5}{l}{\textit{Panel D: Distance bin: $(0.075,0.100]$ miles}} \\
Pollmann (plug-in $\hat\mu$) & 7.4809 & 2.127 & 1.000 & 0.988 \\
Pollmann (leave-out $\hat\mu_{-c}$) & 8.1547 & 2.822 & 1.000 & 1.023 \\
CC (plug-in $\hat\mu$) & -0.0625 & 0.845 & 0.918 & 0.312 \\
CC (leave-out $\hat\mu_{-c}$) & 0.2241 & 1.383 & 0.939 & 0.348 \\
\bottomrule
\end{tabular}

	\begin{minipage}{0.85\textwidth}
		\vspace{0.5em}
		\footnotesize
			\textit{Notes:} This table shows results analogous to Panel B of \Cref{tab:pollmann-sims-1bin} for all four distance bins. 
	\end{minipage}
\end{table}

\section{Comparison to spatial HAC}\label{sec:hac-comparison-details}

This section provides a comparison of the CC estimator and spatial HAC estimators a la \citet{conley_gmm_1999}. The original formulation of spatial HAC is model-based, based on assumptions about the spatial decay of the dependence in scores across observations. A challenge to applying spatial HAC in practice is determining an appropriate choice of bandwidth and kernel, which govern the spatial decay. Recent work by \citet{wang_design-based_2025} has made progress by providing conditions under which particular choices of bandwidth and kernel can be justified from the design-based perspective. Specifically, they consider a setting with a spatial treatment similar to \Cref{example:spatial}, and assume that the site-level treatment assignments $T_{s}$ are independent across $s$, which implies that outcomes for locations are independent if they are sufficiently far apart. They then provide conditions under which spatial HAC with a particular choice of bandwidth (large enough to ensure independence for units outside of the bandwidth) and the rectangle kernel yields valid inference for a H\'ajek-style estimator from the design-based perspective. One of the key assumptions is a ``homophily'' assumption that restricts treatment effects to be locally positively correlated. 

In this section, we derive the expectation of the HAC variance estimator in settings where outcomes are independent for units that are sufficiently far apart, as occurs under Bernoulli assignment in \citet{wang_design-based_2025}. We find that the HAC estimator is valid if the kernel is positive-semidefinite (PSD), but otherwise requires restrictions on the potential outcomes to obtain validity, such as the homophily assumption in \citet{wang_design-based_2025}. We further show that when the kernel is PSD, the HAC estimator takes a similar form to the CC estimator: it reduces to a clustered variance estimator using clusters that are a coarsening of the components used by the CC estimator, with conservativeness that depends on treatment effect heterogeneity across these coarsened components. HAC thus takes a similar form to CC in the special case in which it is valid without restrictions on treatment effect heterogeneity. We further show that HAC-PD, which projects the HAC kernel into the positive semidefinite cone, is valid without restrictions on treatment effect heterogeneity in this setting. However, in contrast to the CC estimator, we show that HAC-PD is potentially conservative even under the sharp null. 

Overall, it appears to us that the CC estimator and HAC estimators are complementary. The CC estimator accommodates complex assignment schemes that do not satisfy the local dependence condition, and does not require any restrictions on potential outcomes for its validity. On the other hand, it does require that there be multiple connected components to the joint unobservability graph $(C>1)$, which is not required by HAC. The CC estimator also does not exploit homophily in settings where this is plausible. The two approaches thus have non-nested potential use-cases.

\subsection{Set-up}

Suppose we have units $i=1,\ldots,N$ and consider an estimator of the form
\[ \hat\tau = \tau_0 + \frac{1}{N} \sum_i \psi_i(Z) ,\]
where $\tau_0$ is a fixed constant, $\psi_i(\cdot)$ is a non-stochastic unit-level score function, and $Z$ is a random variable governing treatment assignments.  For example, in \citet{pollmann_causal_2026}'s spatial setting, he shows that a H\'ajek-style estimator is first-order equivalent to $\hat\tau$ above with $Z= (T_1,\ldots,T_S)$ representing which sites are assigned to treatment and 
\[
\psi_i(Z)
=
\frac{N}{|\mathcal S|}
\sum_{s\in\mathcal S}
\frac{w_i(s,d)}{\bar n(d)}
\left[
T_s\{Y_i(\mathbf S_i)-\mu_t(d)\}
-
(1-T_s)\frac{\pi_s}{1-\pi_s}
\{Y_i(\mathbf S_i)-\mu_c(d)\}
\right],
\]
with all terms defined in \Cref{sec:pollmann-details}. For simplicity, we will focus on variance estimation in which the scores $\psi_i(\cdot)$ are known, and consider whether variance estimators that use the true scores $\psi(\cdot)$ are upwards-biased in finite-sample. In practice, one will often use a plug-in estimator based on estimated scores, which can often be justified asymptotically, as sketched in \Cref{sec:non-ht-estimators}. To focus on the main ideas, we abstract from the complication of using plug-in scores and study the validity of the HAC estimator using the oracle scores. 

We will assume that units sufficiently far apart are known to have independent scores.
\begin{asm}[Local dependence] \label{asm:local-dependence}
	There is a known distance metric $d(i,j)$ and threshold $\bar{d}$ such that $d(i,j) > \bar{d}$ implies that $\psi_i(Z) \indep \psi_j(Z)$.
\end{asm}
\noindent \citet{wang_design-based_2025} observed that if sites are assigned to treatment independently, then \Cref{asm:local-dependence} will hold for $\bar{d}$ sufficiently large. In Pollmann's setting (\Cref{example:spatial}), if we assume that the site-level assignments $T_s$ are independent across $s$, and that $Y_i(\cdot)$ depends only on the treatment assignments for sites within distance $d_0$ of location $i$, then \Cref{asm:local-dependence} holds with $\bar{d} \geq 2 d_0$, since locations separated by distance $\bar{d}$ cannot be exposed to a common site. We note that independent assignment of $T_s$ is important for this argument: local dependence would fail in Pollmann's setting, for example, if the sites $s$ were assigned to treatment via a completely-randomized experiment rather than a Bernoulli experiment, since then all the site-level treatment indicators would be correlated.\footnote{Other assignment mechanisms may also induce complicated dependence between the site indicators $T_s$. For example, \citet{dell_development_2020} consider distributions of sites for factories that preserve various features, such as the distance between sites and the amount of sugar available to each factory.} 

We will then consider the spatial HAC variance estimator with bandwidth $\bar{d}$ and the rectangular kernel. Specifically, let $\KHAC$ be the $N \times N$ matrix with $K_{\mathrm{HAC}, ij}=1$ if $d(i,j) \leq \bar{d}$ and zero otherwise. We then define 
\[\widehat{V}_{\mathrm{HAC}} = \frac{1}{N^2} \psi' \KHAC \psi, \qquad \text{where} \qquad \psi = (\psi_1(Z),\ldots,\psi_N(Z))'. \]

So far we have defined $\VhatHAC$ in the usual way using unit-level scores. To make the connection to the CC estimator more explicit, we can cast $\VhatHAC$ as a variance estimator for a Horvitz-Thompson estimator over potential outcomes. Specifically, suppose that $\psi_i(Z)$ can take on a finite number, $Q_i$, of possible values $\psi_{i1},\ldots,\psi_{iQ_i}$ with positive probability. In Pollmann's setting, this holds with $Q_i$ equal to the number of possible values for $\mathbf{S}_i$, and $\psi_{iq}$ corresponding to the realization of $\psi_i(Z)$ for a given value of $\mathbf{S}_i$. In this case, we can write $\hat\tau - \tau_0$ as a Horvitz-Thompson estimator: letting $k=(i,q)$ index the $K= \sum_i Q_i$ possible score values, and letting $W_k = 1\{\mathbf{S}_i = \mathbf{s}_{iq}\}$ and $\pi_k = P(W_k=1)$, we have that
\[\hat\tau - \tau_0 = \frac{1}{K} \sum_{k} \frac{W_k}{\pi_k} Y_k, \qquad \text{ where } \qquad Y_k = \frac{K}{N} \pi_k \psi_k. \]
We can thus view $\VhatHAC$ as a variance estimator for the Horvitz-Thompson estimator given in the previous display and compare to the CC variance estimator applied to the same estimator.

\subsection{Expectation of HAC}

We now derive the expectation of $\VhatHAC$. Let $m=E[\psi]$ and $\Gamma=\Var(\psi)$ be the mean and variance of the score vector $\psi$. Likewise, let $H_U$ be the $K \times N$ matrix such that $H_{U,ki}=1$ if $k$ corresponds to a score for unit $i$ and zero otherwise. 

\begin{prop}
	\label{prop:hac-score-slack}
	Under \Cref{asm:local-dependence}, 
	\[
	\E[ \VhatHAC ]-\Var(\hat\tau)
	= \frac{1}{N^2}m'  \KHAC m = \frac{1}{K^2} Y' H_U \KHAC H_U' Y.
	\]
\end{prop}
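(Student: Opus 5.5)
The plan is to compute $\E[\VhatHAC]$ directly as the expectation of a quadratic form in the random score vector $\psi$. Then I will use local dependence to match the kernel-weighted covariance terms to $\Var(\hat\tau)$.

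\textbf{Step 1: expand the HAC expectation.} Since $\KHAC$ is fixed,
\[
\E[\VhatHAC] = \frac{1}{N^2}\E[\psi'\KHAC\psi] = \frac{1}{N^2}\bigl(\tr(\KHAC\Gamma) + m'\KHAC m\bigr),
\]
where $\E[\psi\psi'] = \Gamma + mm'$.

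\textbf{Step 2: identify $\Var(\hat\tau)$.} Because $\tau_0$ is fixed, $\Var(\hat\tau) = N^{-2}\mathbf{1}_N'\Gamma\mathbf{1}_N = N^{-2}\sum_{i,j}\Gamma_{ij}$. Under \Cref{asm:local-dependence}, if $d(i,j)>\bar d$ then $\psi_i(Z)\indep\psi_j(Z)$, so $\Gamma_{ij}=0$. Every nonzero $\Gamma_{ij}$ therefore has $d(i,j)\le\bar d$, where $K_{\mathrm{HAC},ij}=1$. The diagonal terms also have kernel weight 1, provided $d(i,i)=0\le\bar d$. Hence
\[
\tr(\KHAC\Gamma) = \sum_{i,j}K_{\mathrm{HAC},ij}\Gamma_{ij} = \sum_{i,j}\Gamma_{ij},
\]
using symmetry of $\KHAC$. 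Combining with Step 1 gives the first equality, $\E[\VhatHAC]-\Var(\hat\tau)=N^{-2}m'\KHAC m$.

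\textbf{Step 3: rewrite in terms of the potential outcomes.} The unit-level mean is $m_i = \E[\psi_i(Z)] = \sum_q \pi_{(i,q)}\psi_{iq}$, since the events $W_{(i,q)}$ partition the support of $\psi_i(Z)$. With $Y_k = \frac{K}{N}\pi_k\psi_k$, this gives $m_i = \frac{N}{K}\sum_{k:\,H_{U,ki}=1}Y_k$, that is, $m = \frac{N}{K}H_U'Y$. Substituting, $N^{-2}m'\KHAC m = K^{-2}Y'H_U\KHAC H_U'Y$.

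The main thing to get right is Step 2. It requires that the kernel weight is exactly 1 on every pair that can have nonzero covariance, including the diagonal. It also requires that independence implies zero covariance, which holds given finite supports. Step 3 is only bookkeeping, but I would check that the $Q_i$ values listed are exactly the positive-probability ones, so that the partition identity for $m_i$ holds.
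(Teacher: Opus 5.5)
Your proposal is correct and follows essentially the same route as the paper: expand $\E[\psi'\KHAC\psi]$ as $\tr(\KHAC\Gamma)+m'\KHAC m$, use local dependence to show that $\tr(\KHAC\Gamma)=\mathbf{1}'\Gamma\mathbf{1}=N^2\Var(\hat\tau)$, and finish with the bookkeeping identity $m=\frac{N}{K}H_U'Y$.
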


\begin{proof}
	Let $u = \psi - m$. Since $\hat\tau-\tau_0 = \frac{1}{N} \mathbf{1}' \psi$, we see that $\Var(\hat\tau) = \frac{1}{N^2} \mathbf{1}' \Gamma \mathbf{1} = \frac{1}{N^2} \tr(\Gamma \mathbf{1}\mathbf{1}')$. Writing $\psi = m +u$, we have that $$ N^2 E[ \VhatHAC ] = E[u'\KHAC u] + 2 E[u]' \KHAC m + m' \KHAC m.$$
	By construction $E[u]=0$, so the middle term drops out. Additionally, we can write $u'\KHAC u = \tr(uu'\KHAC)$, and hence $E[u'\KHAC u] = \tr(E[uu'] \KHAC) = \tr(\Gamma \KHAC)$. Hence, 
	$E[ \VhatHAC ] = \frac{1}{N^2} \tr(\Gamma \KHAC)  + \frac{1}{N^2} m' \KHAC m$. We thus see that
	\begin{equation} E[ \VhatHAC ] - \Var(\hat\tau) = \frac{1}{N^2} \left( \tr(\Gamma \KHAC) - \tr(\Gamma \mathbf{1}\mathbf{1}') \right) + \frac{1}{N^2} m' \KHAC m. \label{eqn:evhac}
	\end{equation}
	Using the definition of the trace, the term in parentheses can be written as $\sum_{i,j} (\KHAC[,ij]-1)\Gamma_{ji}$. However, if $d(i,j) \leq \bar{d}$, then $\KHAC[,ij]=1$, while if $d(i,j)>\bar{d}$, then $\Gamma_{ji} =0$ by \Cref{asm:local-dependence}, and hence $\sum_{i,j} (\KHAC[,ij]-1)\Gamma_{ji} = 0$. The first equality then follows from the previous display. The second equality follows from noting that $ \tfrac{1}{N} m_i = \frac{1}{N} \sum_q \pi_{iq} \psi_{iq} = \frac{1}{K} \sum_{q} \tfrac{K}{N} \pi_{iq} \psi_{iq} = \frac{1}{K}(H_U'Y)_i$, and thus $\frac{1}{N} m = \frac{1}{K} H_U' Y$.  
\end{proof}

It is immediate from \Cref{prop:hac-score-slack} that $\VhatHAC$ is upwards-biased for $\Var(\hat\tau)$ if the kernel matrix $\KHAC$ is PSD. Note that $\KHAC$ is simply a function of the orientation of the locations in space and the threshold $\bar{d}$. As we will show below, the matrix $\KHAC$ is not guaranteed to be PSD for all orientations of locations in space. If it is not PSD, then \Cref{prop:hac-score-slack} implies that there are some choices of $Y$ for which $\VhatHAC$ is downward-biased. To restore the validity of $\VhatHAC$ in such cases, one therefore needs to impose some restrictions on $Y$ that guarantee that $ Y' H_U \KHAC H_U' Y \geq 0$. \citet{wang_design-based_2025} propose a homophily assumption that is sufficient for this purpose for the H\'ajek estimator they consider.

\subsection{When is HAC upwards-biased without restricting potential outcomes}
We showed above that $\VhatHAC$ is (weakly) upwards-biased without restrictions on the potential outcomes if and only if $\KHAC$ is PSD. We now characterize when $\KHAC$ is PSD. Let $\AHAC$ be the adjacency matrix corresponding to the graph among units where distinct units $i$ and $j$ are adjacent if and only if $d(i,j) \leq \bar{d}$. Observe that $\KHAC = \AHAC + I$. Our next result shows that $\KHAC$ is PSD if and only if the graph corresponding to $\AHAC$ is a union of cliques. 

\begin{lem}
	\label{lem:kh-psd-cliques}
	The matrix $\KHAC$ is positive semidefinite if and only if every connected
	component of the graph with adjacency matrix $\AHAC$ is a clique: that is, if and only if there exists a partition of the units $i=1,\ldots,N$, say $\tilde{\mathcal{J}}_{1},\ldots,\tilde{\mathcal{J}}_{\tilde{C}}$, such that $d(i,j) \leq \bar{d}$ if and only if $i,j \in \tilde{\mathcal{J}}_{\tilde{c}}$ for some $\tilde{c}$. 
\end{lem}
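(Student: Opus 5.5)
The plan is to prove the two directions separately and to reduce the ``only if'' direction to a single $3\times 3$ principal submatrix. Throughout, $\KHAC=\AHAC+I$ is a symmetric $0/1$ matrix with ones on the diagonal. Its $(i,j)$ entry equals $1$ exactly when $i=j$ or $i$ and $j$ are adjacent in the graph $\AHAC$.

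\emph{Sufficiency.} Suppose every connected component of the graph is a clique, and let $\tilde{\mathcal{J}}_1,\ldots,\tilde{\mathcal{J}}_{\tilde C}$ be these components. Then $K_{\mathrm{HAC},ij}=1\{i,j\in\tilde{\mathcal{J}}_{\tilde c}\text{ for some }\tilde c\}$. Hence $\KHAC=JJ'$, where $J$ is the $N\times\tilde C$ membership matrix with $J_{i\tilde c}=1\{i\in\tilde{\mathcal{J}}_{\tilde c}\}$. This gives $x'\KHAC x=\sum_{\tilde c}\big(\sum_{i\in\tilde{\mathcal{J}}_{\tilde c}}x_i\big)^2\ge 0$, so $\KHAC\succeq0$. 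The same computation shows that the partition in the statement is exactly the set of components. The equivalence between ``components are cliques'' and ``such a partition exists'' is immediate from the definitions: if a partition with the stated property exists, each block is a clique and no edges cross blocks, so the blocks are the components.

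\emph{Necessity.} I argue by contrapositive. Suppose some connected component is not a clique. Then it contains two vertices $i,k$ in the same component that are not adjacent. Take a shortest path between them, which has length at least $2$. Its first three vertices $i,j,l$ form an induced path: $i\sim j$ and $j\sim l$, but $i\not\sim l$, because otherwise a shorter path would exist.

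The principal submatrix of $\KHAC$ indexed by $(i,j,l)$ is
\[
P=\begin{pmatrix}1&1&0\\1&1&1\\0&1&1\end{pmatrix}.
\]
Test it with $x=(1,-1,1)'$. This gives $x'Px=3-2\cdot 1-2\cdot 1+2\cdot 0=-1<0$. Padding $x$ with zeros to a vector in $\reals^N$ yields $x'\KHAC x<0$, so $\KHAC$ is not PSD. This completes the contrapositive.

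The only step needing care is the extraction of an induced path on three vertices from a non-clique component, which the shortest-path argument handles. After that, everything reduces to evaluating one explicit quadratic form, so I do not expect any step to be a real obstacle.
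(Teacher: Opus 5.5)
Your proof is correct and matches the paper's argument: sufficiency via the block-diagonal all-ones structure (your $JJ'$ factorization), and necessity via a shortest path giving an induced three-vertex path whose principal submatrix has quadratic form $-1$ at $(1,-1,1)'$. Your explicit check that the partition condition is equivalent to the components being cliques is a small addition the paper leaves implicit.
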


\begin{proof}
	If every connected component of the graph corresponding to $\AHAC$ is a clique, then, after ordering
	units by component, $\KHAC$ is block diagonal with each block equal to an
	all-ones matrix. Hence $\KHAC \succeq0$.
	
	Conversely, suppose $\KHAC\succeq0$. If some connected component is
	not a clique, choose two nonadjacent vertices in that component and take a
	shortest path between them. The first three vertices on this path, say
	$i,j,k$, must satisfy $A_{ij} = 1$ and $A_{jk}=1$, but $A_{ik} = 0$ (since otherwise one could obtain a shorter path by removing $j$). The principal
	submatrix of $\KHAC$ on these three vertices is
	\[
	\begin{pmatrix}
		1 & 1 & 0 \\
		1 & 1 & 1 \\
		0 & 1 & 1
	\end{pmatrix},
	\]
	which has quadratic form equal to $-1$ at $(1,-1,1)'$. This contradicts
	positive semidefiniteness. Therefore every connected component must be a
	clique.
\end{proof}

Intuitively, \Cref{lem:kh-psd-cliques} shows that $\KHAC$ is positive semi-definite if and only if we can group the units into components (denoted $\tilde{\mathcal{J}}_{\tilde{c}}$) such that all units within a component are within $\bar{d}$ distance from each other, and all cross-component pairs are at distance greater than $\bar{d}$. It is instructive to relate these components to the connected components of the joint unobservability graph. Note that under \Cref{asm:local-dependence}, $\psi_i(Z) \indep \psi_j(Z)$ if $d(i,j) > \bar{d}$. It follows that if $d(i,j) > \bar{d}$, then all scores $\psi_{iq}$ and $\psi_{jq'}$ are jointly observable. Hence, $i$ and $j$ can have a pair of jointly unobservable scores only if $d(i,j) \leq \bar{d}$. It follows that the scores for units $i$ and $j$ can be in the same connected component of the joint unobservability graph only if $i$ and $j$ are in the same connected component of the graph corresponding to $\AHAC$. Put otherwise, if we let $\tilde{\mathcal{K}}_{\tilde{c}}$ collect all the indices $k=(i,q)$ with $i \in \tilde{\mathcal{J}}_{\tilde{c}}$, then the partition $\tilde{\mathcal{K}}_1,\ldots,\tilde{\mathcal{K}}_{\tilde{C}}$ is a coarsening of the connected components $\mathcal{K}_1,\ldots,\mathcal{K}_C$ of the joint unobservability graph. Applied to the setting in Pollmann: two locations can have jointly unobservable outcomes only if the distance between them is no more than $\bar{d} = 2d_0$. But it need not be the case that all locations within distance $2 d_0$ of each other have a common site, and therefore such locations need not necessarily have jointly unobservable outcomes. The partitions $\tilde{\mathcal{K}}_1,\ldots,\tilde{\mathcal{K}}_{\tilde{C}}$  collect groups of locations within $\bar{d}$ distance of each other, which may combine multiple connected components of the joint unobservability graph. The two partitions coincide if every pair of locations within distance $\bar{d} = 2 d_0$ of each other share a common site. 

The characterization in \Cref{lem:kh-psd-cliques} implies that $\VhatHAC$ takes a simple form in settings where it is weakly upwards-biased without restrictions on the potential outcomes: it corresponds to a clustered variance estimator using the $\tilde{\mathcal{K}}_{\tilde{c}}$ as clusters. Its conservativeness is then related to the variance of treatment effects across the $\tilde{\mathcal{K}}_{\tilde{c}}$. 

\begin{cor} \label{cor:conservativeness-hac-when-valid}
	Suppose that \Cref{asm:local-dependence} holds and $\KHAC \succeq 0$. Then 
	\[ \VhatHAC = \frac{1}{K^2} \sum_{\tilde c}  \bigg( \sum_{ k \in \tilde{\mathcal{K}}_{\tilde{c}} } W_k \underbracket{\tfrac{K}{N} \psi_k}_{= \frac{Y_k}{\pi_k}} \bigg)^2   \]
	and, defining $\tau_{\tilde{c}} =  \sum_{ k \in \tilde{\mathcal{K}}_{\tilde{c}} } Y_k$ and $\bar{\tau} = \frac{1}{\tilde{C}} \sum_{\tilde{c}} \tau_{\tilde{c}}$,
	\[\E[ \VhatHAC ]  - \Var(\hat\tau) = \frac{\tilde{C} - 1}{K^2} \Var_{\tilde c}( \tau_{\tilde{c}} ) +  \frac{\tilde{C}}{K^2}  \cdot \bar\tau^2 \geq 0. \]
\end{cor}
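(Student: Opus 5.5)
By \Cref{lem:kh-psd-cliques}, once $\KHAC \succeq 0$ we may order the units so that $\KHAC$ is block diagonal, with an all-ones block on each clique $\tilde{\mathcal{J}}_{\tilde c}$. Equivalently, $\KHAC = J J'$, where $J$ is the $N\times\tilde C$ membership matrix with $J_{i\tilde c}=1\{i\in\tilde{\mathcal{J}}_{\tilde c}\}$. The plan is to use this factorization for the closed form, and then plug it into \Cref{prop:hac-score-slack} for the slack.

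\textbf{Closed form.} From $\KHAC = JJ'$ we get $\VhatHAC = \frac{1}{N^2}\|J'\psi\|^2 = \frac{1}{N^2}\sum_{\tilde c}\big(\sum_{i\in\tilde{\mathcal J}_{\tilde c}}\psi_i(Z)\big)^2$. Each unit $i$ realizes exactly one score value, so $\psi_i(Z)=\sum_q W_{(i,q)}\psi_{iq}$. Hence $\frac{1}{N}\sum_{i\in\tilde{\mathcal J}_{\tilde c}}\psi_i(Z) = \frac{1}{K}\sum_{k\in\tilde{\mathcal K}_{\tilde c}} W_k \tfrac{K}{N}\psi_k$. Since $Y_k=\frac{K}{N}\pi_k\psi_k$, we have $\tfrac{K}{N}\psi_k = Y_k/\pi_k$, which gives the displayed formula directly.

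\textbf{Slack.} By \Cref{prop:hac-score-slack}, $\E[\VhatHAC]-\Var(\hat\tau)=\frac{1}{K^2}Y'H_U J J' H_U' Y$. The product $H_U J$ is the $K\times\tilde C$ matrix indicating membership of $k$ in $\tilde{\mathcal K}_{\tilde c}$, so $J'H_U'Y=(\tau_{\tilde c})_{\tilde c}$. The slack is therefore $\frac{1}{K^2}\sum_{\tilde c}\tau_{\tilde c}^2$. Applying the usual decomposition $\sum_{\tilde c}\tau_{\tilde c}^2=\sum_{\tilde c}(\tau_{\tilde c}-\bar\tau)^2+\tilde C\bar\tau^2=(\tilde C-1)\Var_{\tilde c}(\tau_{\tilde c})+\tilde C\bar\tau^2$ yields the stated expression. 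Nonnegativity is immediate, since both terms are nonnegative.

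\textbf{Main obstacle.} The only delicate step is the identification $H_UJ=$ membership of $\tilde{\mathcal K}_{\tilde c}$. This needs $\tilde{\mathcal K}_{\tilde c}$ to be defined as all indices $(i,q)$ with $i\in\tilde{\mathcal J}_{\tilde c}$, as in the text, and the cliques to partition all units, which \Cref{lem:kh-psd-cliques} guarantees. A second point to check is that each unit has exactly one $W_{(i,q)}=1$, i.e.\ the $\psi_{iq}$ enumerate all positive-probability values. If distinct configurations shared a score value, I would simply index by configurations instead; the sum is unaffected.
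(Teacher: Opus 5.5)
Your proposal is correct and follows the paper's proof. You use \Cref{lem:kh-psd-cliques} to make $\KHAC$ block-diagonal with all-ones blocks (your $\KHAC = JJ'$ is just a compact way of writing this), read the closed form off the definition of $\VhatHAC$, plug into \Cref{prop:hac-score-slack} to get $\frac{1}{K^2}\sum_{\tilde c}\tau_{\tilde c}^2$, and finish with the standard variance decomposition. Your explicit checks on $H_U J$ and on each unit realizing exactly one score configuration fill in steps the paper leaves implicit.
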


\begin{proof}
	By \Cref{lem:kh-psd-cliques}, if $\KHAC \succeq 0$, then $\AHAC$ is the adjacency matrix corresponding to a graph where all connected components are cliques. Ordering the units by their connected components, we then see that $\AHAC + I$ is block-diagonal with each block equal to the all-ones matrix. The form for $\VhatHAC$ then follows immediately from the definition of $\VhatHAC = 1/N^2 \psi' \KHAC \psi$. For the second result, \Cref{prop:hac-score-slack} gives that $\E[ \VhatHAC ]-\Var(\hat\tau) = \frac{1}{K^2} Y' H_U \KHAC H_U' Y$. Using the all-ones block-diagonal structure of $\KHAC$, we see that
	\[\E[ \VhatHAC ]-\Var(\hat\tau) = \frac{1}{K^2} \sum_{\tilde{c}} \tau_{\tilde{c}}^2 = \frac{\tilde{C} - 1}{K^2} \Var_{\tilde c}( \tau_{\tilde{c}} ) +  \frac{\tilde{C}}{K^2} \cdot \bar\tau^2, \]
	as desired.
	
\end{proof}

\Cref{cor:conservativeness-hac-when-valid} shows that when $\VhatHAC$ is valid without restrictions on the potential outcomes, its conservativeness depends on two terms: the first is proportional to the heterogeneity across the coarsened connected components, $\Var_{\tilde c}( \tau_{\tilde{c}})$, while the second is proportional to $\bar\tau^2$. In some cases the structure of the scores will imply that $\sum_k Y_k = 0$, as with the example from Pollmann given above, in which case $\bar\tau = 0$. In that case, the conservativeness of $\VhatHAC$ is similar to that for the CC estimator, except that it uses heterogeneity across the coarsened partition $\tilde{\mathcal{K}}_c$ rather than $\mathcal{K}_c$.\footnote{If the two partitions happen to coincide, then we have that $\E[ \VhatHAC ]  - \Var(\hat\tau) = \frac{C - 1}{K^2} \Var_{c}( \tau_c )$, and so the slack of the two estimators differ only by the degrees-of-freedom correction $\frac{C-1}{C}$.} We note that the CC estimator is guaranteed to be weakly upwards-biased for $\Var(\hat{\tau})$ without restrictions on $Y$, while this is only true for $\VhatHAC$ when the conditions of \Cref{lem:kh-psd-cliques} are satisfied.

\subsection{Expectation of HAC-PD}
Our results above showed that $\VhatHAC$ is weakly upwards-biased without restrictions on the potential outcomes only if $\KHAC \succeq 0$. When $\KHAC \not\succeq 0$, one can alternatively consider a variance estimator based on the positive-semidefinite projection of $\KHAC$. Let $\KHACPD$ be the matrix obtained by setting all negative eigenvalues of $\KHAC$ equal to zero (the Frobenius-projection into the PSD cone), and let $\VhatHACPD$ be the corresponding variance estimator that replaces $\KHAC$ with $\KHACPD$. 

We now show that $\VhatHACPD$ is weakly upwards-biased under \Cref{asm:local-dependence} without restrictions on the potential outcomes, although unlike the CC estimator, it may be conservative even under the sharp null. \citet{wang_design-based_2025} consider the estimator $\VhatHACPD$ and note that it is always weakly larger than $\VhatHAC$, but they do not formally show that it is valid without the homophily assumption that they impose for $\VhatHAC$. The following result thus appears to provide a new design-based justification for $\VhatHACPD$, without restrictions on the potential outcomes. 

\begin{prop}
	\label{prop:hac-pd-score-slack}
	Under \Cref{asm:local-dependence}, 
	\begin{align*}
		\E[ \VhatHACPD ]-\Var(\hat\tau)
		& = \frac{1}{N^2}m'  \KHACPD m + \frac{1}{N^2} \tr(\Gamma (\KHACPD - \KHAC)  )\\
		& = \frac{1}{K^2} Y' H_U \KHACPD H_U' Y + \frac{1}{N^2} \tr(\Gamma (\KHACPD - \KHAC)  ) \geq 0.
	\end{align*}
	
\end{prop}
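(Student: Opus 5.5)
The argument follows the proof of \Cref{prop:hac-score-slack} step for step, with $\KHAC$ replaced by $\KHACPD$ in the estimator. Write $\psi = m + u$ with $\E[u]=0$ and $\E[uu']=\Gamma$. Then
\[ N^2\E[\VhatHACPD] = \tr(\Gamma \KHACPD) + m'\KHACPD m, \]
because the cross term $2\E[u]'\KHACPD m$ vanishes. Since $\Var(\hat\tau) = N^{-2}\tr(\Gamma \mathbf{1}\mathbf{1}')$, we get
\[ \E[\VhatHACPD]-\Var(\hat\tau) = \frac{1}{N^2}\big(\tr(\Gamma\KHACPD) - \tr(\Gamma \mathbf{1}\mathbf{1}')\big) + \frac{1}{N^2}m'\KHACPD m. \]
Next add and subtract $\tr(\Gamma\KHAC)$. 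The proof of \Cref{prop:hac-score-slack} showed, using \Cref{asm:local-dependence}, that $\tr(\Gamma\KHAC)=\tr(\Gamma\mathbf{1}\mathbf{1}')$: entries of $\KHAC$ equal one wherever $d(i,j)\le\bar d$, and $\Gamma_{ij}=0$ otherwise. This gives the first displayed equality. The second equality uses the identity $\frac1N m = \frac1K H_U'Y$ from that same proof.

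For nonnegativity, the first term is a PSD quadratic form, since $\KHACPD\succeq 0$ by construction. So it suffices to show $\tr(\Gamma(\KHACPD-\KHAC))\ge 0$. Take the spectral decomposition $\KHAC = \sum_j \lambda_j e_je_j'$. The projection zeroes out the negative eigenvalues, so
\[ \KHACPD - \KHAC = \sum_{j:\lambda_j<0} (-\lambda_j) e_je_j' \succeq 0. \]
Since $\Gamma=\Var(\psi)\succeq 0$, the trace inner product of two PSD matrices is nonnegative. Concretely, $\tr(\Gamma\,\cdot)$ applied to this sum equals $\sum_{j:\lambda_j<0}(-\lambda_j)\,e_j'\Gamma e_j \ge 0$.

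The only point needing care is recognizing that $\KHACPD-\KHAC$ is itself PSD, which is immediate from the eigenvalue-truncation definition. The remaining steps are direct reuse of the earlier computation. One should also note that $\KHACPD$ need not equal one on the entries with $d(i,j)\le\bar d$, which is why the extra trace term survives rather than cancelling as it did for $\KHAC$. This also explains the remark that $\VhatHACPD$ may be conservative even under the sharp null: there $m=0$, but the trace term $\tr(\Gamma(\KHACPD-\KHAC))$ can still be strictly positive.
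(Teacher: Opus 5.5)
Your proposal is correct and follows essentially the same route as the paper. You redo the mean-plus-centered decomposition with $\KHACPD$, add and subtract $\tr(\Gamma\KHAC)$, cancel it against $\tr(\Gamma\mathbf{1}\mathbf{1}')$ using local dependence, and get nonnegativity from $\Gamma$, $\KHACPD$, and $\KHACPD-\KHAC$ all being PSD; your explicit spectral expansion just spells out the step the paper states as holding ``by construction.''
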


\Cref{prop:hac-pd-score-slack} shows that under \Cref{asm:local-dependence}, $\VhatHACPD$ is weakly upwards biased for $\Var(\hat\tau)$ without a restriction on the potential outcomes. Its conservativeness depends on two terms: the first, $\frac{1}{N^2}m'  \KHACPD m$ is analogous to the slack of $\VhatHAC$, except replacing the (possibly indefinite) kernel matrix $\KHAC$ with its PSD projection $\KHACPD$. The second term, $ \tfrac{1}{N^2}\tr(\Gamma (\KHACPD - \KHAC)  )$, was not present in the decomposition for $\VhatHAC$ and intuitively relates to the conservativeness induced by using the PSD projection $\KHACPD$ instead of $\KHAC$. 

If $\KHAC$ is not PSD, then this latter conservativeness term can generically be non-zero even under the sharp null, and thus $\VhatHACPD$ will generally not be sharp-null exact. To see why, let $\lambda_1,\ldots,\lambda_N$ be the eigenvalues of $\KHAC$ and $v_1,\ldots,v_N$ the corresponding eigenvectors. Suppose, without loss of generality, that the first $r^* \geq 1$ eigenvalues are strictly negative. Then we can write $\KHACPD - \KHAC = \sum_{r=1}^{r^*} |\lambda_r| v_r v_r'$. It follows that \[ \tr\bigg( \Var(\psi) (\KHACPD - \KHAC)  \bigg) =  \tr\bigg( \Var(\psi) \sum_{r=1}^{r^*} |\lambda_r| v_r v_r'  \bigg) = \tr\bigg( \sum_{r=1}^{r^*} |\lambda_r| v_r' \Var(\psi) v_r  \bigg) = \sum_{r=1}^{r^*} |\lambda_r| \Var(v_r' \psi). \]
We thus see that $\tr( \Var(\psi) (\KHACPD - \KHAC)  ) > 0$ if $\Var(v_r' \psi) >0$ for any eigenvector $v_r$ corresponding to a negative eigenvalue of $\KHAC$. In the example from Pollmann described above, it is straightforward to construct potential outcomes satisfying the sharp null such that $\Var(v_r' \psi) >0$ for some $r$ corresponding to a negative eigenvalue, so long as the weights $w_i(s,d)$ satisfy the mild regularity conditions that every location receives nonzero weight from some site (for all $i$, $w_i(s,d)>0$ for some $s$) and every site has some far-away location that receives zero weight from that site (for all $s$, $w_i(s,d)=0$ for some $i$).

\paragraph{Proof of \Cref{prop:hac-pd-score-slack}}
\begin{proof}
    By an analogous derivation as for \eqref{eqn:evhac}, we have that
\begin{align*}
   E[ \VhatHACPD ] - Var(\hat\tau) &= \frac{1}{N^2} \left( \mathrm{tr}(\Gamma \KHACPD) - \mathrm{tr}(\Gamma \mathbf{1}\mathbf{1}') \right) + \frac{1}{N^2} m' \KHACPD m.
\end{align*}
Using the linearity of trace, we then see that
\begin{align*}
 E[ \VhatHACPD ] - Var(\hat\tau)  &= \frac{1}{N^2}(\mathrm{tr}(\Gamma (\KHACPD - \KHAC))) +  \frac{1}{N^2} \big( \underbracket{\mathrm{tr}(\Gamma \KHAC) - \mathrm{tr}(\Gamma \mathbf{1}\mathbf{1}')}_{=0, \text{ as shown above}} \big) + \frac{1}{N^2} m' \KHACPD m \\
 &=\frac{1}{N^2}(\mathrm{tr}(\Gamma (\KHACPD - \KHAC))) +  \frac{1}{N^2} m' \KHACPD m.
\end{align*} 
It remains only to show that this expression is weakly positive. Note that $\Gamma$ and $\KHACPD - \KHAC$ are both PSD by construction, and so $\mathrm{tr}(\Gamma (\KHACPD - \KHAC)) \geq 0$. Likewise, $\KHACPD$ is PSD, and so $m' \KHACPD m \geq 0$, which completes the proof.
\end{proof}

\subsection{Comparison in simulations}

We now revisit the simulation designs calibrated to \citet{pollmann_causal_2026} considered in \Cref{sec:sims-and-apps}. For simplicity, we focus on the CC and HAC estimators applied to the oracle scores, as in \Cref{tab:pollmann-sims-all-bins-true}, and focus on the $(0.075,.1]$ bin. Since the simulation design uses Bernoulli assignment of the site-level treatments $T_s$, the local dependence condition in \Cref{asm:local-dependence} holds using $\bar{d} = 2d_0$, and we thus consider the HAC estimator with rectangular kernel and bandwidth $2 d_0$. From \Cref{prop:hac-score-slack}, the bias of the $\VhatHAC$ estimator is proportional to $m' \KHAC m$. In this setting, $\KHAC$ is not PSD, and so whether $\VhatHAC$ is upwards or downwards-biased depends on the potential outcomes (which determine $m$). It is straightforward to verify that under the sharp null, $m=0$, and thus we expect $\VhatHAC$ to be unbiased. We then also consider two forms of treatment effect heterogeneity: first, we consider a simple form of heterogeneity in which each grocery store within 0.1 miles of a restaurant increases their outcome by 10 percent of the baseline potential outcome (additively across stores). This induces heterogeneity in the level treatment effect since different restaurants have different baseline potential outcomes. Second, we consider a specification of treatment effect heterogeneity designed
to violate the homophily assumption of \citet{wang_design-based_2025}.
Specifically, we start with the potential outcomes used in our sharp-null
simulation and then perturb them in the direction of the eigenvector
corresponding to the most negative eigenvalue of the HAC bias matrix. Let
$Q_{PO}$ be the matrix mapping the vector of potential outcomes
$\mathbf{Y}(\cdot)$ to the transformed scores $Y$, so that $Y = Q_{PO}\mathbf{Y}(\cdot)$. The bias of HAC is then proportional to $\mathbf{Y}(\cdot)'Q_{PO}' H_U \KHAC H_U' Q_{PO} \mathbf{Y}(\cdot)$. Let $\mathbf{y}_0$ denote the vector of potential outcomes under the sharp null
and let $v_1$ denote the eigenvector corresponding to the minimal eigenvalue of $Q_{PO}' H_U \KHAC H_U' Q_{PO}$. We consider potential outcomes given by $\mathbf{y}_0 + \epsilon v_1$, where we choose $\epsilon$ such that the variance of treatment effects constitutes 10\% of the variance of the outcome under the sharp null.\footnote{Specifically, we define the variance of treatment effects as the variance of $Y_i(\mathbf{s}_i) - Y_i(\emptyset)$ across all combinations $(i,\mathbf{s}_i)$ with $\mathbf{s}_i \neq \emptyset$.} This can be viewed as an ``adversarial'' form of treatment effect heterogeneity, since it perturbs the baseline potential outcomes in a direction that induces downward bias in $\VhatHAC$, subject to the constraint that the total amount of treatment effect heterogeneity not be too large. The results can be viewed as illustrating how badly HAC can perform under unrestricted treatment effect heterogeneity of bounded magnitude. 

\Cref{tab:pollmann-sims-hac} shows the results. Under the sharp null (Panel A), both the CC estimator and the HAC estimator are close to unbiased and have similar length, as expected. The estimators also have similar performance under the simple form of treatment effect heterogeneity. However, under the adversarial form of heterogeneity shown in Panel C, the HAC estimator is severely downwards-biased, with relative bias of about -50\% and coverage of a nominal 95\% interval of only 63\%, while the CC estimator is close to unbiased, with coverage close to the nominal level (94\%).\footnote{We truncate the HAC SE at zero when the HAC variance estimate is negative.} The HAC-PD estimator $\VhatHACPD$, which uses the PSD projection of the HAC kernel, is upwards-biased in all three simulation designs, leading to confidence intervals with moderately longer length than CC in all specifications.

\begin{table}[htbp]
	\centering
	\caption{Additional Simulation results based on \citet{pollmann_causal_2026}: Comparing CC, HAC, and Pollmann estimators}
	\label{tab:pollmann-sims-hac}
	\begin{tabular}{lrrrr}
\toprule
Estimator & Bias & Precision & Coverage & Width \\
\midrule
\multicolumn{5}{l}{\textit{Panel A: Sharp Null}} \\
Pollmann & 7.0222 & 2.372 & 1.000 & 1.000 \\
CC & 0.0024 & 1.124 & 0.949 & 0.330 \\
HAC & 0.0019 & 1.121 & 0.953 & 0.330 \\
HAC-PD & 0.1062 & 1.177 & 0.967 & 0.349 \\
\addlinespace
\multicolumn{5}{l}{\textit{Panel B: Simple Treatment Effect Het}} \\
Pollmann & 8.4234 & 2.415 & 1.000 & 1.000 \\
CC & 0.0090 & 0.997 & 0.956 & 0.308 \\
HAC & 0.0093 & 0.991 & 0.959 & 0.309 \\
HAC-PD & 0.1244 & 1.040 & 0.973 & 0.328 \\
\addlinespace
\multicolumn{5}{l}{\textit{Panel C: Adversarial Treatment Effect Het}} \\
Pollmann & 7.7843 & 3.171 & 1.000 & 1.000 \\
CC & 0.0582 & 1.319 & 0.942 & 0.310 \\
HAC & -0.5037 & 0.835 & 0.634 & 0.184 \\
HAC-PD & 0.3764 & 1.865 & 0.969 & 0.349 \\
\bottomrule
\end{tabular}

\end{table}

\end{document}